\documentclass[11pt]{article}
\textheight 8.5truein
\topmargin -0.05truein \textwidth 7truein
\oddsidemargin -0.1in %
\evensidemargin -0.1in %
\usepackage{graphicx}
\usepackage{psfrag,cite,amsmath,graphicx,epsfig,latexsym,subfigure,color}
\usepackage[scriptsize,bf]{caption}
\usepackage{color,xcolor,multirow}
\usepackage{amssymb, amsthm,bm}
\usepackage[boxruled]{algorithm}
\usepackage{algpseudocode}
\usepackage{setspace}
\usepackage{array}
\usepackage{booktabs}
\usepackage{hyperref}
\usepackage[normalem]{ulem}
\usepackage{mathtools}
\usepackage{etoolbox}

\newtheorem{theorem}{Theorem}
\newtheorem{claim}{Claim}
\newtheorem{lemma}{Lemma}
\newtheorem{remark}{Remark}
\newtheorem{assumption}{Assumption}
\usepackage{xparse}
\NewDocumentCommand{\ShowInline}{v}{%
#1%
}
\newcommand{\bx}{\bold{h}}
\newcommand{\ba}{\bold{a}}
\newcommand{\by}{\bold{y}}
\newcommand{\bz}{z}

\newcommand{\bh}{\bold{h}}
\newcommand{\bp}{\bold{p}}

\newcommand{\dg}{\bar{g}^t}
\newcommand{\df}{\bar{f}^t}
\newcommand{\dL}{\bar{L}}
\newcommand{\dC}{\bar{C}}

\newcommand{\bF}{F^t}

\def\bblambda{\boldsymbol  \lambda}

\newcommand{\bTheta}{\mathbf{\Theta}}
\usepackage[algo2e,linesnumbered,vlined,ruled]{algorithm2e}
\DeclarePairedDelimiter{\dotp}{\langle}{\rangle}

\SetCommentSty{mycommfont}

\begin{document}
\title{Learning to Continuously Optimize Wireless Resource in a Dynamic Environment: A Bilevel Optimization Perspective}

\author{Haoran Sun, Wenqiang Pu, Xiao Fu, Tsung-Hui Chang,  and Mingyi Hong \thanks{H. Sun and M. Hong are with Department of Electrical and Computer Engineering, University of Minnesota, Minneapolis, MN 55455, USA. W. Pu is with the Shenzhen Research Institute of Big Data, Shenzhen, China.  T-H. Chang is with Shenzhen Research Institute of Big Data, The Chinese University of Hong Kong, Shenzhen, China.   X. Fu is with the School of Electrical Engineering and Computer Science, Oregon State University, Corvallis, OR 97331, USA. H. Sun and M. Hong are supported by NSF/Intel MLWiNS: CNS-2003033. X. Fu is supported by NSF/Intel MLWiNS: CNS-2003082. T-H. Chang is supported by the NSFC, China, under Grant 62071409 and 61731018. \newline A short version of this paper \cite{sun2021} has been published in the IEEE International Conference on Acoustics, Speech, \& Signal Processing (ICASSP), 2021.}
}

\maketitle
\begin{abstract}
There has been a growing interest in developing data-driven, and in particular deep neural network (DNN) based methods for modern communication tasks. For a few popular tasks such as power control, beamforming, and MIMO detection, these methods achieve state-of-the-art performance while requiring less computational efforts, less resources for acquiring channel state information (CSI), etc.  However, it is often challenging for these approaches to learn in a dynamic environment.

This work develops a new approach that enables data-driven methods to continuously learn and optimize resource allocation strategies in a dynamic environment. Specifically, we consider an ``episodically dynamic" setting where the environment statistics change in ``episodes", and in each episode the environment is stationary.  We propose to build the notion of continual learning (CL) into wireless system design, so that the learning model can incrementally adapt to the new episodes, {\it without forgetting} knowledge learned from the previous episodes. Our design is based on a novel bilevel optimization formulation which ensures certain ``fairness"  across different data samples. We demonstrate the effectiveness of the CL approach by integrating it with two popular DNN based models for power control and beamforming, respectively, and testing using both synthetic and ray-tracing based data sets.  These numerical results show that the proposed CL approach is not only able to adapt to the new scenarios quickly and seamlessly, but importantly, it also maintains high performance over the previously encountered scenarios as well.  
\end{abstract}


\section{Introduction} \label{introduction}
Deep learning (DL) has been successful in many applications such as computer vision \cite{voulodimos2018deep}, 
natural language processing \cite{young2018recent}, 
and recommender system \cite{wang2015collaborative}; 
see 
\cite{goodfellow2016deep}
for an overview.
 Recent works have also demonstrated that deep learning can be applied in communication systems, either by replacing an individual function module in the system (such as signal detection \cite{ye2017power,sun2019deep},  channel decoding \cite{nachmani2018deep}, 
 channel estimation \cite{wen2018deep,sun2018limited}), or by jointly representing the entire system \cite{dorner2017deep,o2017introduction} for achieving state-of-the-art performance. 
Specifically, deep learning is a data-driven method in which a large amount of training data is used to train a deep neural network (DNN) for a specific task (such as power control). Once trained, such a DNN model will replace conventional algorithms to process data in real time. Existing works have shown that when the real-time data follows similar distribution as the training data, then such an approach can generate high-quality solutions for  non-trivial wireless tasks \cite{ye2017power, sun2019deep,wen2018deep, sun2018limited,sun2018learning,lee2018deep, liang2019towards,eisen2020optimal,shen2019lorm,huang2019fast,nachmani2018deep,cui2019spatial},
while significantly reducing real-time computation, and/or requiring only a subset of channel state information (CSI).  

{\bf Dynamic environment.}   However, it is often challenging to use these DNN based algorithms when the environment (such as CSI and user locations) keeps changing. There are three main reasons. 

\noindent {\bf 1)} It is well-known that naive DL based methods typically suffer from severe performance deterioration when the environment changes, that is, when the real-time data follows a different distribution than those used in the training stage \cite{shen2019lorm}.

\noindent {\bf 2)}  One can adopt the {\it transfer learning} and/or {\it online learning} paradigm, by updating the DNN model according to data generated from the new environment \cite{shen2019lorm}.  However, these approaches usually degrade or even overwrite the previously learned models \cite{parisi2019continual,mccloskey1989catastrophic}. Therefore  they are sensitive to outlier because once adapted to a transient (outlier) environment/task, its performance on the  existing environment/task can degrade significantly \cite{kirkpatrick2017overcoming}.  Such kinds of behavior are particularly undesirable for wireless resource allocation tasks, because the unstable model performance would cause large outage probability for communication users.

\noindent {\bf 3)} If the entire DNN is periodically retrained using all the data seen so far \cite{kirkpatrick2017overcoming}, then the training can be {time and memory} consuming since the number of data needed keeps growing.

Due to these challenges, it is unclear how   state-of-the-art DNN based communication algorithms could properly adapt to new environments quickly without {experiencing} significant performance loss over previously encountered environments. 
Ideally, one would like to design  data-driven models that can adapt to the new environment {\it efficiently} (i.e., by using as little resource as possible), {\it seamlessly} (i.e., without knowing when the environment has been changed), {\it quickly} (i.e., adapt well using only a small amount of data), and {\it continually} (i.e., without forgetting the  previously learned models). 

{\bf Continual  Learning.} In the machine learning community, continual learning (CL)  has recently been proposed to address the ``catastrophic forgetting phenomenon". {That is, the tendency of abruptly losing the previously learned models when the current environment information is incorporated \cite{mccloskey1989catastrophic}.} Specifically, consider the setting where different ``tasks" (e.g., different CSI distributions) are revealed sequentially. Then CL aims to retain the knowledge learned from the early tasks through  one of the following mechanisms: 1) regularize the most important parameters 
\cite{kirkpatrick2017overcoming,li2017learning};
2) design dynamic neural network architectures and associate neurons with tasks \cite{ yoon2018lifelong,zenke2017continual,rusu2016progressive}; 
or 3) introduce a small set of memory for later training rehearsal \cite{lopez2017gradient,shin2017continual,rebuffi2017icarl}. 
{However, most of the above mentioned methods} require the knowledge of the {\it task boundaries}, that is, the time stamp where an old task terminates and a new task begins. Unfortunately, such a setting does not suit wireless communication problems well, since the wireless environment  usually changes  continuously, without a precise changing point.  Only limited recent CL works have focused on boundary-free environments \cite{isele2018selective, aljundi2019gradient, rolnick2019experience}, but they  all focus on proposing general-purpose tools without considering any problem-specific structures. 
Therefore, it is unclear whether they will be effective in wireless communication tasks. 

\begin{figure}
  \centering
  \includegraphics[width=0.8\linewidth]{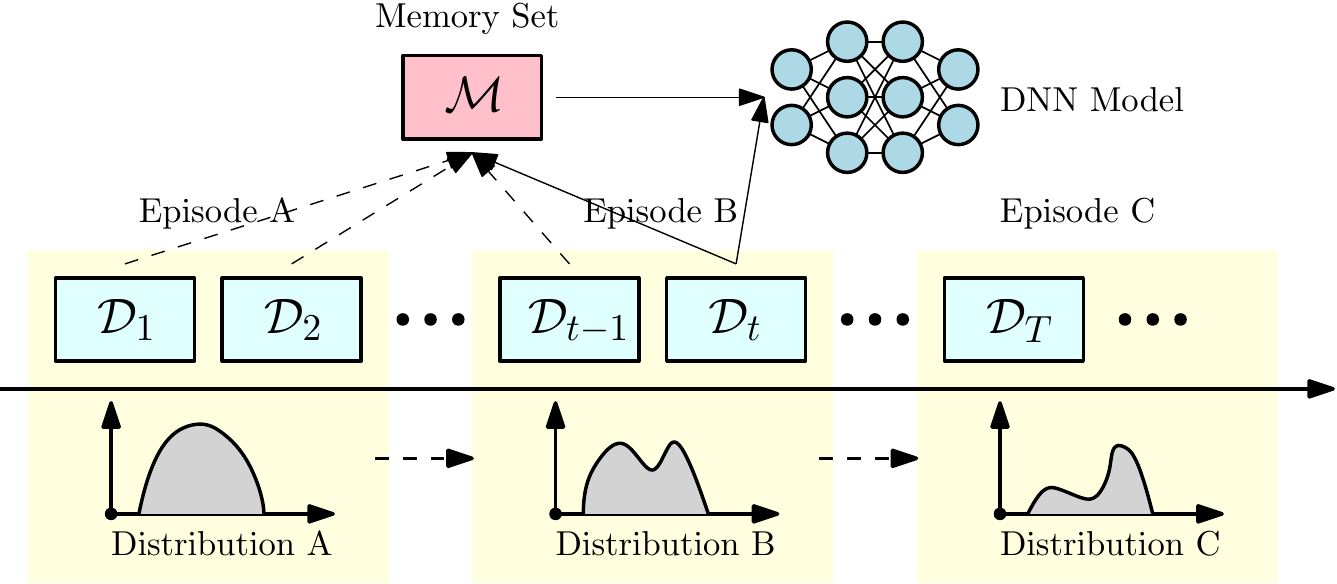}
  \caption{Proposed CL framework for the episodically dynamic environment. The data is feeding in a sequential manner (thus the system can only access $D_t$ at time $t$) with changing episodes and distributions, and the model has a limited memory set $\mathcal{M}$ (which cannot store all data $D_1$ to $D_t$). To maintain the good performance over all experienced data from $D_1$ to $D_t$, the proposed framework optimizes the data-driven model at each time $t$, based on the mixture of the current data $D_t$ and the memory set $\mathcal{M}$. The memory set $\mathcal{M}$ is then updated to incorporate the new data $D_t$. }
  \label{overview}
\end{figure}

{\bf Contributions.}
The {\it main contribution} of this paper is  that we introduce the notion of CL to data-driven wireless system design, and  develop a  tailored CL formulation together with a training algorithm. 
Specifically, we consider an ``episodically dynamic" setting where  the environment changes in {\it episodes}, and within each episode the distribution of the CSIs stays relatively stationary. 
Our goal is to design a learning model which can seamlessly and efficiently adapt to the changing environment, while maintaining the previously learned knowledge, and  without knowing the episode boundaries. 

Towards this end, we propose a CL framework for wireless systems, which incrementally adapts the DNN models by using data from the new episode as well as a limited but carefully selected subset of data from the previous episodes; see Fig. \ref{overview}.
Compared with the existing heuristic {boundary-free} CL  algorithms 
\cite{isele2018selective, aljundi2019gradient, rolnick2019experience}, our
approach is based upon a clearly defined optimization formulation that is tailored  for the wireless resource allocation problem. In particular, our CL method is based on a   bilevel  optimization which selects a small set of important data samples into the working memory according to certain {\it data-sample fairness}  criterion. We further relax the lower level of constrained non-convex bilevel problem using a smooth approximation, and propose and analyze practical (stochastic) algorithms for model training. Moreover, we demonstrate the effectiveness of our proposed framework by applying it to two popular DNN based models (one for power control and the other for beamforming). We test our CL approach using both synthetic and  ray-tracing  based data. To advocate reproducible research, the code of our implementation is available online at \url{https://github.com/Haoran-S/TSP_CL}.

 \section{Literature Review} 
\subsection{Deep learning for Wireless Communication}
Recently, DL has been used to generate high-quality solutions for non-trivial wireless communication tasks \cite{ye2017power, sun2019deep,wen2018deep, sun2018limited,sun2018learning,lee2018deep, liang2019towards,eisen2020optimal,shen2019lorm,huang2019fast,nachmani2018deep,cui2019spatial}.
These approaches can be roughly divided into following two categories: 
\subsubsection{End-to-end Learning}
For the classic resource allocation problems such as power control, the work \cite{sun2018learning} shows that DNNs can be exploited to learn the optimization algorithms such as WMMSE \cite{shi2011iteratively}, in an end-to-end fashion. Subsequent works such as \cite{lee2018deep} and \cite{liang2019towards} show that unsupervised learning can be used to further improve the model performance. Different network structures, such as convolutional neural networks \cite{lee2018deep} and graph neural networks \cite{shen2019graph, eisen2020optimal}, and different modeling techniques, such as reinforcement learning \cite{nasir2019multi}, are also studied in the literature. Nevertheless, all the above mentioned methods belong to the category of end-to-end learning, where a black-box model (typically deep neural network) is applied to  learn either the structure of some existing algorithms, or the optimal solution of a communication task.

\subsubsection{Deep Unfolding}
Alternatively, deep unfolding based methods \cite{balatsoukas2019deep} unfold existing optimization algorithms iteration by iteration and approximate the per-iteration behavior by one layer of the neural  network. 
In the machine learning community, well-known works in this direction include the unfolding of the iterative soft-thresholding algorithm (ISTA) \cite{gregor2010learning}, unfolding of the non-negative matrix factorization methods \cite{hershey2014deep}, and the unfolding of the alternating direction method of multipliers (ADMM) \cite{sprechmann2013supervised}. 
Recently, the idea of unfolding has been used in communication task  such as MIMO detection \cite{samuel2017deep, samuel2019learning, he2018model},  channel coding \cite{cammerer2017scaling}, 
resource allocation \cite{eisen2019learning}, channel estimation \cite{borgerding2017amp}, and beamforming problems \cite{hu2020iterative}; see a recent survey paper \cite{balatsoukas2019deep}.

\subsection{Continual Learning  }
\label{section:cl}
CL is originally proposed to improve reinforcement learning tasks \cite{ring1994continual} 
to help alleviate the catastrophic forgetting phenomenon, that is, the tendency of abruptly losing the knowledge about the previously learned task(s) when the current task information is incorporated \cite{mccloskey1989catastrophic}. It has later been broadly used to improve other machine learning models, and specifically the DNN models \cite{kirkpatrick2017overcoming,parisi2019continual}. 
Generally speaking, the CL paradigm can be classified into  the following categories.

\subsubsection{Regularization Based Methods} 
Based on the Bayesian theory and inspired by synaptic consolidation in Neuroscience, the regularization based methods penalize the most important parameters to retain the performance on old tasks \cite{kirkpatrick2017overcoming}.  Some most popular regularization approaches include Elastic Weight Consolidation (EWC) \cite{kirkpatrick2017overcoming} and Learning without Forgetting (LwF) \cite{li2017learning}.
 However, regularization or penalty based methods naturally introduce  tradeoff between the performance of old and new tasks. 
 If a large penalty is applied to prevent the model parameters from moving out of the optimal region of old tasks, 
 the model may be hard to adapt to new tasks; if a small penalty is applied, it may not be sufficient to force the parameters to stay in the optimal region to retain the performance on old tasks. 

\subsubsection{Architectures Based Methods} By associating neurons with tasks (either explicitly or not), many different types of dynamic neural network architectures are proposed to address the catastrophic forgetting phenomenon \cite{yoon2018lifelong}. However, due to the nature of the parameter isolation, architecture based methods usually require the knowledge of the task boundaries, and thus they are not suitable for wireless settings, where the environment change is often difficult to track. 

\subsubsection{Memory Based Methods} \label{replay-cat} 
Tracing  back  to the 1990s,  the  memory (aka. rehearsal) based  methods  play an  important  role  in  areas  such  as  reinforcement  learning \cite{robins1995catastrophic}.
As its name suggests, memory based methods store a small set of samples in memory for later training rehearsal, either through selecting and storing the most represented samples \cite{lopez2017gradient}
or use generative models to generate representative samples \cite{shin2017continual}. However, all above methods  require the knowledge of the task boundaries,  which are not suitable for wireless settings. Only recently, the authors of \cite{rolnick2019experience} proposed  boundary-free methods by selecting the samples through random reservoir sampling, which fills the memory set  with data that is  sampled from the streaming  episodes uniformly at random. More complex mechanisms are also introduced recently to further increase the {\it sampling diversity}, where the diversity is measured by either the samples' stochastic gradient directions \cite{aljundi2019gradient} or the samples' Euclidean distances \cite{isele2018selective}.

\subsection{Related methods} 
In this section, we discuss a few methods which also deal with streaming data, and  compare them with the CL approach. 

\subsubsection{Online Learning}
Online learning deals with the learning problems where the training data comes sequentially, and data distribution over time may or may not be consistent \cite{shalev2011online}.  
The ultimate goal of online learning is to minimize the cumulative loss over time, utilizing the previously learned knowledge.
In particular, when data sampling is independently and identically distributed, online gradient descent is essentially the stochastic gradient descent method, and all classic complexity results can be applied. On the other hand, when data sampling is non-stationary and drifts over time, online learning methods are more likely to adapt to the most recent data, at the cost of degrading the performance on past data \cite{aljundi2019online}. 

\subsubsection{Transfer Learning (TL)} 
Different from online learning, TL is designed to apply the knowledge gained from one task to another task, based on the assumption that related tasks will benefit each other  \cite{pan2009survey}. By transferring the learned knowledge from old tasks to new tasks, TL can quickly adapt to new tasks with fewer samples and less labeling effort. A typical application is the model fine-tuning on a (potentially small) user-specific problem (e.g., MNIST classification) based on some offline pre-trained model using a comprehensive dataset (e.g. ImageNet dataset). By applying the gained knowledge from the original dataset, the model can adapt to the new dataset  quickly with a few samples. Similar ideas have been applied in wireless settings recently \cite{shen2019lorm,yuan2020transfer,zappone2019model} to deal with scenarios that network parameters changes.
However, since the model is purely fine-tuned on the new dataset, after the knowledge transfer, the knowledge from the original model may be altered or overwritten, resulting in significant performance deterioration on the original problem \cite{kirkpatrick2017overcoming}.

\section{The Episodic Wireless Environment} \label{formulation}

The focus of this paper is to design learning algorithms in a {\it dynamic} wireless environment, so that the data-driven models we build can seamlessly, efficiently, and continually adapt to new environments. This section provides details about our considered {\it dynamic} environment, and discuss potential challenges. 

Specifically, we consider an ``episodically dynamic" setting where the environment changes relatively slowly in ``episodes", and during each episode the learners observe multiple {\it batches} of samples generated from {\it the same} stationary distribution; see Fig. \ref{overview}. We use $\mathcal{D}_t$ to denote a small batch of data collected at time $t$, and assume that each episode $k$ contains a set of $T_k$ batches, and use  $\mathcal{E}_k = \{\mathcal{D}_t\}_{t\in T_k}$ to denote the data collected in episode $k$. To have a better understanding about the setting, let us consider the following example. Again, we do not have knowledge about the episode boundaries. 

\subsection{A Motivating Example}
Suppose a collection of base stations (BSs) run certain DNN based resource allocation algorithm to provide coverage for a given area (e.g., a shopping mall). The users' activities can contain two types of  patterns: 1) regular but gradually changing patterns -- such as daily commute for the employees and customers, and such a kind of pattern could slowly change from week to week (e.g., the store that people like to visit in the summer is different in winter); 2) irregular but important patterns -- such as large events (e.g., promotion during the anniversary season), during which the distribution of user population (and thus the CSI distribution) will be significantly different compared with their usual distributions, and more careful resource allocation has to be performed.  {The episode, in this case, can be defined as ``a usual period of time", or ``an unusual period of time that includes a particular event".}

 For illustration purposes, suppose that each BS solves a weighted sum-rate (WSR) maximization problem for single-input single-output (SISO) interference channel, with a maximum of $K$ transmitter and receiver pairs.  Let $h_{kk}\in\mathbb{C}$ denote the direct channel between transmitter $k$ and receiver $k$, and $h_{kj}\in\mathbb{C}$ denote the interference channel from  transmitter~$j$ to receiver $k$.  
The power control problem aims to maximize the weighted system throughput via allocating each transmitter's transmit power $p_k$. For a given snapshot of the network, the problem can be formulated as the following:
\begin{align}\label{eq:sum_rate_IA}
\max_{p_1,\ldots, p_K}\quad &R(\bp; \bh):=\sum_{k=1}^K \alpha_k\log\left(1+\frac{|h_{kk}|^2p_k}{\sum_{j\neq k}|h_{kj}|^2p_j+\sigma_k^2}\right) \nonumber\\
\textrm{s.t.} \quad &0\leq p_k\leq P_{\max},~ \forall \; k=1,2,\ldots, K,
\end{align}
where $P_{\max}$ denotes the power budget of each transmitter; $\{\alpha_k>0\}$ are the weights. Problem \eqref{eq:sum_rate_IA} is known to be NP-hard \cite{luo2008dynamic} but can be effectively approximated by many optimization algorithms \cite{shi2011iteratively}. 
The data-driven methods proposed in recent works such as \cite{sun2018learning,lee2018deep, liang2019towards,shen2019lorm,eisen2020optimal} train DNNs using some pre-generated dataset. Here $\mathcal{D}_t$ can include a mini-batch of channels $\{h_{kj}\}$, and each episode can include a period of time where the channel distribution is stationary.

{For illustration purposes, let us consider the following scenario. At the beginning of a period, a DNN model for solving problem \eqref{eq:sum_rate_IA} (pretrained using historical data, $\mathcal{D}_0$) is preloaded on the BSs to capture the regular patterns in the shopping mall area. 
The question is, what should the BSs do  when the unexpected patterns appear? Say every morning a {\it morning model} is loaded to allocate resources up until noon. During this time the BSs can collect batches of data $\mathcal{D}_t, t=1,2, \cdots.$ Then shall the BS update its {\it morning model} immediately  to capture the dynamics of the user/demand distribution? If so, shall we use the entire data set, including the historical data and the real-time data, to re-train the neural network (which can be time-consuming), or shall we use TL to adapt to the new environment on the fly (which may result in overwriting the basic {\it morning model})? 
}

To address the above questions, we propose to adopt the notion of {CL}, so that our model can  incorporate the new data $\mathcal{D}_{t}$ on the fly, while keeping the knowledge acquired from $\mathcal{D}_{0:t-1}$. In the next section, we will detail our proposed CL formulation to achieve such a goal.  

\section{CL for Learning Wireless Resource} \label{CL_approach}

\subsection{Memory-based CL}
Our proposed method is based upon the notion of the {\it memory-based CL} proposed in \cite{lopez2017gradient,isele2018selective,aljundi2019gradient}, which allows the learner to collect a small subset of historical data for future re-training.  The idea is, once $\mathcal{D}_t$ is received, we  fill in  memory $\mathcal{M}_t$ (with fixed size) with the {\it most representative samples} from all experienced episodes $\mathcal{D}_{0:t-1}$, and then train the neural network at each time $t$ with the data $\mathcal{M}_t\cup\mathcal{D}_t$.  Several major features of this approach are listed below:

 \noindent $\bullet$ The learner does not need to know where a new episode starts (that is, the boundary-free setting) -- it can keep updating $\mathcal{M}_t$ and keep training as data comes in. 
 
  \noindent $\bullet$  If one can control the size of the memory well, then the training complexity will be made much smaller than performing a complete training over the entire data set $\mathcal{D}_{0:t}$, and will be comparable with TL approach which uses $\mathcal{D}_t$.
  
  \noindent $\bullet$  If the size of a given data batch $\mathcal{D}_t$ is very small, the learner is unlikely to {\it overfit} because the memory size is kept as fixed during the entire training process. This makes the algorithm more robust than the TL technique.

As mentioned before, existing memory-based CL methods include the random reservoir sampling algorithm \cite{rolnick2019experience}, 
and sample diversity based methods \cite{aljundi2019gradient,hayes2019memory,isele2018selective}. However, these works have a number of drawbacks. 
 First, for the reservoir sampling, if certain episode only contains a very small number of samples, then samples from this episode will be poorly represented in $\mathcal{M}_t$ because the probability of having samples from an episode in the memory is only dependent on the size of the episode.
Second, for the {diversity based methods}, the approach is again heuristic, since it is not clear how the ``diversity"
measured by large gradient {or Euclidean distances} can be directly linked to the quality of representation of the dataset. Third, and perhaps most importantly, the ways that the memory sets are selected are {\it independent} of the actual learning tasks at hand. The last property makes these algorithms broadly applicable to different kinds of learning tasks, but also prevents them from exploring application-specific structures. It is not clear whether, and how well these approaches will work for the wireless communication applications of interest in this paper.

\subsection{The Proposed Approach} \label{section-proposed}
In this work, we propose a new memory-based CL formulation that is tailored to the wireless resource allocation problem. 
{Our approach differs from the existing memory-based CL approaches discussed in the previous subsection,  because we intend to directly use features of the learning problem at hand to build our memory selection mechanism.}

To explain the proposed approach, let us begin with  presenting two common ways of formulating the training problem for learning optimal wireless resource allocation.
	First, one can adopt an {\it unsupervised learning} approach, which directly optimizes some native measures of wireless system performance, such as the throughput, Quality of Service, or the user fairness \cite{mo2000,hong12survey}, and this approach does not need any labeled data.
	Specifically,  a popular DNN training problem  is given by
\begin{align}\label{eq:unsupervised}
    \min_{\bTheta}  
    \sum_{i\in\mathcal{D}_{0:T}} \ell(\bTheta; \bh^{(i)}),
\end{align}
where $\bh^{(i)}$ is the $i$th CSI sample; $\bTheta$ is the DNN weight to be optimized;
$\ell(\cdot)$ is the negative of the per-sample sum-rate function, that is:
$\ell(\bTheta; \bh^{(i)}) = -R(\pi(\bTheta;\bh^{(i)}); \bh^{(i)})$,  where $R$ is defined in \eqref{eq:sum_rate_IA} and 
$\pi(\bTheta;\mathbf{h}^{(i)})$ is the output of DNN which predicts the power allocation. The advantage of this class of unsupervised learning approach is that the system performance measure is directly built into the learning model, while the downside is that this approach can get stuck at low-quality local solutions due to the non-convex nature of DNN \cite{song2021}.  

Secondly, it is also possible to use a supervised learning approach. Towards this end, we can generate some {\it labeled data} by executing a state-of-the-art optimization algorithm over all the training data samples  \cite{sun2018learning}. Specifically, for each CSI vector $\bh^{(i)}$, we can use algorithms such as the WMMSE \cite{shi2011iteratively} to solve problem \eqref{eq:sum_rate_IA} and obtain a high-quality solution $\bp^{(i)}$. Putting the $\bh^{(i)}$ and $\bp^{(i)}$ together yields the $i$th labeled data sample.
{Specifically,  a popular supervised DNN training problem is given by
\begin{align} 
    \min_{\bTheta}  
    \sum_{i\in\mathcal{D}_{0:T}} \ell(\bTheta; \bh^{(i)}, \bp^{(i)}),
\end{align}
where $\ell(\cdot)$ can be the Mean Squared Error (MSE) loss, that is:
$\ell(\bTheta; \bx^{(i)}, \bp^{(i)})   =  \|\bp^{(i)} - \pi(\bTheta, \bx^{(i)})\|^2$.}
Such a supervised learning approach typically finds high-quality models \cite{song2021,sun2018learning}, but  often incurs significant computation cost since generating high-quality labels can be very time-consuming. Additionally, the quality of the learning model is usually limited by that of label-generating optimization algorithms.

  Our idea is to leverage the advantages of both training approaches to construct a memory-based CL formulation. Specifically, we propose to select the most representative data samples $\bh^{(i)}$'s  into the working memory, by using a  {\it sample fairness} criteria. That is, those data samples that have relatively low system performance are more likely to be selected into the memory. Meanwhile, the DNN is trained by performing  either supervised or unsupervised learning over the selected data samples.  We expect that as long as the learning model can perform well on these challenging and {\it under-performing} data samples, then it should work well for the rest of the samples in a given episode.

To proceed, 
let us first assume that the entire dataset $\mathcal{D}_{0:T}$ is available. Let us use $\ell(\cdot)$ to denote a function measuring the per-sample training loss, $u(\cdot)$  a loss function measuring system performance for one data sample, $\bTheta$  the weights to be trained, $\bx^{(i)}$  the $i$th data sample and $\bp^{(i)}$ the $i$th label. Let $\pi(\bTheta;\bx^{(i)})$ denote the output of the neural network. Let us consider the following bilevel optimization problem   
 \begin{subequations}\label{eq:bilevel}
\begin{align}
\min_{\bTheta}  & \quad \sum_{i \in \mathcal{D}_{0:T}} \lambda^{(i)}_*(\bTheta) \cdot \ell (\bTheta; \bx^{(i)}, \bp^{(i)})  \label{eq:bilevel-a}\\
\text{\rm s.t.} & \quad\bblambda_*(\bTheta) = \arg \max_{\bblambda \in \mathcal{B}} \sum_{i \in \mathcal{D}_{0:T}} \lambda^{(i)} \cdot u(\bTheta; \bx^{(i)}, \bp^{(i)}), \label{eq:bilevel-b}
\end{align}
\end{subequations}
where $\mathcal{B}$ denotes the simplex constraint
$$\mathcal{B}:= \left\{\bblambda \quad \bigg|  \sum_{i\in \mathcal{D}_{0:T}} \lambda^{(i)} = 1, \quad \lambda^{(i)} \ge 0, \quad  \forall~i\in \mathcal{D}_{0:T}\right\}.$$

In the above formulation, the upper level problem \eqref{eq:bilevel-a} optimizes the {\it weighted} training performance across all data samples, and the lower level problem \eqref{eq:bilevel-b} assigns larger weights to those data samples that have higher loss $u(\cdot)$ (or equivalently, lower system level performance). 
The lower level problem has a linear objective, so the optimal $\bblambda_*$ is always on the {vertex} of the simplex, and the non-zero elements in $\bblambda_*$ all have the same weight. 
Such a solution naturally selects a subset of data for the upper level training problem to optimize. 

\begin{remark} \label{remark1}
{\bf (Choices of Loss Functions)} One feature of the above formulation is that we decompose the training problem and the  data selection problem, so that we can have the flexibility of choosing different loss functions according to the applications at hand.  Below we discuss a few alternatives.

First, the upper layer problem trains the DNN parameters $\bTheta$, so we can adopt any existing training formulation we discussed above. For example, if  supervised learning is used, then one common training loss is the MSE loss:
\begin{align} \label{MSE-loss}
\ell_{\rm MSE}(\bTheta; \bx^{(i)}, \bp^{(i)})  & =  \|\bp^{(i)} - \pi(\bTheta, \bx^{(i)})\|^2.
\end{align}

Second, the lower level loss function $u(\cdot)$ can be chosen as some adaptive weighted negative sum-rate for the $i$th data sample, which is directly related to system performance
\begin{align} \label{sumrate-loss}
   u(\bTheta, \bx^{(i)}, \bp^{(i)})  =& -  \alpha_i(\bTheta;\bx^{(i)}, \bp^{(i)})\cdot R(\pi(\bTheta, \bx^{(i)}); \bx^{(i)}).
\end{align} 
If we choose $\alpha_i(\bTheta;\bx^{(i)}, \bp^{(i)})\equiv 1, \; \forall~i$ , then the channel realization that achieves the worst throughput by the current DNN model will always be selected,  and the subsequent training problem will try to improve such ``worst case" performance. 
 Alternatively, when the achievable rates at samples across different episodes vary significantly (e.g., some episodes can have strong interference), then it is likely that the previous scheme will select data  only from a few episodes.  
Alternatively, we can choose $ \alpha_i(\bTheta;\bx^{(i)}, \bp^{(i)})= {1}/{\bar{R}(\bx^{(i)})}$, where $\bar{R}(\bx^{(i)})$ is the rate achievable by running some existing optimization algorithm on the sample $\bx^{(i)}$. This way, the data samples that achieve the worst sum-rate ``{relative}" to the state-of-the-art optimization algorithm is more likely to be selected. 
Empirically the ratio $R(\pi(\bTheta, \bx^{(i)}); \bx^{(i)}) / {\bar{R}(\bx^{(i)})}$ should be quite uniform  across data samples \cite{sun2018learning}, {so if there is one sample whose ratio is significantly lower than the rest, then we consider it as ``underperforming" and select it into the memory.} 
\end{remark}

\begin{remark} \label{minmax-algorithm}
{\bf (Special Case)} As a special case of problem \eqref{eq:bilevel},  one can choose $\ell(\cdot)$ to be the same as  $u(\cdot)$. 
Then the bilevel problem reduces to the following minimax problem, which optimizes the {\it worst case} performance (measured by the loss $\ell(\cdot))$) across all samples:
	\begin{align}\label{eq:minimax}
\min_{\bTheta} \max_{\bblambda\in \mathcal{B}}  & \quad \sum_{i \in \mathcal{D}_{0:T}} \lambda^{(i)} \cdot \ell (\bTheta; \bx^{(i)}, \bp^{(i)}).
\end{align}
When $\ell(\cdot)$ is taken as the negative per-sample sum-rate defined in \eqref{eq:unsupervised}, problem \eqref{eq:minimax}  is related to the classical minimax resource allocation  \cite{mo2000,Bengtsson99optimaldownlink,Razaviyayn12maxmin}, with the key difference that it does not achieve {fairness} {\it across users}, but rather to achieve {fairness} across {\it data samples}.  

Compared to the original bilevel formulation \eqref{eq:bilevel}, the minimax formulation \eqref{eq:minimax} is more restrictive but its properties have been relatively better understood. Many recent works have been developed for solving this problem, such as the two-time-scale Gradient Descent Ascent (GDA)  algorithm \cite{lin2020gradient}; see \cite{razaviyayn2020nonconvex} for a recent survey about related algorithms.

\end{remark}

At this point, {neither the bilevel problem \eqref{eq:bilevel} nor the minimax} formulation \eqref{eq:minimax} can be used to design CL strategy yet, because solving these problems requires the full data $\mathcal{D}_{0:T}$. To make these formulations useful for the considered CL setting, we make the following approximation.  Suppose that at $t$-th time instance, we have the memory  $\mathcal{M}_t$ and the new data set $\mathcal{D}_t$ available. Then, we propose to solve the following problem to select data points at time $t$:  
\begin{align}\label{bilevel-ep}
\min_{\bTheta}  & \quad \sum_{i \in \mathcal{M}_t \cup \mathcal{D}_t} \lambda_t^{(i)}(\bTheta) \cdot \ell (\bTheta; \bx^{(i)}, \bp^{(i)})   \\
\text{\rm s.t. } &\quad \bblambda_t(\bTheta) = \arg \max_{\bblambda \in \mathcal{B}_t} \sum_{i \in \mathcal{M}_t \cup \mathcal{D}_t} \lambda^{(i)} \cdot u(\bTheta; \bx^{(i)}, \bp^{(i)}),  \nonumber
\end{align}
where $\mathcal{B}_t$ denotes the simplex constraint
$$\mathcal{B}_t:= \left\{\bblambda \quad \bigg|  \sum_{i\in \mathcal{M}_t \cup \mathcal{D}_t} \lambda^{(i)} = 1,\quad \lambda^{(i)} \ge 0, \; \forall~i\in \mathcal{M}_t \cup \mathcal{D}_t  \right\}.$$
 More specifically, at a given time $t$, we will  collect $M$ data points $j\in \mathcal{M}_t \cup \mathcal{D}_t$ whose corresponding $\lambda^{(j)}$'s are the largest. These data points will form the next memory $\mathcal{M}_{t+1}$, and problem \eqref{bilevel-ep} will be solved again. The entire procedure will be shown shortly in Algorithm \ref{algo_PGD2}. 
 
\subsection{Reformulation}
In the previous section, we have proposed the CL framework and its optimization formulations. In this section, we will propose practical (stochastic) algorithms to solve those problems, and provide some basic analysis.

In general, at each time $t$, the non-convex bilevel problem \eqref{bilevel-ep} is very challenging to solve. Recent works on bilevel problems typically focus on solving problems with unconstrained and strongly convex inner problems \cite{hong2020two}. However, there is no generic theoretical guarantee available  when the outer problem is  non-convex, and the inner problem is constrained. In this section, instead of directly solving the bilevel problem \eqref{bilevel-ep}, we relax the original non-convex constrained lower level problem using {a softmax function \cite{goodfellow2016deep}, which is  a {\it smooth} approximation of the argmax function}:
\begin{align} \label{bilevel-closeform}
\min_{\bTheta}  &   \sum_{i \in \mathcal{M}_t \cup \mathcal{D}_t} \lambda^{(i)}_*(\bTheta) \cdot \ell (\bTheta; \bx^{(i)}, \bp^{(i)})  \\
\text{\rm s.t.} & \quad\lambda^{(i)}_*(\bTheta) = \frac{e^{u(\bTheta; \bx^{(i)}, \bp^{(i)})}}{\sum_{j \in \mathcal{M}_t \cup \mathcal{D}_t} e^{u(\bTheta; \bx^{(j)}, \bp^{(j)})}} \in (0, 1), \quad \forall i \nonumber.
\end{align}

After using the above approximation, $\bblambda$ is now implicitly constrained and can be computed in a closed-form. It is clear that the obtained $\bblambda_{*}(\bTheta)$ still allocates larger weights to larger loss values $u(\cdot)$. Further, we no longer need to solve two problems simultaneously, since we can easily obtain a single level problem by plugging the lower level problem into the upper problem.

Formally, at a given time $t$, problem \eqref{bilevel-closeform} can be written as the following  {\it compositional} optimization form: 
\begin{align}\label{opt0-2}
	\min_{\bTheta}~F^t(\bTheta)=\df(\dg(\bTheta); \bTheta),
\end{align}
where we have defined:
\begin{subequations} \small
\label{eq-relation} 
\begin{align}
\df(\bz;\bTheta)&:= \frac{\sum_{i \in \mathcal{M}_t \cup \mathcal{D}_t} e^{u(\bTheta; \bx^{(i)}, \bp^{(i)})} \cdot \ell(\bTheta; \bx^{(i)}, \bp^{(i)})}{|\mathcal{M}_t \cup \mathcal{D}_t| \cdot \bz},\\
\dg(\bTheta)&:= \frac{1}{|\mathcal{M}_t \cup \mathcal{D}_t|} \cdot \sum_{i \in \mathcal{M}_t \cup \mathcal{D}_t} e^{u(\bTheta; \bx^{(i)}, \bp^{(i)})}.
\end{align}
\end{subequations}

\subsection{Optimization Algorithms and  Convergence}
In this subsection, we will design algorithms for solving problem \eqref{opt0-2} (for a given time instance $t$). We first make the following standard assumptions.
\begin{assumption}[Boundedness] \label{ass_1}
The function value, the gradient and the Hessian of both upper level loss function $\ell(\cdot)$ and lower level loss function $u(\cdot)$ are bounded for all $\bTheta$, and for all $\bp\in[0, \mathbf{1}\times P_{\max}]$, and all realizations of $\bh$:
\begin{align*}
    \left\| \ell(\bTheta; \bx, \bp) \right\| &\le C_{\ell_0}, \quad  \left\|
      u(\bTheta; \bx, \bp) \right\|  \le C_{u_0},  \\
     \left\|\nabla_{\bTheta} \ell(\bTheta; \bx, \bp) \right\| & \le C_{\ell_1}, \quad 
    \left\|\nabla_{\bTheta} u(\bTheta; \bx, \bp) \right\| \le C_{u_1},   \\
    \left\|\nabla_{\bTheta}^2 \ell(\bTheta; \bx, \bp)\right\| &\le C_{\ell_2}, \quad 
    \left\|\nabla_{\bTheta}^2 u(\bTheta; \bx, \bp)\right\| \le C_{u_2}.
\end{align*}
\end{assumption}

{\remark Assumption \ref{ass_1} is reasonable in our specific problems. We can show that it can be satisfied if we choose $\ell(\cdot)$ and $u(\cdot)$ as suggested in  \eqref{MSE-loss} and \eqref{sumrate-loss}, and  use a neural network $\pi(\cdot)$ that have bounded gradient and Hessian \cite{herrera2020estimating}. The details  verifying   Assumption \ref{ass_1} will be left to   the supplemental material \ref{verify_ass}.}

Since the compositional problem \eqref{opt0-2} is essentially a single level problem, we can update  $\bTheta$ using the conventional gradient descent (GD) algorithm:
\begin{align}\label{GD} 
\bTheta^{k+1}&=\bTheta^k-\alpha \nabla \dg(\bTheta^k) \nabla_1 \df( \dg(\bTheta^k);\bTheta^k) \\
&\ \ \ -\alpha \nabla_2 \df( \dg(\bTheta^k);\bTheta^k), \nonumber
\end{align}
where $\alpha$ is the stepsize, and the two gradients are defined as 
\begin{equation*} 
\nabla_1 \df(\ba, \cdot) := \frac{\partial \df(\ba, \cdot)}{\partial \ba}, \quad \nabla_2 \df(\cdot, \mathbf{b}) := \frac{\partial \df( \cdot,\mathbf{b})}{\partial \mathbf{b}},
\end{equation*} 
for all $\ba, \mathbf{b}$ of appropriate sizes.

We have the following convergence result.
\begin{theorem} \label{thm-det}
Suppose Assumption \ref{ass_1} hold,  then the GD update \eqref{GD} achieves the following convergence rate
$$\min_{0\le k \le K} \|\nabla F^t(\bTheta^k)\|^2 \le \frac{c_0\cdot \bar{L} \cdot (F^t(\bTheta^0) - F^{t,*})}{K+1},$$ where $c_0$ is some universal positive constant, $\bar{L}$ is the Lipschitz constant of function $\nabla F^t(\bTheta)$, and $F^{t,*}$ is the optimal value of $F^t(\cdot)$ as defined in \eqref{opt0-2}. 
\end{theorem}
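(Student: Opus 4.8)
The plan is to establish that $F^t(\bTheta) = \df(\dg(\bTheta);\bTheta)$ has a Lipschitz continuous gradient, and then invoke the standard descent lemma for $L$-smooth functions. First I would show that under Assumption~\ref{ass_1}, the inner function $\dg(\bTheta)$ is bounded away from zero and above: indeed $\dg(\bTheta)$ is an average of terms $e^{u(\bTheta;\bx^{(i)},\bp^{(i)})}$, and since $\|u\|\le C_{u_0}$ we get $e^{-C_{u_0}} \le \dg(\bTheta) \le e^{C_{u_0}}$. This keeps the division by $\bz$ in $\df(\bz;\bTheta)$ well-conditioned. Next I would bound the gradients $\nabla\dg$, $\nabla_1\df$, $\nabla_2\df$ and the relevant Hessian-type quantities: each of these is built from $e^{u}$, $u$, $\ell$ and their first two derivatives via products, sums, and the (now safely bounded) reciprocal $1/\bz$; so by the product/quotient rules and Assumption~\ref{ass_1} they are all bounded by constants depending only on $C_{\ell_0},C_{\ell_1},C_{\ell_2},C_{u_0},C_{u_1},C_{u_2}$ and $|\mathcal{M}_t\cup\mathcal{D}_t|$.

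With those bounds in hand, I would compute $\nabla F^t(\bTheta) = \nabla\dg(\bTheta)\nabla_1\df(\dg(\bTheta);\bTheta) + \nabla_2\df(\dg(\bTheta);\bTheta)$ (the chain rule for compositional objectives, matching the update \eqref{GD}), and then show this map is Lipschitz. The argument is the usual one: each factor is bounded and each factor is itself Lipschitz (because its derivative is bounded, again via Assumption~\ref{ass_1}), and a product/composition of bounded Lipschitz maps is Lipschitz; the composition $\bTheta \mapsto \dg(\bTheta)$ contributes a factor because $\dg$ is Lipschitz and $\nabla_1\df(\cdot;\bTheta)$, $\nabla_2\df(\cdot;\bTheta)$ are Lipschitz in their first argument. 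Call the resulting constant $\bar L$. Finally, apply the descent lemma: for $F^t$ with $\bar L$-Lipschitz gradient and stepsize $\alpha = 1/\bar L$ (or any $\alpha \le 1/\bar L$), the GD iterates satisfy $F^t(\bTheta^{k+1}) \le F^t(\bTheta^k) - \tfrac{\alpha}{2}\|\nabla F^t(\bTheta^k)\|^2$; telescoping from $0$ to $K$ and using $F^t(\bTheta^{K+1}) \ge F^{t,*}$ yields $\sum_{k=0}^{K}\|\nabla F^t(\bTheta^k)\|^2 \le \tfrac{2}{\alpha}(F^t(\bTheta^0) - F^{t,*})$, hence $\min_{0\le k\le K}\|\nabla F^t(\bTheta^k)\|^2 \le \tfrac{c_0\,\bar L\,(F^t(\bTheta^0) - F^{t,*})}{K+1}$ with $c_0 = 2$.

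The one genuinely non-routine step is verifying the $\bar L$-smoothness of $F^t$, i.e.\ bookkeeping the chain of products, quotients, and the inner composition to confirm that no unbounded quantity sneaks in. The key enabling observation — worth isolating as a short lemma — is the two-sided bound $e^{-C_{u_0}}\le \dg(\bTheta)\le e^{C_{u_0}}$, which is what makes the $1/\bz$ factor in $\df$ harmless; everything else is a mechanical application of Assumption~\ref{ass_1} together with the fact that bounded-and-Lipschitz is preserved under sums, products, and composition. I would state the $\bar L$-smoothness as an intermediate lemma, defer its constant-chasing proof to the appendix, and keep the main proof to the two lines of the descent-lemma telescoping argument above.
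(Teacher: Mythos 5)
Your proposal is correct and takes essentially the same route as the paper: the paper likewise reduces the theorem to an intermediate lemma establishing that $\nabla F^t$ is Lipschitz with constant $\bar L$ (proved in the supplement by exactly your bookkeeping, with the two-sided bound $e^{-C_{u_0}}\le \dg(\bTheta)\le e^{C_{u_0}}$ taming the $1/\bz$ factor), and then invokes the classical descent-lemma/telescoping analysis for non-convex gradient descent. Your explicit stepsize choice $\alpha\le 1/\bar L$ and constant $c_0=2$ are consistent with what the paper leaves implicit by citing the standard analysis.
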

\begin{proof}
From Assumption \ref{ass_1} we can conclude that the function $F^t$ has Lipschitz continuous gradient with constant $\bar{L}$,  where the proof and precise definition of $\bar{L}$ is relegated to Lemma \ref{lemma-lip} in the supplemental material \ref{sec-lemma}. Then the desired  result immediately follows from the classical gradient descent analysis on non-convex problems; see \cite[Section 1.2.3]{nesterov2003introductory}. 
\end{proof}

However, the above update needs to evaluate $\dg(\bTheta)$, $\nabla \dg(\bTheta)$ and $\nabla \df(\bz, \bTheta)$, and the evaluation of each term requires the entire dataset $\mathcal{M}_t \cup \mathcal{D}_t$. This practically means that we need to perform full GD to train a (potentially large) neural network, which is computationally expensive, and typically results in poor performance.

A more efficient solution is to perform a stochastic gradient descent (SGD) type update, which first samples a mini-batch of data, then computes stochastic gradients to update. To be specific, the algorithm  samples a subset of data $\xi$ and $\phi$ uniformaly randomly at each iteration from the dataset $\mathcal{M}_t \cup \mathcal{D}_t$. Then the sampled versions of  $\df(\bz;\bTheta)$ and  $\dg(\bTheta)$ are given by:
\begin{subequations} 
	\label{eq-relation-sample}
	\begin{align}
	f(\bz;\bTheta;\xi)&:=  \frac{\sum_{i \in \xi} e^{u(\bTheta; \bx^{(i)}, \bp^{(i)})} \cdot \ell(\bTheta; \bx^{(i)}, \bp^{(i)})}{|\xi| \cdot \bz},\\
	g(\bTheta;\phi)&:= \frac{1}{|\phi|} \cdot \sum_{i \in \phi} e^{u(\bTheta; \bx^{(i)}, \bp^{(i)})},
	\end{align}
\end{subequations}
 where the notations $|\phi|$ and $|\xi|$ denote the number of samples in the mini-batch $\phi$ and $\xi$, respectively.
It is common to assume that the sampling mechanism can obtain $\xi$ and $\phi$ randomly and independently, that is, the following unbiasedness property holds.

\begin{assumption}[Unbiased Sampling]\label{ass_unbiased}
The sampling oracle satisfies the following  relations,  where  $z$ is a deterministic variable
\begin{align*}
\mathbb{E}_t\left[g(\bTheta;\phi)\right] &=\dg(\bTheta), \nonumber\\
\mathbb{E}_t\left[\nabla g(\bTheta;\phi) \right] &=\nabla \dg(\bTheta), \nonumber\\
\mathbb{E}_t\left[ \nabla_1 f(\bz; \bTheta;\xi)\right] &=  \nabla_1 \df(\bz;\bTheta)\nonumber\\
\mathbb{E}_t\left[ \nabla_2 f(\bz; \bTheta;\xi)\right] &=  \nabla_2 \df(\bz;\bTheta).
\end{align*}
Note that we have used the simplified notation $\mathbb{E}_t[\cdot]$ to indicate that the expectation is taken over the sampling process from the data set $\mathcal{M}_t \cup \mathcal{D}_t$.
\end{assumption}

Based on the above assumption, problem \eqref{opt0-2} can be equivalently written as:
\begin{align}\label{eq.stochastic.formulation}
	\min_{\bTheta}~F^t(\bTheta)=\mathbb{E}_t[{f}(\mathbb{E}_t[{g}(\bTheta;\phi)]; \bTheta; \xi)].
\end{align}
Then we can write down the following stochastic update, where the update direction $\mathbf{d}^k$ is an unbiased estimator of $\nabla F^t(\bTheta^k)$: 
 \begin{align} \label{update:theta} 
	\bTheta^{k+1}& = \bTheta^k - \alpha \mathbf{d}^k,
\end{align}
where we have defined:
\begin{align*}
\mathbf{d}^k & :=  \nabla g(\bTheta^k;\phi^k) \nabla_1 f\left(\mathbb{E}_t\left[g(\bTheta^k;\phi^k)\right];\bTheta^k;\xi^k\right)  + \nabla_2 f\left(\mathbb{E}_t\left[g(\bTheta^k;\phi^k)\right];\bTheta^k;\xi^k\right).
\end{align*}
Unfortunately, computing ${\bf d}^k$ is still costly due to the need to evaluate $\mathbb{E}_t\left[g(\bTheta^k;\phi^k)\right]$ (i.e.,  evaluating $\dg(\bTheta^k)$), which still involves the full data. One can no longer directly replace $\mathbb{E}_t\left[g(\bTheta^k;\phi^k)\right]$ by its stochastic samples $g(\bTheta^k; \phi)$ because such an estimator is biased, that is: 
\begin{align*}
 \mathbb{E}_t[\nabla g(\bTheta^k;\phi^k) \nabla_1 f(g(\bTheta^k;\phi^k);\bTheta^k;\xi^k)] \ne & \nabla \dg(\bTheta^k) \nabla_1 \df(\dg(\bTheta^k);\bTheta^k).
\end{align*}
To proceed, we introduce an auxiliary sequence $\{\by^{k+1}\}$ to track  $\mathbb{E}_t[g(\bTheta^k;\phi^k)]$. The resulting SGD-type algorithm is given below: 
\begin{subequations}\label{eq.SCSC}
\begin{align}
	\bTheta^{k+1}&=\bTheta^k-\alpha_k \nabla g(\bTheta^k;\phi^k) \nabla_1 f(\by^{k+1};\bTheta^k;\xi^k) - \alpha_k \nabla_2 f(\by^{k+1};\bTheta^k;\xi^k), \label{eq.SCSC-1}\\
		\by^{k+1}&=(1-\beta_k)\left(\by^k+g(\bTheta^k;\phi^k)-g(\bTheta^{k-1};\phi^k)\right) +\beta_k g(\bTheta^k;\phi^k),\label{eq.SCSC-2}
\end{align}
\end{subequations} 
where $\{\alpha_k\}$ and $\{\beta_k\}$ are sequences of stepsizes. The rationale is that, if the auxiliary variable $\by^{k+1}$ can track the true value $\dg(\bTheta^k)$ reasonably well, then \eqref{eq.SCSC-1} will be able to approximate an unbiased estimator of the true gradient. 

 Finally, {the overall stochastic algorithm for approximately solving problem \eqref{eq:bilevel} is given in Algorithm \ref{algo_PGD2}.  For each time period $t$, we first solve the relaxed problem \eqref{bilevel-closeform} in line 4-8,  by performing the stochastic updates described in \eqref{eq.SCSC-1} -- \eqref{eq.SCSC-2} for $K$ times (where $K$ is a predetermined number). Next, we construct the memory set $\mathcal{M}_t$ in line 9-13.} We sort the elements of $\{\lambda_t^{(i)}\}$ (defined in \eqref{bilevel-closeform}) and pick $M$ largest elements' index set $\mathcal{I}$ (In line 12 of the table); Then we assign the data points associated with the index set $\mathcal{I}$ to the new memory set $\mathcal{M}_{t+1}$ (In line 13 of the table).

\begin{algorithm2e}[t]
	\small
	\caption{{The Proposed stochastic CL Algorithm}}\label{algo_PGD2}
	Input: Memory $\mathcal{M}_0=\emptyset$, memory size $M$, max iterations $R$, step-sizes $\alpha$, $\beta$\\
	\While {receive $\mathcal{D}_t$}{
		Set $\mathcal{G}_t = \mathcal{M}_{t}\cup\mathcal{D}_t$

		\For{$k = 1:K$}{
			$\bTheta^{k+1} \leftarrow \bTheta^k-\alpha_k \nabla g(\bTheta^k;\phi^k) \nabla_1 f(\by^{k+1};\bTheta^k;\xi^k)$ 
			
			$\quad  \quad \quad - \alpha_k \nabla_2 f(\by^{k+1};\bTheta^k;\xi^k)$
			
			$\by^{k+1} \leftarrow (1-\beta_k)\left(\by^k+g(\bTheta^k;\phi^k)-g(\bTheta^{k-1};\phi^k)\right)$
			
			$\quad  \quad \quad +\beta_k g(\bTheta^k;\phi^k)$  
		}
		\uIf{ $|\mathcal{G}_t|<M$}{
			$\mathcal{M}_{t+1}  = \mathcal{G}_t$ 
		}
		\Else{
			$\mathcal{I} = \mbox{\rm Top}_M (\{\lambda_t^{(i)}\}_{\forall i})$ 
			
			$\mathcal{M}_{t+1}  = \{\mathcal{G}_t^{(i)}\}_{ i\in \mathcal{I}}$ 
		}
	}
\end{algorithm2e}

{\remark Note that the use of the auxiliary variable $\by$, first appeared in solving stochastic compositional optimization problems in the form of 
	\begin{align}\label{eq.composite}
	\min_{\bTheta} \mathbb{E}_{\xi}[f(\mathbb{E}_{\phi}[g(\bTheta,\phi)],\xi)],
	\end{align}
	see recent works \cite{wang2017stochastic,chen2020solving}, and the references therein. In particular, the authors of \cite{chen2020solving} provided the exact update form of \eqref{eq.SCSC-2}, and showed that the resulting algorithm enjoys the same sample efficiency as directly applying the SGD to solve problem \eqref{eq.composite}. 
	
	However, problem \eqref{eq.composite} is not exactly the same as our problem \eqref{eq.stochastic.formulation} because our problem includes an extra variable $\bTheta$ in the definition of $f(\cdot)$, so the update \eqref{eq.SCSC-1}  includes an additional term $- \alpha_k \nabla_2 f(\by^{k+1};\bTheta^k;\xi^k)$.  
	Therefore, more refined analysis steps have to be taken compared to \cite{chen2020solving}.
}

Below, we analyze the convergence of the $\bTheta$ and $\by$ updates given in line 4-8 of Algorithm 1. The following lemma is an immediate consequence of Assumption \ref{ass_1} and \ref{ass_unbiased}. Its proof is similar to Lemma \ref{lemma-lip-stoc} in the supplementary material Sec. \ref{sec-lemma}.  
\begin{lemma} \label{ass_bv_remark} 
Suppose Assumption  \ref{ass_1} -- \ref{ass_unbiased} hold, then we have \\
(1) The stochastic function $g(\bTheta;\phi)$ has bounded variance, that is, there exits a positive constant $V_g$ such that:
\begin{equation*}
\mathbb{E}_t\left[\|g(\bTheta;\phi)-\dg(\bTheta)\|^2\right]\leq V_g, \; \forall~\bTheta,
\end{equation*}
where $\phi$ denotes the random  data sampled from $\mathcal{M}_t \cup \mathcal{D}_t$, $\dg$ and $g$ are defined in \eqref{eq-relation} and \eqref{eq-relation-sample}, respectively.

\noindent (2) The stochastic gradient of $g$ is bounded in expectation, that is, there exists a positive constant $C_g$ such that 
\begin{align}\label{eq.ass2-g}
\begin{split}
  \mathbb{E}_t\left[\|\nabla g(\bTheta;\phi)\|\right] &\leq C_g, \; \forall~\bTheta.
\end{split}
\end{align}
(3) Fixing any sample $\phi$, the stochastic gradient of $g$ is $L_g$-smooth, that is, for any $\bTheta, \bTheta'\in\mathbb{R}^d$, we have: 
\begin{align*}
\begin{split}
\|\nabla g(\bTheta;\phi)-\nabla g(\bTheta';\phi)\| & \leq L_g\|\bTheta-\bTheta'\|.
\end{split}
\end{align*}
\end{lemma}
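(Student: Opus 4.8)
The plan is to derive all three claims directly from the boundedness hypotheses in Assumption~\ref{ass_1}, together with the unbiasedness in Assumption~\ref{ass_unbiased}, using the single structural observation that $g(\bTheta;\phi)$ (see \eqref{eq-relation-sample}) is a finite average of terms of the form $e^{u(\bTheta;\bx^{(i)},\bp^{(i)})}$, and that the exponential of a uniformly bounded scalar is itself bounded both above and away from zero. Concretely, since $\|u(\bTheta;\bx,\bp)\|\le C_{u_0}$ for all $\bTheta$ and all admissible $\bx,\bp$, every summand lies in $[e^{-C_{u_0}},e^{C_{u_0}}]$, hence $g(\bTheta;\phi)\in[e^{-C_{u_0}},e^{C_{u_0}}]$ for every mini-batch $\phi$ and every $\bTheta$, and the same two-sided bound holds for $\dg(\bTheta)$. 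For part~(1), using $\mathbb{E}_t[g(\bTheta;\phi)]=\dg(\bTheta)$ from Assumption~\ref{ass_unbiased}, the variance is bounded by the second moment, so $\mathbb{E}_t[\|g(\bTheta;\phi)-\dg(\bTheta)\|^2]\le\mathbb{E}_t[\|g(\bTheta;\phi)\|^2]\le e^{2C_{u_0}}=:V_g$; a tighter constant $(e^{C_{u_0}}-e^{-C_{u_0}})^2$ is also available by additionally invoking the lower bound on $\dg$.

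For parts~(2) and~(3) I would differentiate the closed form of $g$ term by term. The chain rule gives $\nabla_{\bTheta} g(\bTheta;\phi)=\frac{1}{|\phi|}\sum_{i\in\phi} e^{u(\bTheta;\bx^{(i)},\bp^{(i)})}\,\nabla_{\bTheta} u(\bTheta;\bx^{(i)},\bp^{(i)})$, so combining $e^{u}\le e^{C_{u_0}}$ with $\|\nabla_{\bTheta} u\|\le C_{u_1}$ yields $\|\nabla g(\bTheta;\phi)\|\le e^{C_{u_0}}C_{u_1}$ pointwise in $\phi$ and $\bTheta$; taking $\mathbb{E}_t[\cdot]$ preserves the bound, giving $C_g=e^{C_{u_0}}C_{u_1}$ for part~(2). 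Differentiating once more, the Hessian of the $i$th summand equals $e^{u}\big(\nabla_{\bTheta} u\,(\nabla_{\bTheta} u)^{\top}+\nabla^2_{\bTheta} u\big)$, whose spectral norm is at most $e^{C_{u_0}}(C_{u_1}^2+C_{u_2})$ by Assumption~\ref{ass_1}; averaging over $\phi$ gives $\|\nabla^2 g(\bTheta;\phi)\|\le e^{C_{u_0}}(C_{u_1}^2+C_{u_2})=:L_g$ for every fixed $\phi$. A uniform Hessian bound then implies the stated Lipschitz property of $\nabla g(\cdot;\phi)$, via $\nabla g(\bTheta;\phi)-\nabla g(\bTheta';\phi)=\int_0^1\nabla^2 g(\bTheta'+s(\bTheta-\bTheta');\phi)(\bTheta-\bTheta')\,ds$ followed by the triangle inequality, proving part~(3).

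There is no serious obstacle here; the statement amounts to careful bookkeeping. The one point that deserves attention is that the boundedness of $u$ itself (not merely of its derivatives) must be used in all three estimates, so that $e^{u}$ acts as a bounded — and, where needed, strictly positive — multiplier, and that the product/chain rule expansions of $\nabla e^{u}$ and $\nabla^2 e^{u}$ introduce only the already-controlled quantities $\nabla_{\bTheta} u$ and $\nabla^2_{\bTheta} u$. The companion statement for the composite map $f$ (Lemma~\ref{lemma-lip-stoc} in the supplement) would follow by the same recipe, with the extra denominator $\bz$ in the definition of $f$ in \eqref{eq-relation-sample} handled using that $\bz$ stays bounded away from zero on the relevant range (recall $\dg(\bTheta)\ge e^{-C_{u_0}}$), so that $1/\bz$ and its derivatives remain bounded, and with $\ell$ contributing the constants $C_{\ell_0},C_{\ell_1},C_{\ell_2}$ in the analogous places.
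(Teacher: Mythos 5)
Your proof is correct and follows essentially the same route as the paper, which derives the bounds by differentiating the sample average of $e^{u(\bTheta;\bx^{(i)},\bp^{(i)})}$ term by term and invoking the uniform bounds $C_{u_0}, C_{u_1}, C_{u_2}$ from Assumption~\ref{ass_1} (the paper defers the computation to Lemmas~\ref{lemma-lip} and \ref{lemma-lip-stoc} in the supplement, where exactly your constants $C_g = C_{u_1}e^{C_{u_0}}$ and $L_g = (C_{u_1}^2 + C_{u_2})e^{C_{u_0}}$ appear). Your variance bound via the second moment and your Hessian-based Lipschitz argument are the same bookkeeping the paper intends, so there is nothing to add.
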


Next, we show that the tracking error of the auxiliary variable $\by$ is shrinking.
\begin{lemma}[Tracking Error Contraction  {\cite[Lemma 1]{chen2020solving}}]
\label{lemma2}
Consider ${\cal F}^k$ as the collection of random variables, i.e., ${\cal F}^k:=\left\{\phi^0, \ldots, \phi^{k-1}, \xi^0, \ldots, \xi^{k-1}\right\}$. 
Suppose Assumption  \ref{ass_1} and \ref{ass_unbiased}  hold, and $\by^{k+1}$ is generated by running  iteration \eqref{eq.SCSC-2}  conditioned on ${\cal F}^k$. The mean square error of $\by^{k+1}$ satisfies
\begin{align*}\label{eq.lemma2}
&\mathbb{E}_t\left[\|\dg(\bTheta^k)-\by^{k+1}\|^2\mid{\cal F}^k\right]\\
\leq & (1-\beta_k)^2\|\dg(\bTheta^{k-1})-\by^k\|^2+4(1-\beta_k)^2C_g^2\|\bTheta^k-\bTheta^{k-1}\|^2+2\beta_k^2V_g^2,
\end{align*}
where $C_g$ and $V_g$ are defined in Lemma \ref{ass_bv_remark}.
\end{lemma}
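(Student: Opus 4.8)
\emph{Proof plan for Lemma \ref{lemma2}.} This recursion is \cite[Lemma 1]{chen2020solving}; I would re-derive it in our notation as follows. First I would rewrite the update \eqref{eq.SCSC-2} in the equivalent ``one-step'' form
$$\by^{k+1} = (1-\beta_k)\by^k + g(\bTheta^k;\phi^k) - (1-\beta_k)g(\bTheta^{k-1};\phi^k),$$
using $(1-\beta_k)+\beta_k=1$. Subtracting this from $\dg(\bTheta^k)$, and splitting $\dg(\bTheta^k)=(1-\beta_k)\dg(\bTheta^{k-1})+(1-\beta_k)\big(\dg(\bTheta^k)-\dg(\bTheta^{k-1})\big)+\beta_k\dg(\bTheta^k)$ before regrouping, one arrives at the decomposition $\dg(\bTheta^k)-\by^{k+1}=A+B+C$, where
\begin{align*}
A &:= (1-\beta_k)\big(\dg(\bTheta^{k-1})-\by^k\big),\\
B &:= (1-\beta_k)\Big[\big(\dg(\bTheta^k)-\dg(\bTheta^{k-1})\big)-\big(g(\bTheta^k;\phi^k)-g(\bTheta^{k-1};\phi^k)\big)\Big],\\
C &:= \beta_k\big(\dg(\bTheta^k)-g(\bTheta^k;\phi^k)\big).
\end{align*}
Here $A$ is deterministic given ${\cal F}^k$ (recall $\bTheta^k,\bTheta^{k-1},\by^k$ are ${\cal F}^k$-measurable), while $B$ and $C$ both have zero conditional mean by the unbiasedness relations of Assumption \ref{ass_unbiased}; for $B$ this uses $\mathbb{E}_t[g(\bTheta^k;\phi^k)-g(\bTheta^{k-1};\phi^k)\mid{\cal F}^k]=\dg(\bTheta^k)-\dg(\bTheta^{k-1})$.

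Next I would expand $\mathbb{E}_t[\|A+B+C\|^2\mid{\cal F}^k]$. Since $\mathbb{E}_t[B+C\mid{\cal F}^k]=0$ and $A$ is ${\cal F}^k$-measurable, the cross term $2\langle A,\,B+C\rangle$ vanishes in conditional expectation, leaving $\|A\|^2+\mathbb{E}_t[\|B+C\|^2\mid{\cal F}^k]$. The first piece is exactly $\|A\|^2=(1-\beta_k)^2\|\dg(\bTheta^{k-1})-\by^k\|^2$. For the second, I would use $\|B+C\|^2\le 2\|B\|^2+2\|C\|^2$ and bound the two terms separately: $\mathbb{E}_t[\|C\|^2\mid{\cal F}^k]=\beta_k^2\mathbb{E}_t[\|g(\bTheta^k;\phi^k)-\dg(\bTheta^k)\|^2]$ is controlled by the bounded-variance property of Lemma \ref{ass_bv_remark}(1), contributing the $O(\beta_k^2)$ noise term; and for $B$, applying $\mathbb{E}\|X-\mathbb{E}X\|^2\le\mathbb{E}\|X\|^2$ with $X=g(\bTheta^k;\phi^k)-g(\bTheta^{k-1};\phi^k)$, followed by the (uniform) Lipschitz continuity of $g(\cdot;\phi)$ — which follows from the gradient bound on $g$ guaranteed by Assumption \ref{ass_1} (since $\nabla_{\bTheta}e^{u}=e^{u}\nabla_{\bTheta}u$ is bounded), cf. Lemma \ref{ass_bv_remark}(2) — yields $\mathbb{E}_t[\|B\|^2\mid{\cal F}^k]\le (1-\beta_k)^2C_g^2\|\bTheta^k-\bTheta^{k-1}\|^2$. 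Collecting the constants gives the stated contraction inequality.

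The only real subtlety I anticipate is the bookkeeping around $B$ and $C$: because they share the same minibatch $\phi^k$ they are not independent, so they must be grouped together when expanding the square rather than handled as separate martingale increments, and one must recognize that the ``difference of consecutive stochastic evaluations'' $g(\bTheta^k;\phi^k)-g(\bTheta^{k-1};\phi^k)$ is precisely the variance-reduction device whose conditional mean equals the true drift $\dg(\bTheta^k)-\dg(\bTheta^{k-1})$ — this cancellation is what prevents a non-vanishing $O(1)$ bias term and is the heart of the argument. Everything else is a routine application of Young's inequality $\|a+b\|^2\le 2\|a\|^2+2\|b\|^2$ together with the boundedness constants of Assumption \ref{ass_1}; alternatively, since iteration \eqref{eq.SCSC-2} coincides exactly with that of \cite{chen2020solving} and the additional dependence of $f$ on $\bTheta$ does not enter the $\by$-recursion, the lemma may simply be invoked from \cite[Lemma 1]{chen2020solving}.
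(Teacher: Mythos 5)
Your proposal is correct and follows essentially the same argument as the paper's proof: the same algebraic decomposition of $\dg(\bTheta^k)-\by^{k+1}$ into the contraction term plus zero-conditional-mean pieces, the vanishing cross term, Young's inequality, the bounded-variance bound for the $\beta_k$-term, and the Lipschitz continuity of $g(\cdot;\phi)$ implied by Assumption \ref{ass_1}. The only difference is bookkeeping: you group the drift $\dg(\bTheta^k)-\dg(\bTheta^{k-1})$ together with the stochastic difference $g(\bTheta^k;\phi^k)-g(\bTheta^{k-1};\phi^k)$ and apply $\mathbb{E}\|X-\mathbb{E}X\|^2\le\mathbb{E}\|X\|^2$, which yields the constant $2(1-\beta_k)^2C_g^2$, whereas the paper bounds the two pieces separately (via Lipschitzness of both $\dg$ and $g(\cdot;\phi)$) and obtains $4(1-\beta_k)^2C_g^2$; your bound is slightly tighter and in particular implies the stated inequality.
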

\begin{proof}
The above analysis is the same as the one presented for solving stochastic compositional optimization problems in the form of \eqref{eq.composite}  (see \cite[Lemma 1]{chen2020solving}). We include the steps in    the supplemental material \ref{eq.lemma2-prof} for completeness. \end{proof}

 We are now ready to show our main results about the convergence  of the sequence $\{(\bTheta^k, \by^k)\}_{k=1}^{K}$ in Algorithm 1. 
\begin{theorem}[Convergence Analysis] \label{theorem1}
 Consider Algorithm  \ref{algo_PGD2}, and fix a time instance $t$. Let $K$ be the total number of iterations used at time $t$ to update the tuple $\{(\bTheta^k, \by^k)\}$ . Suppose Assumptions \ref{ass_1} and \ref{ass_unbiased}  hold, and that the sequence of the auxiliary variable $\{\by^k\}$ is bounded away from zero, i.e., $\|\by^k\| \ge C_y, \; \forall~k$, for some positive constant $C_y$. Let us choose the stepsizes as 
$\alpha_k  ={\beta_k}/{L_0}, \; \forall~k$,  for some appropriately chosen $L_0>0$ (defined in \eqref{eq:L0} in the Appendix).
Then the iterates $\{\bTheta^k\}$ generated by the algorithm satisfies:
\begin{equation*} 
    \frac{1}{K}\sum_{k=0}^{K-1}\mathbb{E}_t[\|\nabla F^t(\bTheta^k)\|^2] \leq\frac{2F^t(\bTheta^0)+2\tilde{C}}{\sqrt{K}},
\end{equation*}
where $\tilde{C}$ is some universal constant,  dependent on Assumption \ref{ass_1},  \ref{ass_unbiased} and $C_y$. 
\end{theorem}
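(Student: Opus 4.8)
The plan is to treat the stochastic compositional update \eqref{eq.SCSC-1}--\eqref{eq.SCSC-2} as an inexact SGD on the single-level objective $F^t$, and to control the two sources of error — the gradient-estimation bias coming from plugging $\by^{k+1}$ in place of $\dg(\bTheta^k)$, and the ordinary stochastic-gradient variance — using a potential function that combines $F^t(\bTheta^k)$ with the tracking error $\|\dg(\bTheta^{k-1})-\by^k\|^2$. First I would invoke Theorem \ref{thm-det} / Lemma \ref{lemma-lip} to fix the $\bar L$-smoothness of $F^t$, then apply the descent inequality along the iterates: $F^t(\bTheta^{k+1}) \le F^t(\bTheta^k) - \alpha_k \langle \nabla F^t(\bTheta^k), \mathbf{d}^k\rangle + \tfrac{\bar L \alpha_k^2}{2}\|\mathbf{d}^k\|^2$, where $\mathbf{d}^k$ is the actual (biased) direction used in line 5--6. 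Taking conditional expectation over $\xi^k,\phi^k$ given $\mathcal{F}^k$, the cross term splits into $-\alpha_k\|\nabla F^t(\bTheta^k)\|^2$ plus an inner product of $\nabla F^t(\bTheta^k)$ with the bias $\mathbb{E}_t[\mathbf{d}^k\mid\mathcal{F}^k]-\nabla F^t(\bTheta^k)$; using Assumption \ref{ass_1}, the bound $\|\by^k\|\ge C_y$, and the Lipschitz continuity of $\nabla_1 f$ in its first argument, this bias is $O(\|\dg(\bTheta^k)-\by^{k+1}\|)$, so by Young's inequality the cross term is absorbed into $-\tfrac{\alpha_k}{2}\|\nabla F^t(\bTheta^k)\|^2 + c\,\alpha_k\|\dg(\bTheta^k)-\by^{k+1}\|^2$. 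The second-moment term $\mathbb{E}_t\|\mathbf{d}^k\|^2$ is bounded by a constant (again Assumption \ref{ass_1}, Lemma \ref{ass_bv_remark}, and $C_y$), contributing an $O(\alpha_k^2)$ term.

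Next I would bring in Lemma \ref{lemma2} to handle the tracking error: its right-hand side has the contraction factor $(1-\beta_k)^2$ on $\|\dg(\bTheta^{k-1})-\by^k\|^2$, a term proportional to $\|\bTheta^k-\bTheta^{k-1}\|^2 = \alpha_{k-1}^2\|\mathbf{d}^{k-1}\|^2 = O(\alpha_{k-1}^2)$, and a $2\beta_k^2 V_g^2$ term. Defining the potential $\mathbb{V}^k := F^t(\bTheta^k) + \eta\,\|\dg(\bTheta^{k-1})-\by^k\|^2$ for a suitably small constant $\eta>0$, I would add the descent inequality to $\eta$ times the tracking inequality so that the $+c\,\alpha_k\|\dg(\bTheta^k)-\by^{k+1}\|^2$ term produced by the bias is dominated by the $-\eta[1-(1-\beta_k)^2]\|\dg(\bTheta^k)-\by^{k+1}\|^2$ contraction (this requires $\alpha_k \le \tfrac{\eta\beta_k}{c}$ roughly, which is exactly the role of the coupling $\alpha_k = \beta_k/L_0$ with $L_0$ large enough — this is where the constant $L_0$ in \eqref{eq:L0} gets defined). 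This yields a one-step inequality of the form $\mathbb{E}_t[\mathbb{V}^{k+1}\mid\mathcal{F}^k] \le \mathbb{V}^k - \tfrac{\alpha_k}{2}\|\nabla F^t(\bTheta^k)\|^2 + C'\beta_k^2$ for a universal constant $C'$.

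Finally I would telescope from $k=0$ to $K-1$, take total expectations, use $F^t(\bTheta^K)\ge F^{t,*}$ (bounded below by Assumption \ref{ass_1}) and the initialization of $\by^0$ to bound $\mathbb{V}^0$ by $F^t(\bTheta^0)+\tilde C$, and choose the constant stepsizes $\alpha_k = \alpha = \Theta(1/\sqrt K)$, $\beta_k=\beta=\Theta(1/\sqrt K)$ consistent with $\alpha_k=\beta_k/L_0$. Then $\tfrac{\alpha}{2}\sum_{k=0}^{K-1}\mathbb{E}_t\|\nabla F^t(\bTheta^k)\|^2 \le \mathbb{V}^0 - F^{t,*} + C'K\beta^2$, and dividing by $\alpha K = \Theta(\sqrt K)$ gives the claimed $O(1/\sqrt K)$ bound, with $\tilde C$ collecting the tracking-error initialization and the $C'$ term. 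The main obstacle I anticipate is the careful bookkeeping of the bias term: unlike the plain compositional problem \eqref{eq.composite} treated in \cite{chen2020solving}, here $f$ depends on $\bTheta$ directly through $\nabla_2 f$, so the estimator $\mathbf{d}^k$ has an extra component and I must verify that its conditional bias is still controlled solely by the tracking error $\|\dg(\bTheta^k)-\by^{k+1}\|$ (the $\nabla_2 f$ part is exactly unbiased by Assumption \ref{ass_unbiased}, so the bias really does come only through $\nabla_1 f$'s first argument) — making the dependence on $C_y$ and the Lipschitz/boundedness constants explicit is the delicate part, and it is what forces the "refined analysis" remark after Lemma \ref{ass_bv_remark}.
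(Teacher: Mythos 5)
Your overall route coincides with the paper's: establish the $\bar L$-smoothness of $F^t$, apply the descent lemma along the actual update direction, bound the bias of that direction by the tracking error $\|\dg(\bTheta^k)-\by^{k+1}\|$ and its second moment by a constant (using Assumption \ref{ass_1} and the lower bound $\|\by^k\|\ge C_y$), invoke Lemma \ref{lemma2} for the contraction of the tracking error, combine everything into a Lyapunov function of the form $F^t(\bTheta^k)+\mathrm{const}\cdot\|\dg(\bTheta^{k-1})-\by^k\|^2$, couple the stepsizes as $\alpha_k=\beta_k/L_0$, and telescope with $\alpha_k=\Theta(1/\sqrt K)$. The paper does exactly this, with the cosmetic difference that it fixes the Lyapunov coefficient at one and performs the Young step with parameter $\beta_k$, so that $(1+\beta_k)(1-\beta_k)^2\le 1$ does the absorption, and it uses the initialization $\by^0=g(\bTheta^{-1})$ so that ${\cal V}^0=F^t(\bTheta^0)$.

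The one genuine flaw is your closing claim that ``the $\nabla_2 f$ part is exactly unbiased by Assumption \ref{ass_unbiased}, so the bias really does come only through $\nabla_1 f$'s first argument.'' Assumption \ref{ass_unbiased} gives unbiasedness of $\nabla_2 f(\bz;\bTheta;\xi)$ only for a deterministic first argument $\bz$; in the update \eqref{eq.SCSC-1} the first argument is $\by^{k+1}$ (which in addition depends on $\phi^k$), so even after conditioning, the $\nabla_2$ piece estimates $\nabla_2\df(\by^{k+1};\bTheta^k)$ and not the component $\nabla_2\df(\dg(\bTheta^k);\bTheta^k)$ of $\nabla F^t(\bTheta^k)$. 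It therefore carries a bias of order $L_{f_{21}}\|\dg(\bTheta^k)-\by^{k+1}\|$, entirely parallel to the $\nabla_1$ piece; this is precisely the ``extra term'' the paper's remark flags relative to \cite{chen2020solving}, and it is why $L_0=C_g^2L_{f_{11}}^2+L_{f_{21}}^2$ in \eqref{eq:L0} contains $L_{f_{21}}$ (both Lipschitz constants being supplied, via Lemma \ref{lemma-lip-stoc}, by the assumption $\|\by^k\|\ge C_y$). The slip is repairable without changing your architecture, since the extra bias is still of the order of the tracking error and is absorbed by the same Young/contraction step, but as written your justification of the key bias bound is incorrect, and a complete proof must bound the $\nabla_2$ mismatch exactly as the paper does (Cauchy--Schwarz plus $L_{f_{21}}$-Lipschitzness in the first argument).
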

\begin{proof}
The full proof is relegated to Appendix \ref{eq.theorem1-proof}, and    $\tilde{C}$ is defined in \eqref{ctilde}. 
\end{proof}
{\remark The key idea of the proposed method is to use an auxiliary variable $\by$ to track the expected value $\mathbb{E}_t[g(\bTheta;\phi)]$ (or equivalently $\dg(\bTheta)$). Lemma \ref{lemma2} shows that the tracking error $\|\by - \dg(\bTheta)\|$ is shrinking given that $\alpha$ and $\beta$ are small. Theorem \ref{theorem1} implies that, for a given time instance $t$, the sequence $\{(\bTheta^k, \by^k)\}_{k=1}^{K}$ converges in the order of ${\cal O}(K^{-\frac{1}{2}})$, which is  the same order achieved by generic SGD methods for  non-compositional non-convex problems.
}

 Note that compared with Theorem \ref{thm-det}, we have made an additional assumption that the size of the iterates $\{\by^k\}$  is bounded away from zero. Although such an assumption cannot be verified {\it a priori},  in our numerical result it appears to always hold. Intuitively, this assumption makes sense since $\by$ tracks $\dg(\cdot)$, and $\dg(\cdot)$ is bounded  away from zero by its definition \eqref{eq-relation}. Therefore, as long as the tracking error is small (cf. Lemma \ref{lemma2}), we can assume $\by^k$  to be bounded  away from zero. 

Nonetheless,  we would like to emphasize that, the main contribution of this work is the development of the CL formulation and approximation problem  \eqref{bilevel-closeform}, as well as a set of practical algorithms for solving them. The convergence analysis helps us justify our design principle, but ultimately the efficiency of the proposed formulation and algorithms has to be tested in practice.  This is what we plan to do in the next section.

\section{Experimental Results} \label{simulation}
In this section, we illustrate the performance of the proposed CL framework.  
We choose two applications 
where the end-to-end learning based DNN is used: 1) power control for weighted sum-rate (WSR) maximization problem \cite{sun2018learning} with single-input single-output (SISO) interference channel defined in  \eqref{eq:sum_rate_IA};
 2) coordinated beamforming problem for the millimeter wave system \cite{alkhateeb2018deep},  with up to 256 antennas per BS.

\subsection{Simulation Setup}
The experiments are conducted on Ubuntu 18.04 with Python 3.6, PyTorch 1.6.0, and MATLAB R2019b on one computer with two 8-core Intel Haswell processors and 128 GB of memory. The codes are made available online through \url{https://github.com/Haoran-S/TSP_CL}.

\subsection{Randomly Generated Channel}\label{subsection: Randomly Generated Channel}
We first demonstrate the performance of our proposed framework using randomly generated channels, for a scenario with $K=10$ transmitter-receiver pairs. We choose three standard types of random channels used in previous resource allocation literature \cite{sun2018learning,liang2019towards} stated as following:

\noindent{\bf Rayleigh fading:} Each channel coefficient $h_{ij}$ is generated according to a standard normal distribution, i.e.,  
\begin{align}\label{Rayleigh}
   {\sf Re}(h_{ij})\sim \frac{\mathcal{N}(0, 1)}{\sqrt{2}}, \; {\sf Im}(h_{ij})\sim \frac{\mathcal{N}(0, 1)}{\sqrt{2}},  \; \forall i, j \in \mathcal{K}. 
\end{align} 
\\\noindent{\bf Rician fading:} Each channel coefficient $h_{ij}$ is generated according to a Gaussian distribution with 0dB $K$-factor, i.e., 
  $${\sf Re}(h_{ij})\sim \frac{1+\mathcal{N}(0, 1)}{2}, \; {\sf Im}(h_{ij})\sim \frac{1+\mathcal{N}(0, 1)}{2}, \; \forall i, j \in \mathcal{K}.$$ 
\\\noindent{\bf Geometry channel:}
All transmitters and receivers are uniformly randomly distributed in a R$\times$R area. The channel gains $| h_{ij}|^2$  follow the pathloss function
\begin{align*}
    | h_{ij}|^2= \frac{1}{1+d_{ij}^2} | f_{ij}|^2, \; \forall~i,j,
\end{align*}
where $f_{ij}$ is the small-scale fading coefficient follows $\mathcal{CN}(0, 1)$, $d_{ij}$ is the distance between the $i$th transmitter and $j$th receiver.

Then we use these coefficients to generate four different episodes: the Rayleigh fading channel, the Ricean fading channel, and the geometry channel (with nodes distributed in a $10m \times 10m$ and a $50m \times 50 m$ area, respectively). {We use such drastically changing environments to simulate (perhaps overly harsh) ``toy" scenarios.}
Later we will utilize real data to generate more practical scenarios. For each episode, we generate $20,000$ channel realizations for training and $1,000$ for testing. We also stacked the test data from all episodes to form a mixture test set, i.e., containing $4,000$ channel realizations. During the training stage, a total of $80,000$ channel realizations are available. A batch of $5,000$ realizations is  revealed each time, and the memory space contains only $2,000$ samples from the past. That is, $|\mathcal{D}_{1:16}| = 80,000$, $|\mathcal{D}_{t}|=5,000, |\mathcal{M}_t|=2,000, \; \forall~t$. 

For the data-driven model, we use the end-to-end learning based fully connected neural network model as implemented in  \cite{sun2018learning}. For each data batch of $5,000$ realizations at time $t$, we train the model $\bTheta_t$ using the following six different approaches for $20$ epochs (with the previous  model $\bTheta_{t-1}$ as initialization):
\begin{enumerate}
    \item Transfer learning (``TL") \cite{shen2019lorm} -- update  the  model  using the current data batch (a total of $5,000$ samples);

    \item Reservoir sampling based CL (``Reservoir")  \cite{isele2018selective} -- update the model using both the current data batch and the memory set (a total of $7,000$ samples), where data samples in the memory set are uniformly randomly sampled from the streaming episodes;

    \item Proposed fairness based CL (``Bilevel") in Algorithm \ref{algo_PGD2}  -- update the model using both the current data batch and the memory set (a total of $7,000$ samples), where data samples in the memory set are selected   according to the proposed data-sample fairness criterion \eqref{bilevel-ep}  using  Algorithm \ref{algo_PGD2}. Unless otherwise specified, as suggested in Section \ref{section-proposed},  the training loss $\ell(\cdot)$ is chosen as the MSE loss \eqref{MSE-loss}, the  system performance loss $u(\cdot)$ is chosen as the adaptive  weighted  negative  sum-rate loss \eqref{sumrate-loss}, and the weights is chosen as the sum-rate  achievable  by  the WMMSE method \cite{shi2011iteratively}. 
    
    \item Proposed minimax based CL (``Minimax")  {-- this is the special case of Algorithm 1, as described in remark \ref{minmax-algorithm}; In particular, we} update the model using both the current data batch and the memory set (a total of $7,000$ samples), where data samples in the memory set are selected according to the proposed minimax criterion \eqref{eq:minimax},   the training loss $\ell(\cdot)$ and the system performance loss $u(\cdot)$ are chosen as the MSE loss \eqref{MSE-loss}; The model is trained using  the gradient descent ascent (GDA) \cite{lin2020gradient}.
    
     \item Joint training (``Joint (equal)") -- update the model using all accumulated data up to current time (up to $80,000$ samples); All data points are treated equally, that is, $\lambda^{(i)}$ in \eqref{eq:bilevel} are equal for all $i$, and there is no lower level problem.
    
    \item Joint training (``Joint (weighted)") -- update the model using all accumulated data up to current time (up to $80,000$ samples); The proposed fairness based formulation \eqref{bilevel-closeform} is applied but replace the training set $\mathcal{M}_t \cup \mathcal{D}_t$ with all accumulated data.

\end{enumerate}

The simulation results of six different approaches are compared and shown in Fig.  \ref{fig-random-rate}.   Specifically, each subplot of Fig. \ref{fig-random-rate} (a) shows the performance of the time-varying models trained by different approaches as training data is streaming in, while evaluated at test samples drawing from the  episode specified by each subplot. The grey lines indicate the transition points for two consecutive training episodes. The $x$-axis represents the number of  training data that has been seen by the model, while the $y$-axis  represents the sum-rate achieved on the test data. 
 Fig. \ref{fig-random-rate} (b) shows the average of all four subplots from Fig. \ref{fig-random-rate} (a). 
Note that the joint training method uses up to $80,000$ in memory spaces and thus violates our memory limitation (i.e., $7,000$ in total), and the transfer learning method adapts the model to new data each time and does not use any additional memory spaces. One can observe that the proposed CL based methods perform well over all tasks, nearly matching the performance of the joint training method, whereas the TL  suffers from some significant performance loss as the ``outlier" episode comes in (i.e., geometry channel in our case).

\begin{figure}
\centering
\begin{minipage}[c]{0.4\linewidth}
    \centering
    \includegraphics[width= \linewidth]{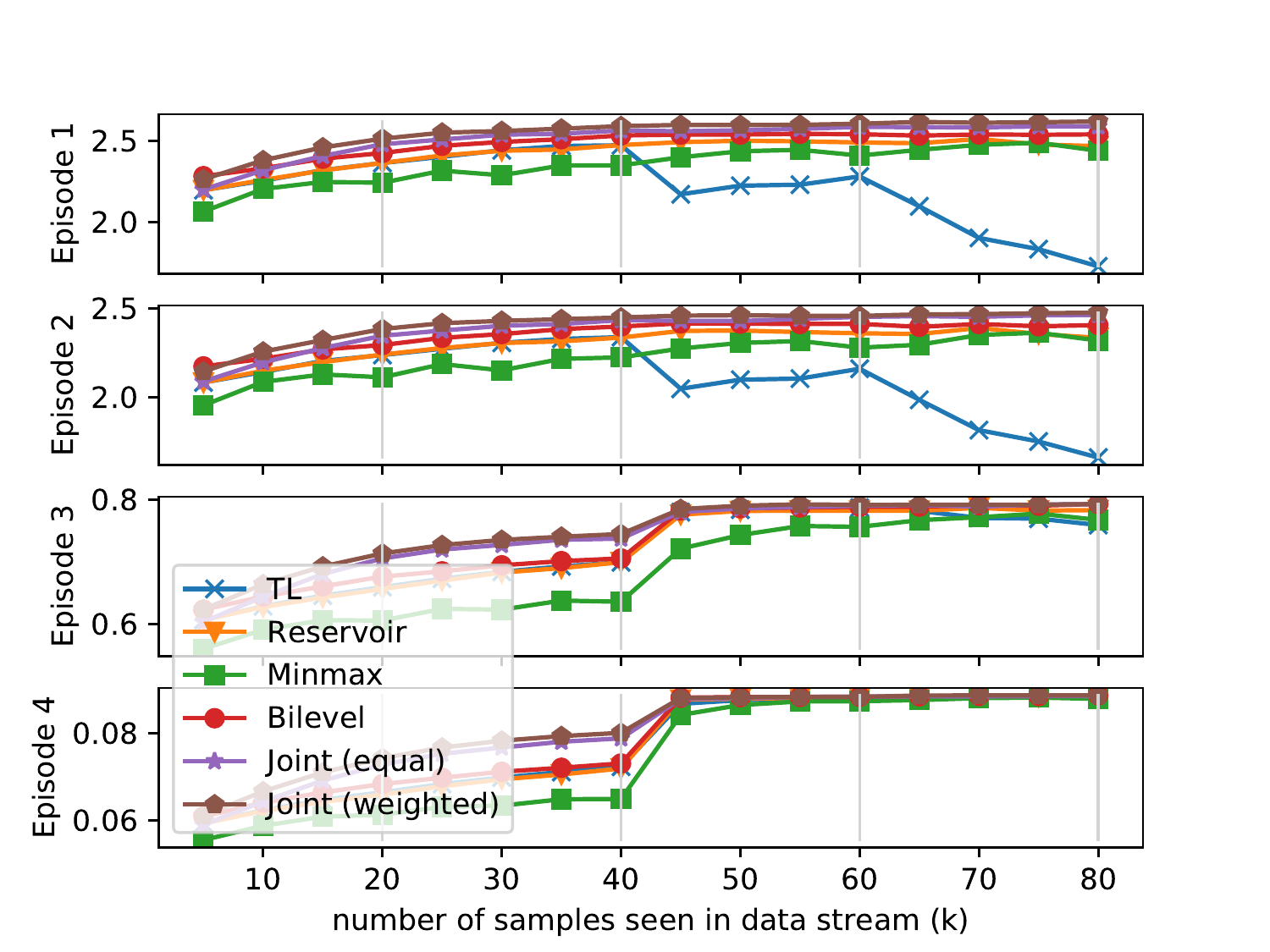}
    {\footnotesize (a) Test performance for each episode}
\end{minipage}
\noindent
\begin{minipage}[c]{0.4\linewidth}
    \centering
    \includegraphics[width= \linewidth]{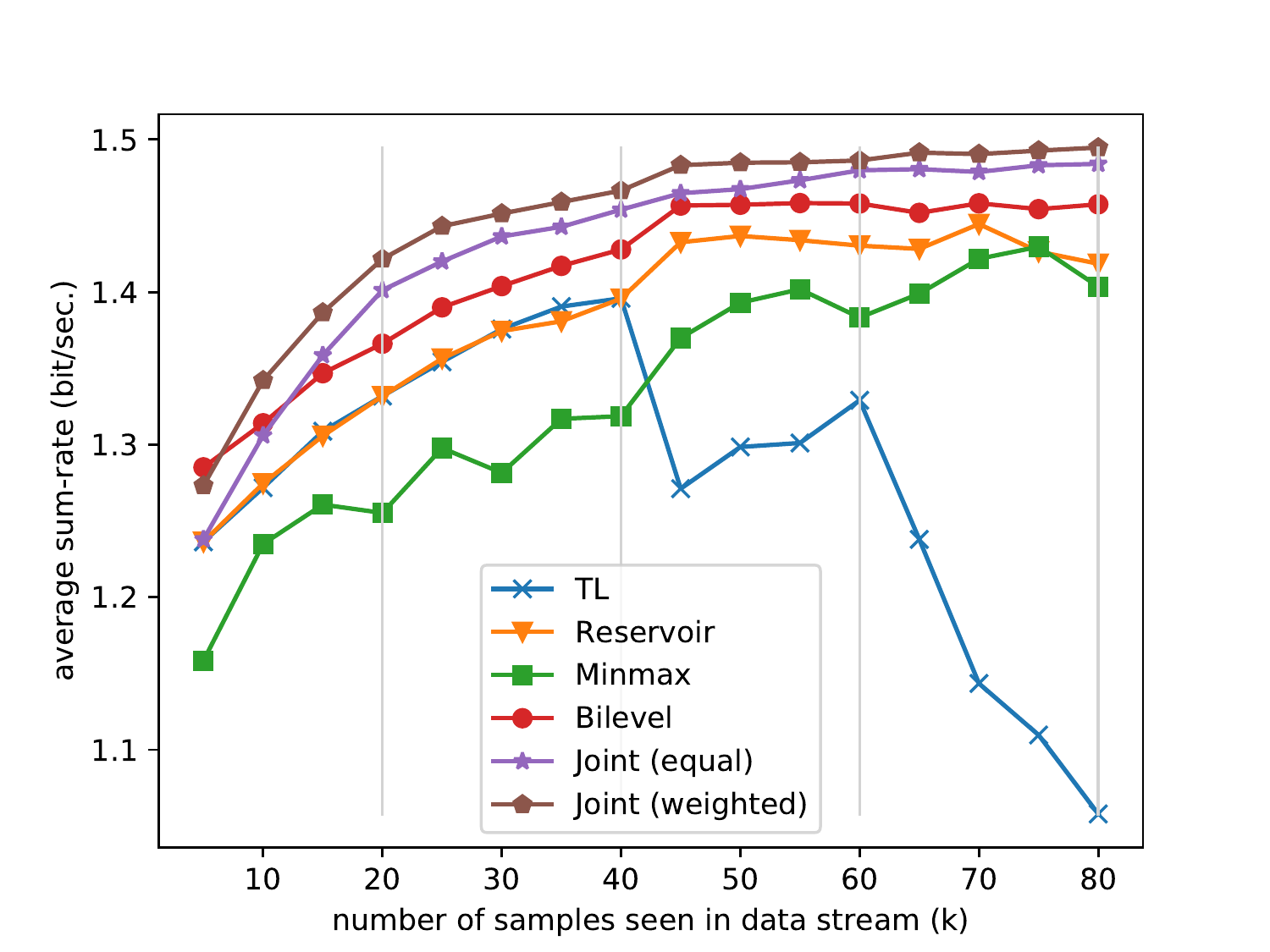}
    {\footnotesize (b) Average test performance for all episodes}
\end{minipage}
\caption{Testing sum-rate performance on randomly generated channels for (a) each individual episode and (b) average of all episodes. Each sub-figure of (a) represents the testing performance on the data generated from a particular episode (indicated in the y-axis). The grey line indicates the time instances where a new episode starts, which is unknown during training time. }
\label{fig-random-rate}
\end{figure}

\begin{figure}
\centering
\begin{minipage}[c]{0.5\linewidth}
    \centering
    \includegraphics[width=\textwidth]{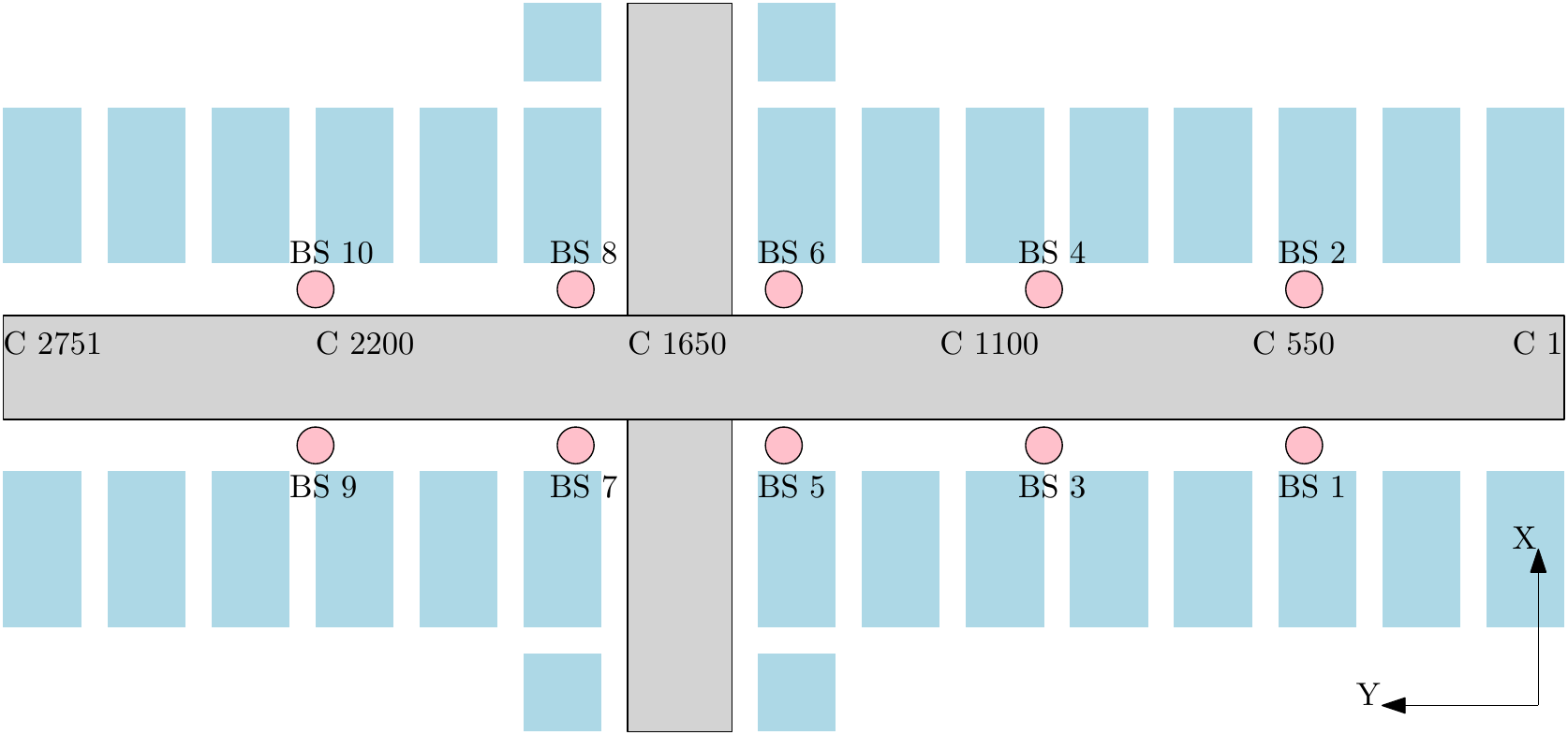}
\end{minipage}
    \caption{Top view of the DeepMIMO ray-tracing scenario \cite{Alkhateeb2019}, showing the two streets (grey rectangular), the buildings (blue rectangular), and the 10 base stations (red circle).}
    \label{location}
\end{figure}

\subsection{Real Measured Channel}

To validate our approach under more realistic scenarios, we further consider the outdoor `O1' ray-tracing scenario generated from the DeepMIMO dataset \cite{Alkhateeb2019}.
 The used dataset consists of two streets and one intersection, with the top-view showed in Fig. \ref{location}. 
The user grid is located along the horizontal street, with a length of 550m and a height of 36m. This street is divided into a $181 \times 2751$  grid, and the users could be located on any grid point. We index the first column from the right as $C1$, and the last column from the left as $C2751$. 
	
An episode is generated by using a particular user distribution. More specifically, the users for episode 1 are all drawn from columns C551 - C1100, and similarly, users from episode 2-3 are from C1101 - C1650, and C1651 - C2200, respectively. For each episode, we generate $20,000$ channel realizations for training and $1,000$ for testing. For each channel realization, we generate the channel based on 10 BSs (i.e., red circles in Fig. \ref{location}), and randomly pick $K = 10$ user locations from the selected user population. The BS is equipped with single antenna and has the maximum transmit power $p_k=30$dBm. The noise power is set to $-80$ dBm.  
 
\subsubsection{Average sum-rate comparison}
\label{sec:sum-rate-compare}

\begin{figure}
\centering
\begin{minipage}[c]{0.4\linewidth}
    \centering
    \includegraphics[width=\textwidth]{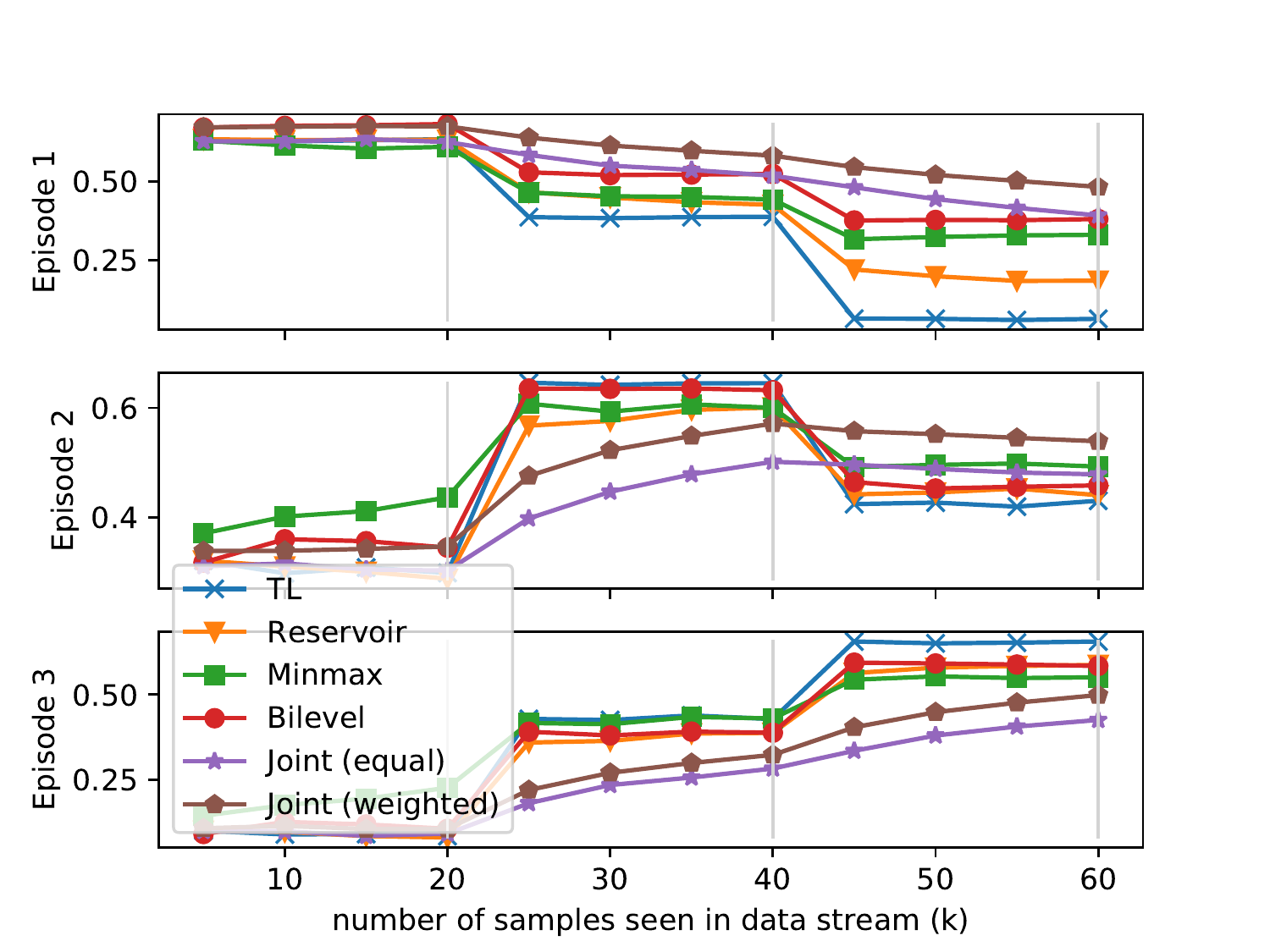}
    \footnotesize{(a) Test performance for each episode }
\end{minipage}
\noindent
\begin{minipage}[c]{0.4\linewidth}
    \centering
    \includegraphics[width=\textwidth]{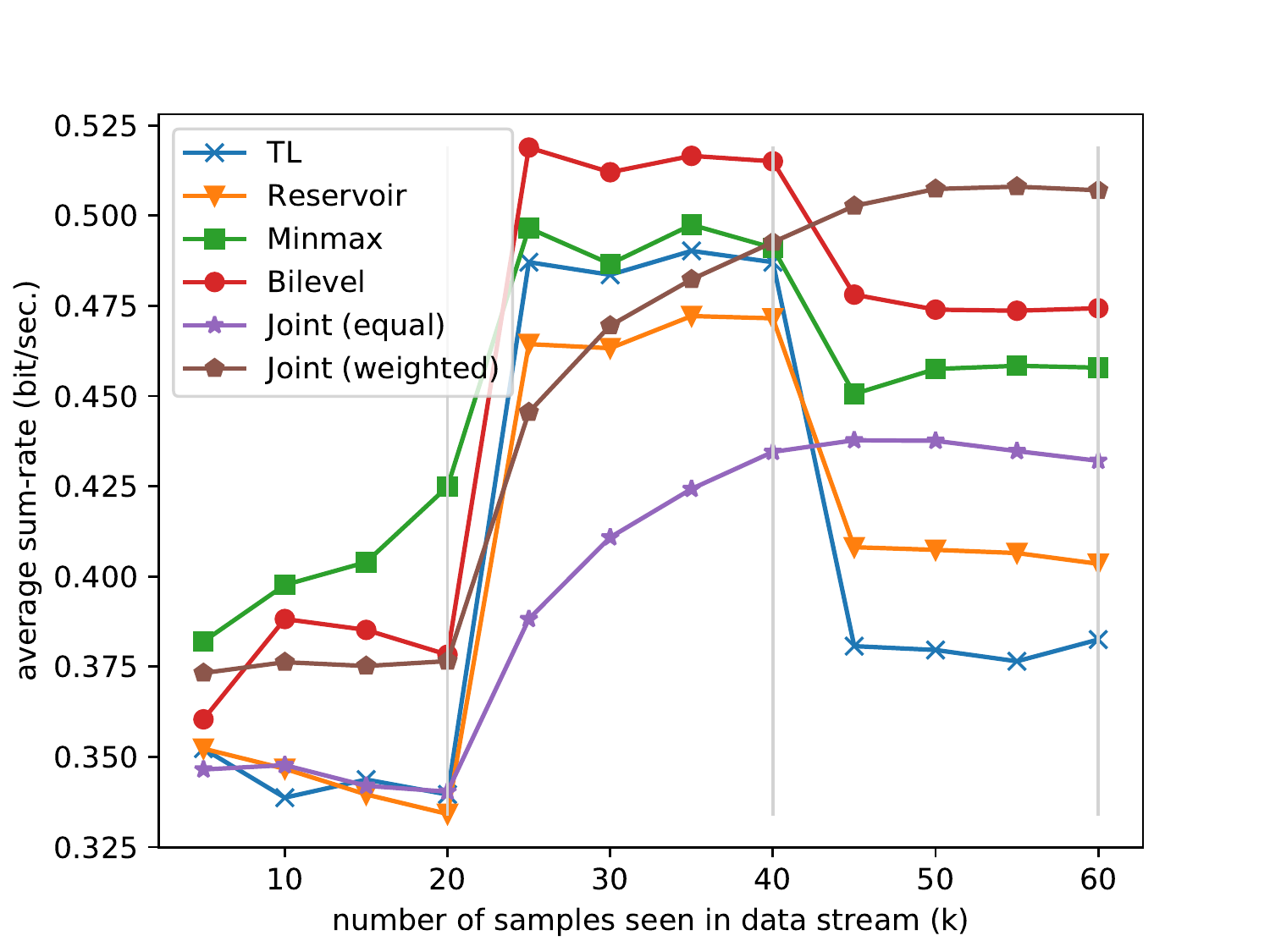}
    \footnotesize{(b) Average test performance for all episodes}
\end{minipage}
\caption{Average sum-rate comparison on real measured channels}
\label{mimo_fx3}
\end{figure}

We first show the system performance measured  by the achieved sum-rate for different approaches in Fig. \ref{mimo_fx3}. For each subplot of  Fig.  \ref{mimo_fx3} (a),  it displays a similar result as each subfigure in Fig. \ref{fig-random-rate} (a). 
It can be observed that, after experiencing all the samples $(x=60,000)$, our proposed fairness based method obtains reasonable sum-rate for all three episodes,  while the performance of both TL and reservoir sampling degrades when encountering test data from the old episodes. This can be attributed to the fact that the proposed method can focus on under-performing episodes (i.e. episode 1 and 2) while relaxing on outperforming episodes (i.e. episode 3). If we further average the sum-rate performance on all three episodes from Fig. \ref{mimo_fx3} (a), we obtained Fig. \ref{mimo_fx3} (b), in which it is clear that our proposed method is able to perform much better than TL and reservoir sampling.

Another interesting observation  (from subplot 3 of Fig. \ref{mimo_fx3} (a)) is that the proposed method is able to outperform the joint training (which uses the accumulated data) in terms of the average sum-rate, as can be seen in  Fig. \ref{mimo_fx3} (b) for $40,000 \le x \le 60,000$. One possible explanation is that the joint training will treat all samples equally, and thus only $1/3$ of training data will contribute to improve the performance of episode 3, resulting a slow adaption to new episodes. Instead, our proposed fairness based method  focuses more on data points that generate the highest cost, so it achieves higher average performance.

\begin{figure}
\centering
\begin{minipage}[c]{.4\linewidth}
    \centering
    \includegraphics[width=\textwidth]{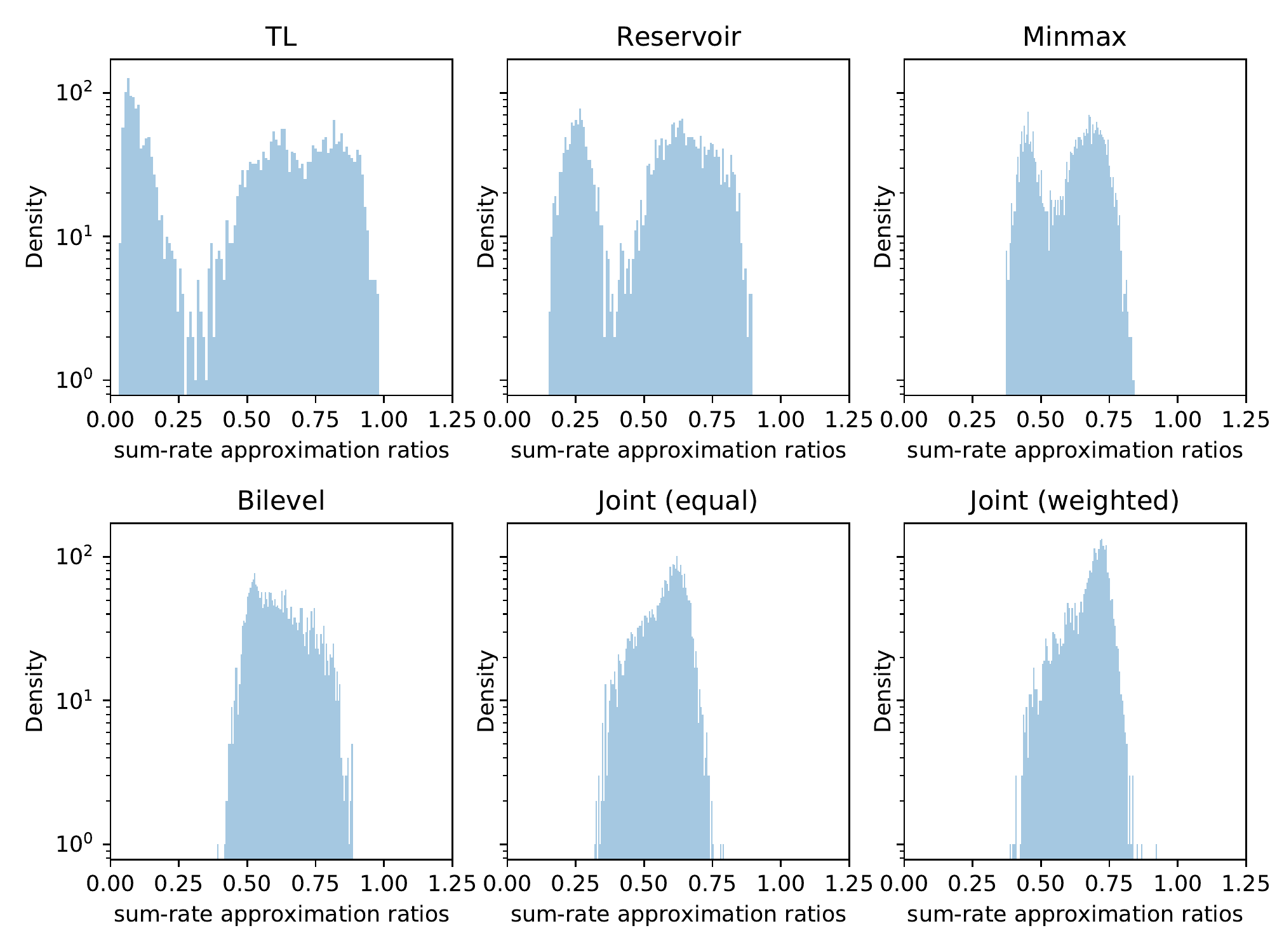}
    
    \footnotesize{(a) Probability Density Functions (PDF)}
\end{minipage}
\noindent
\begin{minipage}[c]{.4\linewidth}
    \centering
    \includegraphics[width=\textwidth]{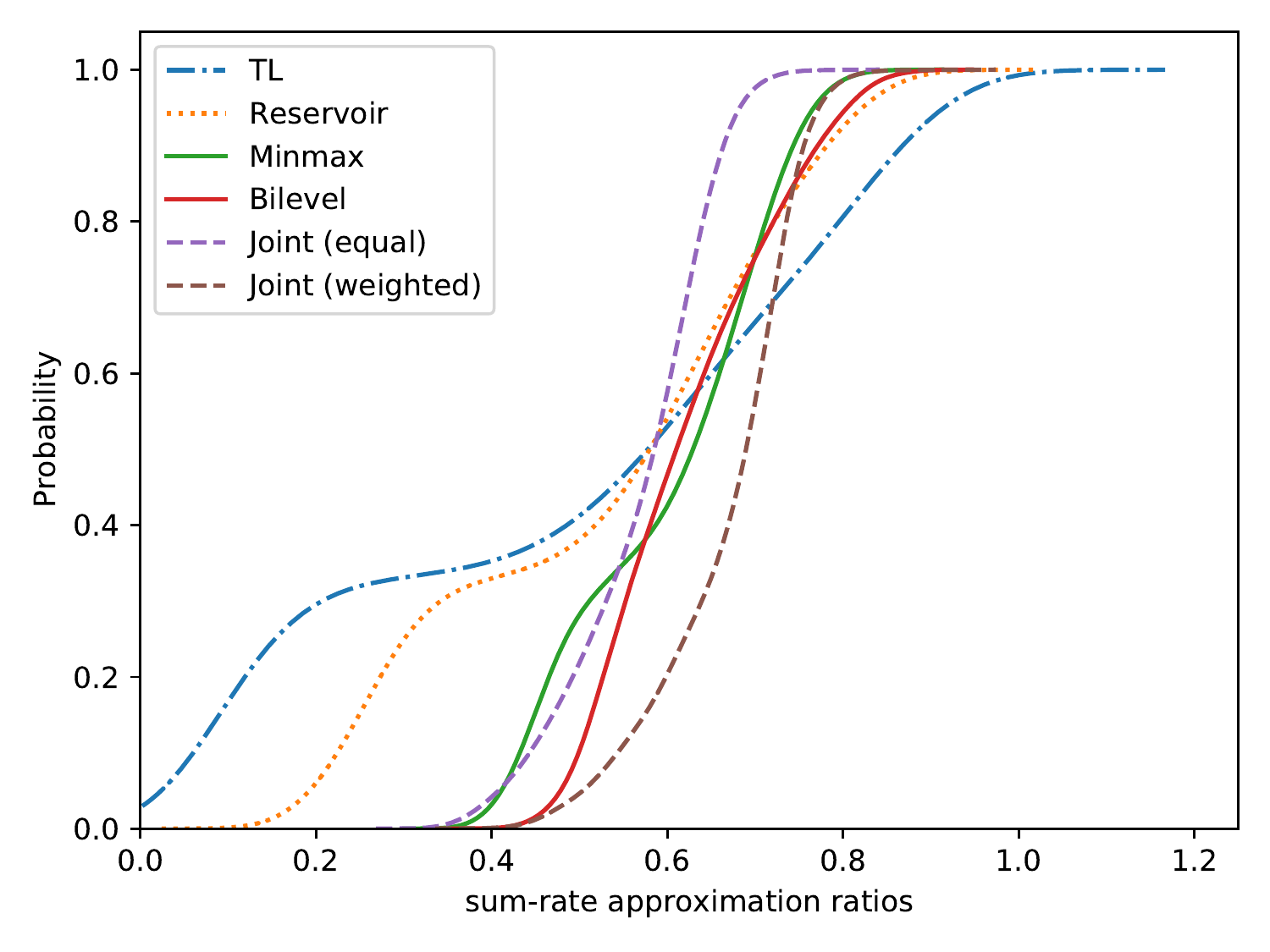}

    \footnotesize{(b) Cumulative Distribution Function (CDF)}
\end{minipage}
\caption{{\bf Fairness Comparison.} (a) PDF and (b) CDF distribution of the per-sample sum-rate  ratio evaluated at the last time stamp $(x=60,000)$ on the test set of all three episodes.  }
\label{mimo_fx3_fairness}
\end{figure}

\subsubsection{Fairness comparison} 
Next, we show that the proposed CL method outperforms other CL-based methods, not only in terms of the average sum-rate, but also in the sample fairness over all tasks. In Fig. \ref{mimo_fx3_fairness}, we show the test data  sum-rate ratio distributions for the final models generated by different approaches (i.e., when all the models have seen all $60,000$ data points).
 Specifically, the sum-rate ratio $R(\pi(\bTheta, \bx^{(i)}); \bx^{(i)}) / {\bar{R}(\bx^{(i)})}$ is computed according to Remark \ref{remark1}. That is, for a given test sample $\bx^{(i)}$, we divide the achievable sum-rate generated from the learning model, by what is achievable by the WMMSE algorithm \cite{shi2011iteratively}. It can be observed that our proposed approach contains fewer samples in the low sum-rate region, while TL and reservoir sampling perform worse on those data points. This result suggests that that the proposed approach indeed incorporates the problem structure and advocates fairness across the data samples.

\subsubsection{Gradual scenario change}
In the previous experiments, the sets of users from different episodes do not overlap with each other. That is, we were simulating scenarios where the environment is experiencing some {\it rapid} changes. In this subsection, we further simulate scenarios where the environment changes {\it slowly}. Towards this end,  we generate episodes such that the neighboring ones share some common areas. Specifically, we have five episodes, and users for episode 1 to 5 are drawn from columns C551-C1100, C826-C1375, C1101-C1650, C1376-C1925, and C1651-C2200, respectively. Simulation results are shown in Fig. \ref{fig:slow}. It can be observed that our proposed methods are still effective under this setting.

\begin{figure}
\centering
\begin{minipage}[c]{.4\linewidth}
    \centering
    \includegraphics[width=\textwidth]{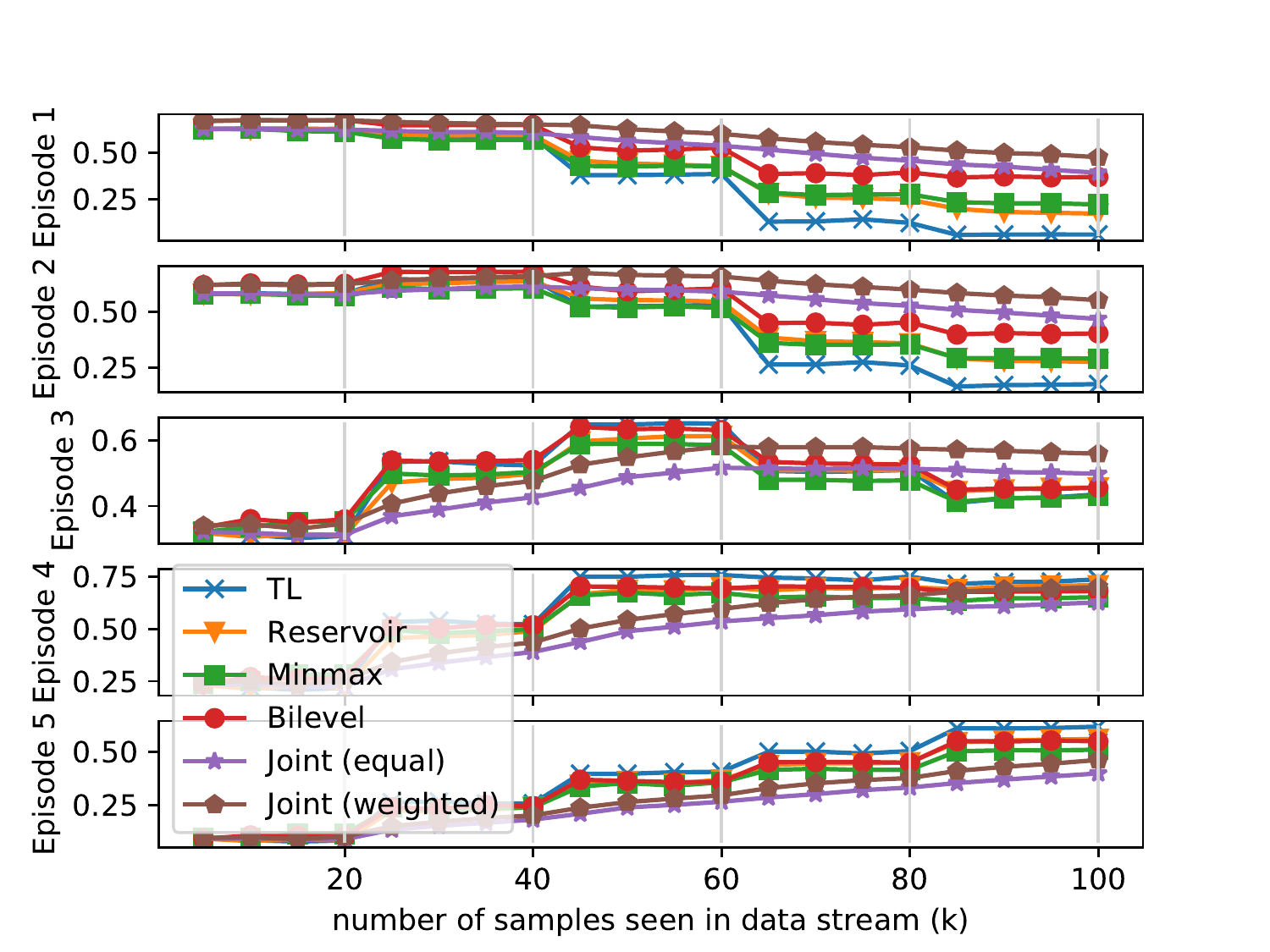}
    
    \footnotesize{(a) per-episode performance}
\end{minipage}
\begin{minipage}[c]{.4\linewidth}
    \centering
    \includegraphics[width=\textwidth]{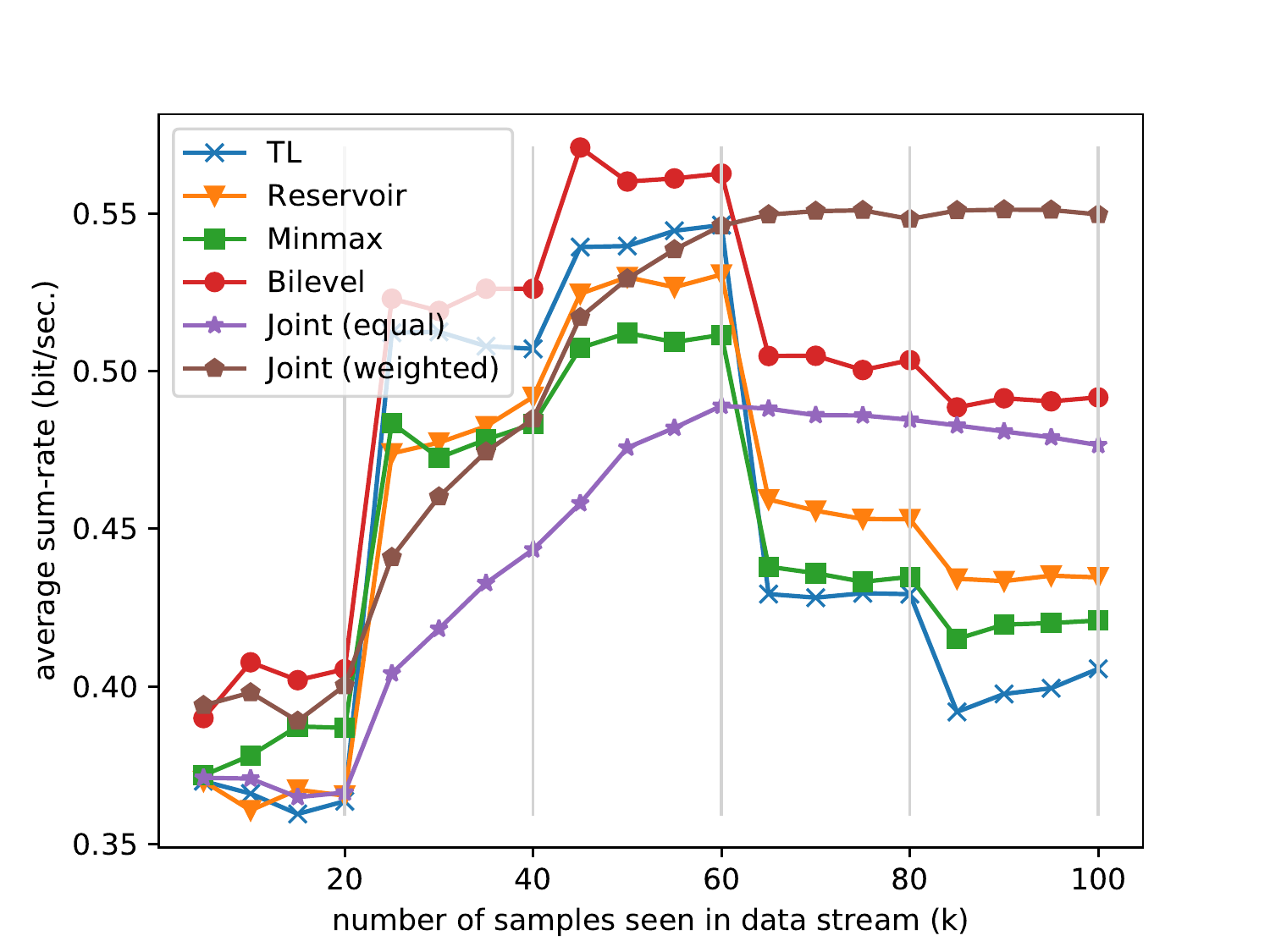}
    
    \footnotesize{(b) average performance}
\end{minipage}
\caption{Average sum-rate comparison on real measured channels over slowly changing environment}
\label{fig:slow}
\end{figure}

 \begin{figure*}
\centering
\begin{minipage}[c]{.3\linewidth}
    \centering
    \includegraphics[width=\textwidth]{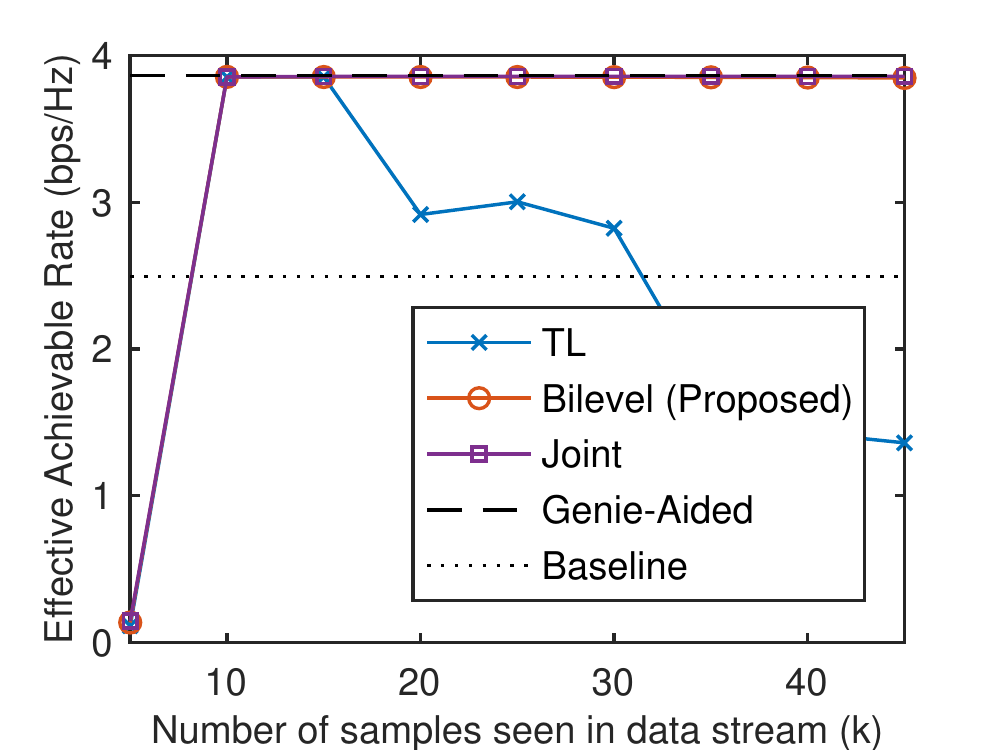}
    
    \footnotesize{(a) Test Results on Episode 1}
\end{minipage}
\begin{minipage}[c]{.3\linewidth}
    \centering
    \includegraphics[width=\textwidth]{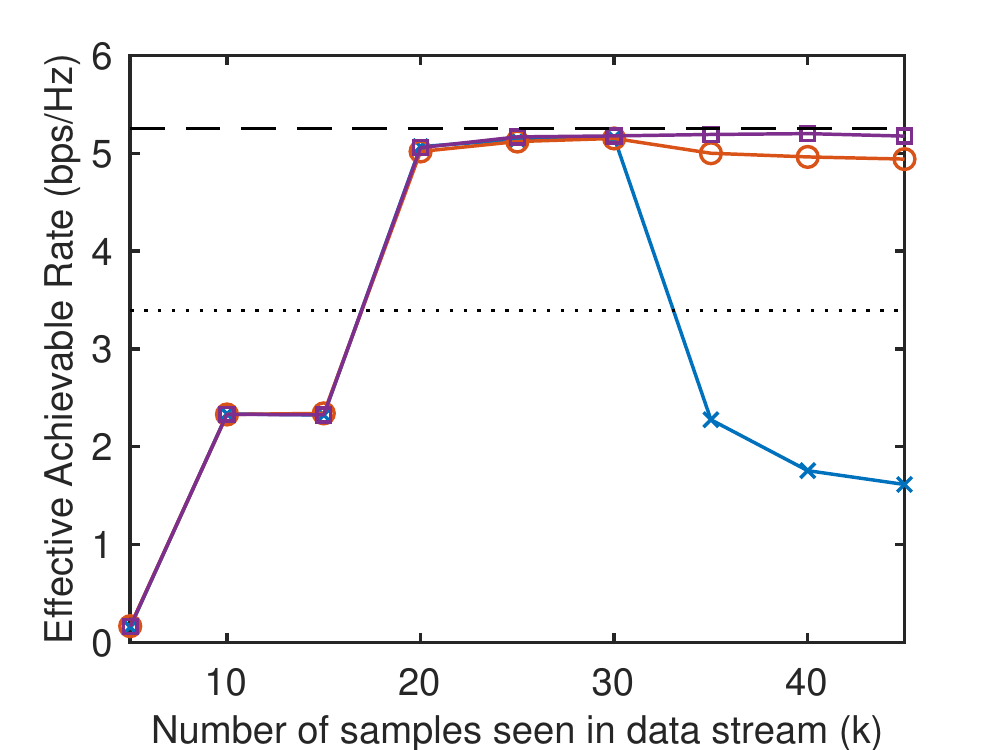}
    
    \footnotesize{(b) Test Results on Episode 2}
    
\end{minipage}
\begin{minipage}[c]{.3\linewidth}
    \centering
    \includegraphics[width=\textwidth]{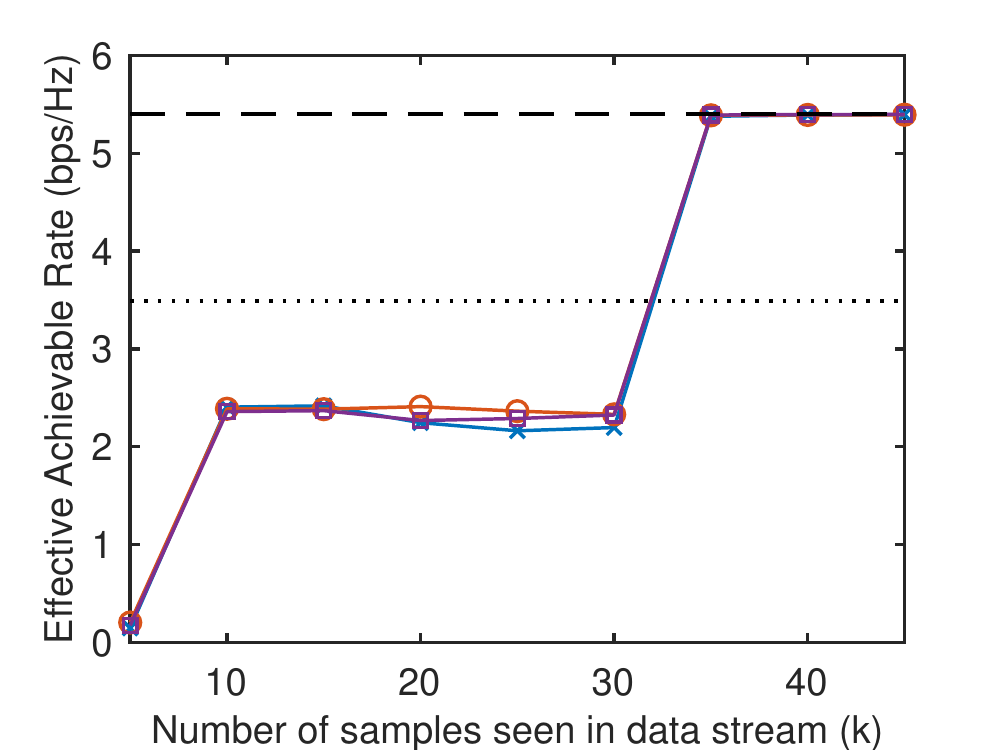}
    
    \footnotesize{(c) Test Results on Episode 3}
    
\end{minipage}

\caption{Achievable rate comparison of deep-learning coordinated beamforming strategies, genie-aided solution (perfectly knows the optimal beamforming vectors), and traditional mmWave beamforming techniques \cite{alkhateeb2018deep}.}
\label{beamform}
\end{figure*}

 \begin{figure}
\centering
\begin{minipage}[c]{.5\linewidth}
    \centering
    \includegraphics[width=\textwidth]{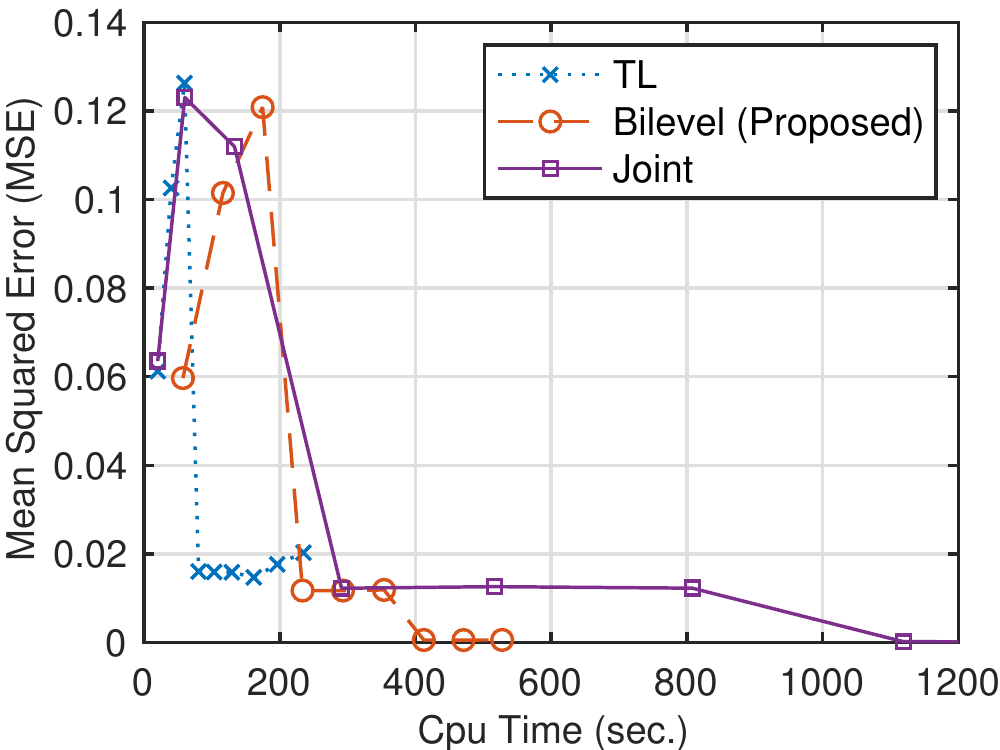}
\end{minipage}
\caption{ Training MSE performance over the mixed dataset of all episodes versus CPU time. }
\label{time}
\end{figure}

\subsection{Beamforming Experiments} \label{sec:beamform}
Next, we further validate our CL based approach, by applying it to a coordinated beamforming problem for the millimeter wave system, where a number of BSs are simultaneously serving one mobile user over the 60 GHz band \cite{alkhateeb2018deep}. Different from the previous sections where only single antenna is adopted, we consider the multi-antenna setup with four BSs (3,4,5,6 in Fig. \ref{location}), and each BS uses uniform planar array (UPA) consisting of a total of 256 antenna elements (32 columns and 8 rows), and use 30dBm transmit power.   

We adopt the problem formulation developed in \cite{sun2018learning,alkhateeb2018deep}, where the idea is to use the uplink pilot signal received at the terminal BSs with only omni or quasi-omni beam patterns to learn and predict the best RF beamforming vectors. 
The learning based method we adopt is the fully connected network as suggested in  \cite{sun2018learning,alkhateeb2018deep}.
By leveraging the intuition that the received signal renders an RF defining signature for the user location and its interaction with the surrounding environment, the authors of \cite{alkhateeb2018deep} showed that the DL solution performs almost as well as the genie-aided solution that perfectly knows the optimal beamforming vectors.

In our simulation, we define three episodes, where the user distributions are drawn from columns  C551 - C650, C826 - C925, C1101 - C1200, respectively. For each episode, we generate $15,000$ samples for training and $1,000$ for testing. The simulation results over different approaches are compared and reported in Fig. \ref{beamform}, where the $x$-axis represents the number of data samples that has been observed, and the $y$-axis denotes the  effective achievable rate. For our proposed approach (Bilevel), both the training loss $\ell(\cdot)$ and the system performance loss $u(\cdot)$ are chosen as the MSE loss \eqref{MSE-loss}. It can be observed that the proposed algorithm (Bilevel) almost matches the joint training and the optimal genie-aided performances, with only limited memory usage, and it outperforms the TL approach.  

 Lastly, we compare the computational cost of all methods during the entire training stage. We record the required training time for all approaches when they experiencing all three episodes, then plot their achieved training loss (i.e., the MSE loss in this case,  evaluated on the mixture dataset of all three episodes) versus the consumed cpu time in Fig. \ref{time}. It can be observed that the joint training and proposed CL approach can achieve zero training loss for all episodes (after 1,100 and 400 seconds, respectively), while the TL approach can never achieve zero training loss for all episodes although it takes less time. The proposed fairness based CL methods strike a good balance between the time complexity and the prediction accuracy.

\section{Conclusion and Future Works} \label{conclusion}

In this work, we design a new ``learning to continuously optimize'' framework for optimizing wireless resources in dynamic environments, where parameters such as CSIs keep changing. By introducing  continual learning (CL) into the modeling process, our framework is able to seamlessly  and  efficiently  adapt  to  the  episodically dynamic  environment,  without  knowing  the  episode boundary, and most importantly, maintain high performance over all the previously encountered scenarios. 
The proposed approach is validated through two popular wireless resource allocation problems (one for power control and one for beamforming), 
and uses both synthetic and ray-tracing based data sets. 
 Simulation results show that our framework is consistently better than naive transfer learning method, and it achieves better performance than classical CL based approaches.
Our empirical results make us believe  that the proposed approaches can be extended to many other related problems.

Our work represents a first step towards understanding the capability of deep learning for wireless problems with dynamic environments. There are many interesting questions to be addressed in the future, such as:
\begin{itemize}
    \item Is it possible to design theoretical results for the proposed stage-wise bilevel optimization problem \eqref{bilevel-ep} or even the global bilevel optimization problem \eqref{eq:bilevel}?
    \item Is it possible to quantify the generalization performance of the proposed fairness framework?
    \item Is it possible to extend our frameworks to other wireless tasks such as signal detection, channel estimation and CSI compression?
\end{itemize}

\bibliographystyle{IEEEtran}
\bibliography{CL,WC,ref}

\newpage
\section{Proof of Theorem \ref{theorem1}}
\begin{proof}\label{eq.theorem1-proof}
First, we need to establish the gradient smoothness condition of the  compositional function $\bF(\bTheta)$ as defined in \eqref{eq.stochastic.formulation}. That is, for some $L>0$, the following holds:
$$\|\nabla \bF(\bTheta) - \nabla \bF(\bTheta') \| \le L \|\bTheta - \bTheta'\|,$$
where the gradient is computed as
\begin{align*} 
    \nabla \bF(\bTheta)= &\nabla \dg(\bTheta) \nabla_1 \df(\dg(\bTheta), \bTheta)  + \nabla_2 \df(\dg(\bTheta), \bTheta).
\end{align*}
 The proof is relegated to Lemma \ref{lemma-lip-stoc} in the supplemental material \ref{sec-lemma} for completeness. 

Then, using the smoothness of $\nabla \bF(\bTheta^k)$, we have
{\begin{align*}
    & \ \ \ \bF(\bTheta^{k+1})\\
    &\leq \bF(\bTheta^k)+\dotp{\nabla \bF(\bTheta^k),\bTheta^{k+1}-\bTheta^k}+\frac{L}{2}\|\bTheta^{k+1}-\bTheta^k\|^2\\
    &\stackrel{\eqref{eq.SCSC-1}}{=} \bF(\bTheta^k)-\alpha_k\dotp{\nabla \bF(\bTheta^k),\nabla g(\bTheta^k;\phi^k)\nabla_1 f(\by^{k+1}, \bTheta;\xi^k)}\\
    &\ \ \  -\alpha_k\dotp{\nabla \bF(\bTheta^k), \nabla_2 f(\by^{k+1},\bTheta;\xi^k)}+\frac{L}{2}\|\bTheta^{k+1}-\bTheta^k\|^2\\
    &= \bF(\bTheta^k)-\alpha_k\|\nabla \bF(\bTheta^k)\|^2+\frac{L}{2}\|\bTheta^{k+1}-\bTheta^k\|^2\\
    &\ \ \ +\alpha_k\dotp{\nabla \bF(\bTheta^k),\nabla \dg(\bTheta^k)\nabla_1 \df(\dg(\bTheta^k), \bTheta)}\\
    &\ \ \ -\alpha_k\dotp{\nabla \bF(\bTheta^k), \nabla g(\bTheta^k;\phi^k)\nabla_1 f(\by^{k+1}, \bTheta;\xi^k)}\\
    &\ \ \  +\alpha_k\dotp{\nabla \bF(\bTheta^k), \nabla_2 \df(\dg(\bTheta^k),\bTheta)- \nabla_2 f(\by^{k+1},\bTheta,\xi^k)} .
\end{align*}
}
Conditioned on $\mathcal F^k$, taking expectation over the sampling process of $\phi^k$ and $\xi^k$ from the data set $\mathcal{M}_t \cup \mathcal{D}_t$ on both sides, we have
\begin{align*} 
    &\mathbb{E}_t\left[\bF(\bTheta^{k+1})|\mathcal F^k\right]\\
    \stackrel{(a)}{\leq}&\bF(\bTheta^k)-\alpha_k\|\nabla \bF(\bTheta^k)\|^2+\frac{L}{2}\mathbb{E}_t[\|\bTheta^{k+1}-\bTheta^k\|^2|\mathcal F^k]\\
    & +\alpha_k\left\|\nabla \bF(\bTheta^k)\right\|\,\mathbb{E}_t\left[\|\nabla g(\bTheta^k;\phi^k)\|^2|\mathcal F^k\right]^{\frac{1}{2}} \\
    &\  \times \mathbb{E}_t\left[\|\nabla_1 f(g(\bTheta^k);\bTheta^k;\xi^k)-\nabla_1 f(\by^{k+1}, \bTheta^k;\xi^k)\|^2|\mathcal F^k\right]^{\frac{1}{2}}\\
    &+\alpha_k  \left\|\nabla \bF(\bTheta^k)\right\| \\
    & \ \times \mathbb{E}_t \left[ \| \nabla_2 f(g(\bTheta^k);\bTheta^k, \xi^k)- \nabla_2 f(\by^{k+1};\bTheta^k,\xi^k)\| | \mathcal F^k \right] 
\end{align*}
\begin{align*}
    \stackrel{(b)}{\leq}& \bF(\bTheta^k)-\alpha_k\|\nabla \bF(\bTheta^k)\|^2  +LC_g^2C_{f_1}^2\alpha_k^2 + LC_{f_2}^2\alpha_k^2\\
    &+\left( \alpha_kC_gL_{f_{11}} + \alpha_k L_{f_{21}}\right)\|\nabla \bF(\bTheta^k)\| \\
    &\ \times \mathbb{E}_t\left[\|g(\bTheta^k)-\by^{k+1}\|^2|\mathcal F^k\right]^{\frac{1}{2}}\\
    \stackrel{(c)}{\leq} &\bF(\bTheta^k)-\alpha_k\|\nabla \bF(\bTheta^k)\|^2+LC_g^2C_{f_1}^2\alpha_k^2 + LC_{f_2}^2\alpha_k^2\\
    &+\frac{\alpha_k^2C_g^2L_{f_{11}}^2+\alpha_k^2L_{f_{21}}^2}{2\beta_k}\|\nabla \bF(\bTheta^k)\|^2\\
    &+\beta_k\mathbb{E}_t\left[\|g(\bTheta^k)-\by^{k+1}\|^2|\mathcal F^k\right]\\
    \leq &\,\bF(\bTheta^k)-\alpha_k\left(1-\frac{\alpha_kC_g^2L_{f_{11}}^2+\alpha_kL_{f_{21}}^2}{2\beta_k}\right)\|\nabla \bF(\bTheta^k)\|^2\\
    &+\beta_k\mathbb{E}_t\left[\|g(\bTheta^k)-\by^{k+1}\|^2|\mathcal F^k\right]+LC_g^2C_{f_1}^2\alpha_k^2 + LC_{f_2}^2\alpha_k^2,
\end{align*}
where 
in (a) we use the Cauchy-Schwartz inequality; 
 in (b) we use the update rule \eqref{update:theta}, the boundedness of $\|\nabla g\|, \|\nabla_1 f\|$ and $\|\nabla_2 f\|$ and the Lipschitz continuous gradient of $f$ from Lemma \ref{lemma-lip-stoc} in the supplemental material \ref{sec-lemma}; 
and in (c) we use the Young's inequality.

Define the Lyapunov function
\begin{align*}
  {\cal V}^k=\bF(\bTheta^k)+\|g(\bTheta^{k-1})-\by^k\|^2.
\end{align*}
It follows that
\begin{align}\label{eq.thm1-1}
    &\ \ \mathbb{E}_t[{\cal V}^{k+1}|\mathcal F^k]\\
    &\leq{\cal V}^k-\alpha_k\left(1-\frac{\alpha_kC_g^2L_{f_{11}}^2+\alpha_kL_{f_{21}}^2}{2\beta_k}\right)\|\nabla \bF(\bTheta^k)\|^2 \nonumber \\
    &\ \ \ \ +LC_g^2C_{f_1}^2\alpha_k^2 + LC_{f_2}^2\alpha_k^2 \nonumber\\
    &\ \ \ \ +(1+\beta_k)\mathbb{E}_t\left[\|g(\bTheta^k)-\by^{k+1}\|^2|\mathcal F^k\right]-\|g(\bTheta^{k-1})-\by^k\|^2\nonumber\\
    &\stackrel{(a)}{\leq} {\cal V}^k-\alpha_k\left(1-\frac{\alpha_kC_g^2L_{f_{11}}^2+\alpha_kL_{f_{21}}^2}{2\beta_k}\right)\|\nabla \bF(\bTheta^k)\|^2\nonumber \\
    &\ \ \ \ +LC_g^2C_{f_1}^2\alpha_k^2 + LC_{f_2}^2\alpha_k^2+2(1+\beta_k)\beta_k^2V_g^2\nonumber\\
    &\ \ \ \  +\left((1+\beta_k)(1-\beta_k)^2-1\right)\|g(\bTheta^{k-1})-\by^k\|^2\nonumber \\
    &\ \ \ \ +8(1+\beta_k)(1-\beta_k)^2C_g^2\left(  C_g^2 C_{f_1}^2 +  C_{f_2}^2\right)\alpha_k^2  \nonumber\\
    &\stackrel{(b)}{\leq}{\cal V}^k-\alpha_k\left(1-\frac{\alpha_kC_g^2L_{f_{11}}^2+\alpha_kL_{f_{21}}^2}{2\beta_k}\right)\|\nabla \bF(\bTheta^k)\|^2\nonumber \\
    &\ \ \ \ +LC_g^2C_{f_1}^2\alpha_k^2  + LC_{f_2}^2\alpha_k^2+2(1+\beta_k)\beta_k^2V_g^2\nonumber \\
    &\ \ \ \ +8 C_g^2\left(  C_g^2 C_{f_1}^2 +  C_{f_2}^2\right)\alpha_k^2,  \nonumber
\end{align}
where (a) follows from Lemma \ref{lemma2}, and (b) uses that $(1+\beta_k)(1-\beta_k)^2=(1-\beta_k^2)(1-\beta_k)\leq 1$. 
 The corresponding constant $V_g$ is defined in Lemma \ref{ass_bv_remark} and $L, C_g, C_{f_1}, C_{f_2}, L_{f_{11}}, L_{f_{21}}$ are defined in Lemma \ref{lemma-lip-stoc} in the supplemental material \ref{sec-lemma}.

Select  (with $\beta_k\in(0,1)$)
\begin{align} \label{eq:L0}
    \alpha_k  =\frac{\beta_k}{C_g^2L_{f_{11}}^2+L_{f_{21}}^2}:= \frac{\beta_k}{L_0}
\end{align}
so that $1-\frac{\alpha_kC_g^2L_{f_{11}}^2+\alpha_kL_{f_{21}}^2}{2\beta_k}=\frac{1}{2}$, and define 
\begin{align}\label{ctilde}
    \tilde{C}:&=  LC_g^2C_{f_1}^2   + LC_{f_2}^2 +4  \left(C_g^2L_{f_{11}}^2+L_{f_{21}}^2 \right)^2V_g^2 \nonumber\\
    &\ \ +8 C_g^2\left(  C_g^2 C_{f_1}^2 +  C_{f_2}^2\right).
\end{align}
Further taking expectation over $\mathcal F^k$ on both sides of \eqref{eq.thm1-1}, then it follows that
\begin{align}\label{eq.thm1-2}
    \mathbb{E}_t[{\cal V}^{k+1}]\leq \mathbb{E}_t[{\cal V}^k]-\frac{\alpha_k}{2}\mathbb{E}_t[\|\nabla \bF(\bTheta^k)\|^2]+\tilde{C}\alpha_k^2.
\end{align}
Telescoping over $k$ and rearranging terms, we have     
 \begin{align*}   
    \frac{\sum_{k=0}^K\alpha_k\mathbb{E}_t[\|\nabla \bF(\bTheta^k)\|^2]}{\sum_{k=0}^K\alpha_k}\leq\frac{2{\cal V}^0+2 \tilde{C}\sum_{k=0}^K\alpha_k^2}{\sum_{k=0}^K\alpha_k}.
\end{align*}
Choosing the stepsize as $\alpha_k=\frac{1}{\sqrt{K}}$ leads to 
 \begin{align*}   
    \frac{\sum_{k=0}^{K-1}\mathbb{E}_t[\|\nabla \bF(\bTheta^k)\|^2]}{K}\leq\frac{2{\cal V}^0+2 \tilde{C}}{\sqrt{K}}.
\end{align*}
By initializing of $\by^0 = g(\bTheta^{-1})$, we have 
\begin{align*}
  {\cal V}^0=\bF(\bTheta^0)+\|g(\bTheta^{-1})-\by^0\|^2 = \bF(\bTheta^0).
\end{align*}
The proof is complete. 
\end{proof}

\newpage
\onecolumn

\section{Verify Assumptions} \label{verify_ass}
In this section, we show that for power allocation problem \eqref{eq:sum_rate_IA}, Assumption \ref{ass_1} can be satisfied.
\begin{claim}
  Consider the power allocation problem \eqref{eq:sum_rate_IA} with $K$ transmitter and receiver pairs, pick $\ell(\cdot)$ and $u(\cdot)$ as suggested in  \eqref{MSE-loss} and \eqref{sumrate-loss}.  Let $\mathcal{H}$ indicate a set of bounded channel coefficients with dimension $K\times K$. Suppose that the neural network $\pi(\bTheta;\bh) \in [0, \bp_{\max}]$ has bounded Jacobian $J^\pi \in \mathbb{R}^{K\times d}$ and Hessian $H^\pi \in \mathbb{R}^{K \times d\times d}$ for all $\bTheta \in \mathbb{R}^{d}$, $\bh\in \mathcal{H}$ and $\bp \in [0, \bp_{\max} ]$ 
with $\bp_{\max} \in \mathbb{R}_{+}^K$,  where $J^{\pi}_{i,j} = \frac{\partial \pi_i(\bTheta; \bh)}{\partial \bTheta_j}$, $H^{\pi}_{k, i,j} = \frac{\partial^2 \pi_k(\bTheta; \bh)}{\partial \bTheta_i \partial \bTheta_j}$. Then Assumption \ref{ass_1} holds true, that is, there exist some positive constants {$C_{\ell_0}, C_{\ell_1}, C_{\ell_2}, C_{u_0}, C_{u_1}, C_{u_2}$}, such that the following holds:
\begin{align*}
    \left\| \ell(\bTheta; \bx, \bp) \right\| &\le C_{\ell_0}, \quad  \left\|
      u(\bTheta; \bx, \bp) \right\|  \le C_{u_0},  \; \forall~\bh\in \mathcal{H}, \forall~\bTheta\\
     \left\|\nabla_{\bTheta} \ell(\bTheta; \bx, \bp) \right\| & \le C_{\ell_1}, \quad 
    \left\|\nabla_{\bTheta} u(\bTheta; \bx, \bp) \right\| \le C_{u_1},   \; \forall~\bh\in \mathcal{H}, \forall~\bTheta\\
    \left\|\nabla_{\bTheta}^2 \ell(\bTheta; \bx, \bp)\right\| &\le C_{\ell_2}, \quad 
    \left\|\nabla_{\bTheta}^2 u(\bTheta; \bx, \bp)\right\| \le C_{u_2},\; \forall~\bh\in \mathcal{H}, \forall~\bTheta.
\end{align*}
\end{claim}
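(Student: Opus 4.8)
The plan is to verify each of the six bounds in Assumption~\ref{ass_1} by unwinding the definitions of $\ell(\cdot)$ and $u(\cdot)$ and repeatedly applying the chain rule, using the hypothesis that the network output $\pi(\bTheta;\bh)$ lies in the compact box $[0,\bp_{\max}]$ and has bounded Jacobian $J^\pi$ and Hessian $H^\pi$. First I would record the elementary facts we will need about the rate function $R(\bp;\bh)$ from \eqref{eq:sum_rate_IA}: for $\bh$ in the bounded set $\mathcal{H}$ and $\bp\in[0,\bp_{\max}]$, the denominator $\sum_{j\neq k}|h_{kj}|^2 p_j+\sigma_k^2$ is bounded below by $\sigma_k^2>0$ and above by a constant, so each $\log(1+\cdot)$ term, together with its first and second partial derivatives in $\bp$, is bounded on this domain; hence $R$, $\nabla_\bp R$, and $\nabla^2_\bp R$ all admit uniform bounds, say $C_{R_0},C_{R_1},C_{R_2}$. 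The same is true for $\bar R(\bh)$ (it is just a number in a bounded range since $\bh\in\mathcal{H}$), and for the weight $\alpha_i=1/\bar R(\bh^{(i)})$, which is bounded since $\bar R$ is bounded away from zero.

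Next I would handle the \emph{zeroth-order} bounds. For $\ell_{\rm MSE}(\bTheta;\bh,\bp)=\|\bp-\pi(\bTheta,\bh)\|^2$: since both $\bp$ and $\pi(\bTheta,\bh)$ lie in $[0,\bp_{\max}]$, the difference is bounded, so $\|\ell\|\le C_{\ell_0}:=K\,\|\bp_{\max}\|_\infty^2$ suffices. For $u(\bTheta;\bh,\bp)=-\alpha(\bh)R(\pi(\bTheta,\bh);\bh)$: since $\alpha$ is bounded and $R$ evaluated at a point of $[0,\bp_{\max}]$ is bounded, $\|u\|\le C_{u_0}$ follows. For the \emph{first-order} bounds I would apply the chain rule: $\nabla_\bTheta\ell_{\rm MSE} = -2\,(J^\pi)^\top(\bp-\pi(\bTheta,\bh))$, which is bounded because $J^\pi$ is bounded and the MSE residual is bounded; and $\nabla_\bTheta u = -\alpha(\bh)(J^\pi)^\top\nabla_\bp R(\pi(\bTheta,\bh);\bh)$, bounded because $\alpha$, $J^\pi$, and $\nabla_\bp R$ are all bounded. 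The \emph{second-order} bounds are the most involved: differentiating once more produces two kinds of terms — one involving the Hessian tensor $H^\pi$ contracted with a first-order factor, and one involving $(J^\pi)^\top\nabla^2_\bp(\cdot)J^\pi$. Concretely, $\nabla^2_\bTheta\ell_{\rm MSE} = 2(J^\pi)^\top J^\pi - 2\sum_k (\bp-\pi(\bTheta,\bh))_k H^\pi_{k,\cdot,\cdot}$, and $\nabla^2_\bTheta u = -\alpha(\bh)\big[(J^\pi)^\top\nabla^2_\bp R\, J^\pi + \sum_k (\nabla_\bp R)_k H^\pi_{k,\cdot,\cdot}\big]$; each summand is a product of factors already shown to be bounded (using $C_{R_1},C_{R_2}$, the box constraint, and the assumed bounds on $J^\pi,H^\pi$), so the triangle inequality and submultiplicativity of the operator norm give the desired constants $C_{\ell_2},C_{u_2}$.

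The main obstacle I expect is purely bookkeeping rather than conceptual: one must be careful with the tensor contractions when $\pi$ is vector-valued (so $J^\pi$ is a matrix and $H^\pi$ a three-index object), and with the fact that in $u$ the chain rule must pass through \emph{both} the explicit $\bp$-dependence of $R$ \emph{and} the $\bTheta$-dependence of $\pi$; in the MSE case there is the analogous care needed because $\ell$ depends on $\bTheta$ only through $\pi$ while $\bp$ is a fixed label. A secondary point to state explicitly is why the relevant denominators stay bounded away from zero — this is where $\sigma_k^2>0$ and $\bar R(\bh)>0$ (for $\bh\in\mathcal{H}$) enter — since otherwise the logarithmic terms and the reciprocal weight $1/\bar R$ could blow up. Once these are in place, collecting the worst constants over $k=1,\dots,K$ and over the (bounded) ranges of $\bh$ and $\bp$ yields the six uniform bounds, which is exactly Assumption~\ref{ass_1}.
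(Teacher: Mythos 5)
Your proposal is correct and follows essentially the same route as the paper's own verification: bound $R$, $\nabla_{\bp}R$, $\nabla^2_{\bp}R$ on the compact box $[0,\bp_{\max}]$ using $\sigma_k^2>0$ and the boundedness of $\mathcal{H}$, then push these through the chain rule with the bounded $J^\pi$ and $H^\pi$ to get exactly the same expressions for $\nabla_\bTheta\ell$, $\nabla^2_\bTheta\ell$, $\nabla_\bTheta u$, $\nabla^2_\bTheta u$ that the paper writes out explicitly. The only difference is that you explicitly carry the adaptive weight $\alpha_i=1/\bar R(\bx^{(i)})$ and correctly flag that it needs $\bar R$ bounded away from zero, a point the paper's proof quietly sidesteps by verifying the unweighted $u=R(\pi(\bTheta,\bx);\bx)$; this is a harmless (indeed slightly more careful) refinement rather than a different method.
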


\begin{proof}
First, based on our specific problem \eqref{eq:sum_rate_IA}, we know that  the allocated power $\bp \in [0, \textbf{p}_{\max}]$, and output of the neural network $\pi(\bTheta;\bx) \in [0, \textbf{p}_{\max}]$ are bounded. Then we have: 
\begin{align*}
    \ell(\bTheta; \bx, \bp)  & =  \|\bp - \pi(\bTheta, \bx)\|^2 \in \mathbb{R},\\
    \nabla_{\bTheta} \ell(\bTheta;\bx, \bp)& = 2\left(J^{\pi(\bTheta;\bx)}\right)^T(\pi(\bTheta, \bx)-\bp ) \in \mathbb{R}^{d\times 1}, \\
    \nabla^2_{\bTheta} \ell(\bTheta;\bx, \bp) &=2 \sum_k \left[ H_{k,:,:}^{\pi(\bTheta;\bx)} (\pi_k(\bTheta, \bx) - \bp_k) \right]  + 2 (J^{\pi(\bTheta;\bx)})^T J^{\pi(\bTheta;\bx)}\in \mathbb{R}^{d\times d}.
\end{align*}
Since by assumption, we know Jacobian $J^\pi$ and Hessian $H^\pi$ are bounded, then, $\ell(\bTheta; \bx, \bp), \nabla_{\bTheta} \ell(\bTheta; \bx, \bp)$ and $\nabla_{\bTheta}^2 \ell(\bTheta; \bx, \bp)$ are all bounded. 

Similarly, we can also show that $u(\bTheta;\bx, \bp)$,  $\nabla_{\bTheta} u(\bTheta;\bx, \bp)$,  and $\nabla^2_{\bTheta} u(\bTheta;\bx, \bp)$ are bounded. Towards this end, we first compute the elements in $\nabla_{\pi} R(\pi(\bTheta, \bx); \bx)$ and $\nabla^2_{\pi} R(\pi(\bTheta, \bx); \bx)$: 
\begin{align*} 
 R(\pi(\bTheta, \bx); \bx) & =\sum_{k=1}^K  \log\left(1+\frac{|\bx_{kk}|^2 \pi_k}{\sum_{i\neq k}|\bx_{ki}|^2 \pi_i+\sigma_k^2}\right),\\
 \nabla_{\pi} R(\pi(\bTheta, \bx); \bx) & = \left[ \nabla_{\pi_k} R(\pi(\bTheta, \bx); \bx)\right]_{k=1:K},  \\
 \nabla_{\pi_k} R(\pi(\bTheta, \bx); \bx) & =  \frac{|\bx_{kk}|^2}{\sum_{i=1}^{K}|\bx_{ki}|^2 \pi_i+\sigma_k^2} + \sum_{j\ne k} \frac{- |\bx_{jj}|^2 \pi_j |\bx_{jk}|^2}{\left(\sum_{i\neq j}|\bx_{ji}|^2 \pi_i+\sigma_j^2\right) \left( \sum_{i=1}^K|\bx_{ji}|^2 \pi_i+\sigma_j^2\right)},  \\
 \nabla^2_{\pi} R(\pi(\bTheta, \bx); \bx)  & = \left[ \nabla^2_{\pi_{kt}} R(\pi(\bTheta, \bx); \bx)\right]_{k=1:K, t=1:K},  \\
   \nabla^2_{\pi_{kt}} R(\pi(\bTheta, \bx); \bx) |_{t\ne k} &= \frac{-|\bx_{kk}|^2|\bx_{kt}|^2}{\left( \sum_{i=1}^{K}|\bx_{ki}|^2 \pi_i+\sigma_k^2\right)^2} + \sum_{j \ne k, t} \frac{ |\bx_{jj}|^2 \pi_j |\bx_{jk}|^2 |\bx_{jt}|^2\left(\sum_{i\neq j}|\bx_{ji}|^2 \pi_i+\sum_{i=1}^K|\bx_{ji}|^2 \pi_i+2\sigma_j^2\right)}{\left(\sum_{i\neq j}|\bx_{ji}|^2 \pi_i+\sigma_j^2\right)^2 \left( \sum_{i=1}^K|\bx_{ji}|^2 \pi_i+\sigma_j^2\right)^2}\\
   & -  \frac{ |\bx_{tt}|^2  |\bx_{tk}|^2 \left(\sum_{i\ne t}|\bx_{ti}|^2 \pi_i+\sigma_t^2 \right)}{\left(\sum_{i\neq t}|\bx_{ti}|^2 \pi_i+\sigma_t^2\right) \left( \sum_{i=1}^K|\bx_{ti}|^2 \pi_i+\sigma_t^2\right)^2}, \\
   \nabla^2_{\pi_{kt}} R(\pi(\bTheta, \bx); \bx) |_{t= k} &= \frac{-|\bx_{kk}|^2|\bx_{kt}|^2}{\left( \sum_{i=1}^{K}|\bx_{ki}|^2 \pi_i+\sigma_k^2\right)^2} + \sum_{j \ne k} \frac{ |\bx_{jj}|^2 \pi_j |\bx_{jk}|^2 |\bx_{jt}|^2\left(\sum_{i\neq j}|\bx_{ji}|^2 \pi_i+\sum_{i=1}^K|\bx_{ji}|^2 \pi_i+2\sigma_j^2\right)}{\left(\sum_{i\neq j}|\bx_{ji}|^2 \pi_i+\sigma_j^2\right)^2 \left( \sum_{i=1}^K|\bx_{ji}|^2 \pi_i+\sigma_j^2\right)^2}.
\end{align*}
Because all elements of $\bh_{ij}$, $\pi_i$, and $\sigma_k$ are bounded,  it is clear that both $\nabla_{\pi} R(\pi(\bTheta, \bx); \bx)$ and $\nabla^2_{\pi} R(\pi(\bTheta, \bx); \bx)$ are bounded. 

Next, we derive the bounds for $u(\bTheta;\bx, \bp)$,  $\nabla_{\bTheta} u(\bTheta;\bx, \bp)$,  and $\nabla^2_{\bTheta} u(\bTheta;\bx, \bp)$,
\begin{align*} 
 u(\bTheta;\bx, \bp) & = R(\pi(\bTheta, \bx); \bx)=\sum_{k=1}^K \alpha_k\log\left(1+\frac{|\bx_{kk}|^2 \pi_k}{\sum_{j\neq k}|\bx_{kj}|^2 \pi_j+\sigma_k^2}\right),\\
\nabla_{\bTheta} u(\bTheta;\bx, \bp) & =  (J^{\pi(\bTheta;\bx)})^T \cdot \nabla_{\pi} R(\pi(\bTheta, \bx); \bx),\\
 \nabla^2_{\bTheta}  u(\bTheta;\bx, \bp) & = \sum_k \left[ (H_{k,:,:}^{\pi(\bTheta;\bx)})^T \cdot \nabla_{\pi_k} R(\pi(\bTheta, \bx); \bx)\right] + (J^{\pi(\bTheta;\bx)})^T \cdot \nabla^2_{\pi} R(\pi(\bTheta, \bx); \bx) \cdot J^{\pi(\bTheta;\bx)},
\end{align*}
given all of $J^\pi$, $H^\pi$, $\nabla_{\pi} R(\pi(\bTheta, \bx); \bx)$ and $\nabla^2_{\pi} R(\pi(\bTheta, \bx); \bx)$ are bounded, the proof is complete.
Finally, we note that the assumption of boundedness of Jacobian $ J^\pi $ and Hessian $ H^\pi $ are reasonable; {see for example \cite[Theorem 3.2]{herrera2020estimating}, where the Lipschitz continuous and Lipschitz continuous gradient constants of neural networks are explicitly characterized.}
\end{proof}

\section{Proof of Lemma \ref{lemma2}}

\begin{claim}[Tracking Error Contraction  {\cite[Lemma 1]{chen2020solving}}]
Consider ${\cal F}^k$ as the collection of random variables, i.e., ${\cal F}^k:=\left\{\phi^0, \ldots, \phi^{k-1}, \xi^0, \ldots, \xi^{k-1}\right\}$. 
Suppose Assumption \ref{ass_unbiased} and \ref{ass_1}  hold, and $\by^{k+1}$ is generated by running  iteration \eqref{eq.SCSC-2}  conditioned ${\cal F}^k$. The mean square error of $\by^{k+1}$ satisfies
\begin{align*} 
&\mathbb{E}_t\left[\|\dg(\bTheta^k)-\by^{k+1}\|^2\mid{\cal F}^k\right] \leq  (1-\beta_k)^2\|\dg(\bTheta^{k-1})-\by^k\|^2 +4(1-\beta_k)^2C_g^2\|\bTheta^k-\bTheta^{k-1}\|^2+2\beta_k^2V_g^2,
\end{align*}
where $C_g$ and $V_g$ are defined in Lemma \ref{ass_bv_remark}.
\end{claim}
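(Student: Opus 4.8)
The plan is to treat the recursion \eqref{eq.SCSC-2} as a noisy contraction for the tracking error $e^{k+1}:=\dg(\bTheta^k)-\by^{k+1}$: I would show that $e^{k+1}$ shrinks by the factor $(1-\beta_k)$ relative to $e^k:=\dg(\bTheta^{k-1})-\by^k$, up to two perturbations that are conditionally mean-zero given $\mathcal{F}^k$. \textbf{Step 1 (rewrite the error recursion).} Substituting $\by^{k+1}$ from \eqref{eq.SCSC-2} into $e^{k+1}$, writing $\dg(\bTheta^k)=(1-\beta_k)\dg(\bTheta^k)+\beta_k\dg(\bTheta^k)$, and inserting and subtracting $\dg(\bTheta^{k-1})$, I would arrive at the identity
\begin{align*}
e^{k+1}=(1-\beta_k)\big(e^k+\Delta^k\big)+\beta_k\,\delta^k,
\end{align*}
with $\Delta^k:=\big(\dg(\bTheta^k)-\dg(\bTheta^{k-1})\big)-\big(g(\bTheta^k;\phi^k)-g(\bTheta^{k-1};\phi^k)\big)$ and $\delta^k:=\dg(\bTheta^k)-g(\bTheta^k;\phi^k)$. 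Since $\phi^k$ is drawn independently of $\mathcal{F}^k$, Assumption \ref{ass_unbiased} makes both $\Delta^k$ and $\delta^k$ conditionally mean-zero; note that the \emph{same} $\phi^k$ is used at $\bTheta^k$ and $\bTheta^{k-1}$.

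\textbf{Step 2 (condition, then bound the noise).} Put $w^k:=(1-\beta_k)\Delta^k+\beta_k\delta^k$, which is conditionally mean-zero, and recall that $e^k$ is $\mathcal{F}^k$-measurable. Taking $\mathbb{E}_t[\cdot\mid\mathcal{F}^k]$ annihilates the $\langle e^k,w^k\rangle$ cross term and leaves
\begin{align*}
\mathbb{E}_t[\|e^{k+1}\|^2\mid\mathcal{F}^k]=(1-\beta_k)^2\|e^k\|^2+\mathbb{E}_t[\|w^k\|^2\mid\mathcal{F}^k].
\end{align*}
To estimate $\mathbb{E}_t[\|w^k\|^2\mid\mathcal{F}^k]$ I would apply $\|a+b\|^2\le 2\|a\|^2+2\|b\|^2$ and treat the two pieces with Lemma \ref{ass_bv_remark}: the bounded-variance bound controls $\mathbb{E}_t[\|\delta^k\|^2\mid\mathcal{F}^k]$ and yields the $2\beta_k^2 V_g^2$ term; and since $\Delta^k$ is the centered version of $-\big(g(\bTheta^k;\phi^k)-g(\bTheta^{k-1};\phi^k)\big)$, its second moment is at most $\mathbb{E}_t[\|g(\bTheta^k;\phi^k)-g(\bTheta^{k-1};\phi^k)\|^2\mid\mathcal{F}^k]$, which the $C_g$-Lipschitz continuity of each realization $g(\cdot;\phi)$ (Assumption \ref{ass_1}, Lemma \ref{ass_bv_remark}(2)--(3)) bounds by $C_g^2\|\bTheta^k-\bTheta^{k-1}\|^2$. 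Collecting the three contributions — and loosening the numerical constant on the $C_g^2$ term if necessary — gives the claimed inequality.

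\textbf{Main obstacle.} The calculation is otherwise routine; what requires care is the measurability bookkeeping in Steps 1--2. One must use simultaneously that $\phi^k$ is fresh relative to $\mathcal{F}^k$ (so that $\Delta^k,\delta^k$ are conditionally centered and the $e^k$ cross term vanishes) and that the same $\phi^k$ enters \eqref{eq.SCSC-2} at both $\bTheta^k$ and $\bTheta^{k-1}$ — this reuse is precisely what turns $g(\bTheta^k;\phi^k)-g(\bTheta^{k-1};\phi^k)$ into a variance-reduced increment of size $\mathcal{O}(\|\bTheta^k-\bTheta^{k-1}\|)$ rather than an $\mathcal{O}(1)$ noise term, and hence what gives the $\|\bTheta^k-\bTheta^{k-1}\|^2$ factor in the bound. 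Once this is set up correctly the argument coincides with that of \cite[Lemma 1]{chen2020solving}, which we reproduce in the supplemental material for completeness.
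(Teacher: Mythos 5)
Your proposal is correct and follows essentially the same route as the paper's proof: the same error recursion from \eqref{eq.SCSC-2}, the same use of Assumption \ref{ass_unbiased} to kill the cross term with the $\mathcal{F}^k$-measurable error, and the same variance/Lipschitz bounds from Lemma \ref{ass_bv_remark}. The only difference is cosmetic grouping — you fold the drift $\dg(\bTheta^k)-\dg(\bTheta^{k-1})$ and the sampled increment into a single centered term $\Delta^k$, whereas the paper keeps them as separate terms $T_1,T_3$ and bounds each by $C_g^2\|\bTheta^k-\bTheta^{k-1}\|^2$, which is why your version gives the (tighter) constant $2(1-\beta_k)^2C_g^2$ that you then loosen to the stated $4(1-\beta_k)^2C_g^2$.
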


\begin{proof} \label{eq.lemma2-prof}
From the update \eqref{eq.SCSC-2}, we have that
\begin{align}\label{eq.pflemma2-1}
\by^{k+1}-\dg(\bTheta^k)&=(1-\beta_k)(\by^k-\dg(\bTheta^{k-1}))+(1-\beta_k)(\dg(\bTheta^{k-1})-\dg(\bTheta^k))+\beta_k(g(\bTheta^k;\phi^k)-\dg(\bTheta^k))\nonumber\\
&\quad+(1-\beta_k)(g(\bTheta^k;\phi^k)-g(\bTheta^{k-1};\phi^k))\nonumber\\
&=(1-\beta_k)(\by^k-\dg(\bTheta^{k-1}))+(1-\beta_k)T_1+\beta_k T_2+(1-\beta_k)T_3,
\end{align}
where we define the three terms as
\begin{align*}
    T_1:=\dg(\bTheta^{k-1})-\dg(\bTheta^k) \quad T_2:=g(\bTheta^k;\phi^k)-\dg(\bTheta^k) \quad 
    T_3:=g(\bTheta^k;\phi^k)-g(\bTheta^{k-1};\phi^k).
\end{align*}
Conditioned on ${\cal F}^k$, taking expectation over the sampling process of $\phi^k$   from the data set $\mathcal{M}_t \cup \mathcal{D}_t$, we have
\begin{align} \label{zero_exp}
\mathbb{E}_t\left[(1-\beta_k)T_1+\beta_k T_2+(1-\beta_k)T_3|\mathcal F^k\right]=\mathbf{0}\quad  \text{and}\quad  \mathbb{E}_t\left[ T_2 |\mathcal F^k\right]=\mathbf{0}.
\end{align}
Therefore,  taking a conditional expectation on the norm square of both sides of \eqref{eq.pflemma2-1}, we have 
\begin{align*}
\,&\mathbb{E}_t[\|\by^{k+1}-\dg(\bTheta^k)\|^2|\mathcal F^k]\\
\stackrel{\eqref{eq.pflemma2-1}}{=} &\mathbb{E}_t[\|(1-\beta_k)(\by^k-\dg(\bTheta^{k-1}))\|^2|\mathcal F^k]+\mathbb{E}_t\left[\|(1-\beta_k)T_1+\beta_k T_2+(1-\beta_k)T_3\|^2|\mathcal F^k\right]\\
&+2\mathbb{E}_t\left[\left\langle (1-\beta_k)(\by^k-\dg(\bTheta^{k-1})), (1-\beta_k)T_1+\beta_k T_2+(1-\beta_k)T_3\right\rangle|\mathcal F^k\right]\\
\stackrel{\eqref{zero_exp}}{=}  &(1-\beta_k)^2\|\by^k-\dg(\bTheta^{k-1})\|^2+\mathbb{E}_t\left[\|(1-\beta_k)T_1+\beta_k T_2+(1-\beta_k)T_3\|^2|\mathcal F^k\right]\\
\stackrel{\rm{(i)}}{\leq}  & (1-\beta_k)^2\|\by^k-\dg(\bTheta^{k-1})\|^2+2\mathbb{E}_t\left[ \|(1-\beta_k)T_1+\beta_kT_2\|^2|\mathcal F^k\right]+2(1-\beta_k)^2\mathbb{E}_t\left[\|T_3\|^2|\mathcal F^k\right]\\
\stackrel{}{=} &(1-\beta_k)^2\|\by^k-\dg(\bTheta^{k-1})\|^2+2(1-\beta_k)^2\mathbb{E}_t[\|T_1\|^2\mid{\cal F}^k]+2\beta_k^2\mathbb{E}_t[\|T_2\|^2\mid{\cal F}^k]\\
&+4\beta_k(1-\beta_k)\left\langle T_1,\mathbb{E}_t[T_2\mid{\cal F}^k]\right\rangle+2(1-\beta_k)^2\mathbb{E}_t\left[\|T_3\|^2|\mathcal F^k\right]\\
\stackrel{\rm{(ii)}}{\leq} & (1-\beta_k)^2\|\by^k-\dg(\bTheta^{k-1})\|^2+2(1-\beta_k)^2\mathbb{E}_t\left[\|\dg(\bTheta^k)-\dg(\bTheta^{k-1})\|^2|\mathcal F^k\right]\\
&+2(1-\beta_k)^2\mathbb{E}_t\left[\|g(\bTheta^k;\phi^k)-g(\bTheta^{k-1};\phi^k)\|^2|\mathcal F^k\right]+2\beta_k^2V_g^2\\
\stackrel{\rm{(iii)}}{\leq} & (1-\beta_k)^2\|\by^k-\dg(\bTheta^{k-1})\|^2+4(1-\beta_k)^2C_g^2\|\bTheta^k-\bTheta^{k-1}\|^2+2\beta_k^2V_g^2,
\end{align*}
where in $\rm{(i)}$ we use the Cauchy–Schwartz inequality, in $\rm{(ii)}$ we use the  bounded variance  property from Lemma \ref{ass_bv_remark} and the unbiasedness \eqref{zero_exp}, and in $\rm{(iii)}$ we use the property that the $\dg(\bTheta)$ and $g(\bTheta; \phi)$ are Lipschitz continuous from \eqref{eq.ass2-g}.
The proof is then complete.
\end{proof}

\section{Additional Lemmas } \label{sec-lemma}
\begin{lemma} \label{lemma-lip}
Suppose Assumption \ref{ass_1} holds, then function $\bF(\cdot)$ has Lipshictz continuous gradient with some universal constant $\bar{L}$, where
\begin{align*}
\dL:=  \dC_{f_1} \dL_g+ \dC_g^2 \dL_{f_{11}} + \dC_g \dL_{f_{12}} +\dC_g \dL_{f_{21}} + \dL_{f_{22}},
\end{align*}
and
\begin{align*}
& \dL_{f_{11}} :=2C_{l_0}e^{4C_{u_0}},  \quad  \dL_{f_{12}}=\dL_{f_{21}}:=C_{u_1}C_{l_0}e^{3C_{u_0}} + C_{l_1}e^{3C_{u_0}}, \\
&\dL_{f_{22}}:= \left(C_{u_1}^2 C_{\ell_0} + C_{u_2} C_{\ell_0} + 2C_{u_1} C_{\ell_1} + C_{\ell_2}\right) e^{2C_{u_0}}, \quad \dL_g:= C_{u_1}^2e^{C_{u_0}} +   C_{u_2} e^{C_{u_0}},\\
& \bar{C}_{f_1}:= C_{l_0}e^{3C_{u_0}}, \quad \bar{C}_{f_2}:=  C_{u_1}C_{l_0} e^{2C_{u_0}} + C_{l_1}e^{2C_{u_0}}, \quad \bar{C}_g:= C_{u_1}e^{C_{u_0}}.
\end{align*}
\end{lemma}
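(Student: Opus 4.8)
The plan is to construct $\dL$ from the inside out. Recall that $\nabla\bF(\bTheta)=\nabla\dg(\bTheta)\,\nabla_1\df(\dg(\bTheta);\bTheta)+\nabla_2\df(\dg(\bTheta);\bTheta)$, so it suffices to (i) bound $\dg$ and its first two derivatives, (ii) bound $\df$ together with the first, second, and mixed partial derivatives of $\df$ that appear, evaluated along the curve $z=\dg(\bTheta)$, and then (iii) assemble by the product and chain rules. For (i), write $N:=|\mathcal{M}_t\cup\mathcal{D}_t|$ and abbreviate $u^{(i)}:=u(\bTheta;\bx^{(i)},\bp^{(i)})$, $\ell^{(i)}:=\ell(\bTheta;\bx^{(i)},\bp^{(i)})$. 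Since $|u^{(i)}|\le C_{u_0}$ by Assumption~\ref{ass_1}, every $e^{u^{(i)}}\in[e^{-C_{u_0}},e^{C_{u_0}}]$, hence $\dg(\bTheta)\in[e^{-C_{u_0}},e^{C_{u_0}}]$ --- in particular $\dg$ is bounded \emph{away from zero}, which is the one structural fact that makes everything else work. Differentiating $\dg$ and using $\|\nabla_\bTheta u^{(i)}\|\le C_{u_1}$, $\|\nabla^2_\bTheta u^{(i)}\|\le C_{u_2}$ gives $\|\nabla\dg(\bTheta)\|\le e^{C_{u_0}}C_{u_1}=:\bar{C}_g$ and $\|\nabla^2\dg(\bTheta)\|\le e^{C_{u_0}}(C_{u_1}^2+C_{u_2})=:\bar{L}_g$; in particular $\dg$ is $\bar{C}_g$-Lipschitz.

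\emph{Controlling $\df$.} Put $A(\bTheta):=\sum_{i\in\mathcal{M}_t\cup\mathcal{D}_t}e^{u^{(i)}}\ell^{(i)}$, so $\df(z;\bTheta)=A(\bTheta)/(Nz)$. Assumption~\ref{ass_1} gives $\|A\|\le Ne^{C_{u_0}}C_{\ell_0}$, $\|\nabla A\|\le Ne^{C_{u_0}}(C_{u_1}C_{\ell_0}+C_{\ell_1})$, and, after expanding $\nabla^2(e^{u}\ell)$ into its five terms, $\|\nabla^2 A\|\le Ne^{C_{u_0}}(C_{u_1}^2C_{\ell_0}+C_{u_2}C_{\ell_0}+2C_{u_1}C_{\ell_1}+C_{\ell_2})$. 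Then $\nabla_1\df=-A/(Nz^2)$, $\nabla_2\df=\nabla A/(Nz)$, $\partial^2\df/\partial z^2=2A/(Nz^3)$, $\partial(\nabla_2\df)/\partial z=\nabla_\bTheta\nabla_1\df=-\nabla A/(Nz^2)$, and $\nabla^2_\bTheta\df=\nabla^2 A/(Nz)$. Substituting $z=\dg(\bTheta)$ and using $1/z^k\le e^{kC_{u_0}}$ for $k=1,2,3$ turns these into the bounds $\bar{C}_{f_1},\bar{C}_{f_2},\bar{L}_{f_{11}},\bar{L}_{f_{12}}=\bar{L}_{f_{21}},\bar{L}_{f_{22}}$ displayed in the statement; the equality $\bar{L}_{f_{12}}=\bar{L}_{f_{21}}$ is automatic because the two mixed partials coincide.

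\emph{Assembly.} Taking the difference of $\nabla\bF$ at $\bTheta$ and $\bTheta'$, split the first summand as $[\nabla\dg(\bTheta)-\nabla\dg(\bTheta')]\nabla_1\df(\dg(\bTheta);\bTheta)+\nabla\dg(\bTheta')[\nabla_1\df(\dg(\bTheta);\bTheta)-\nabla_1\df(\dg(\bTheta');\bTheta')]$, and inside the last bracket add and subtract $\nabla_1\df(\dg(\bTheta');\bTheta)$ to separate the change in the first slot (absorbed by $\bar{L}_{f_{11}}$ together with the $\bar{C}_g$-Lipschitzness of $\dg$) from the change in $\bTheta$ (absorbed by $\bar{L}_{f_{12}}$); this bounds the first summand by $(\bar{C}_{f_1}\bar{L}_g+\bar{C}_g^2\bar{L}_{f_{11}}+\bar{C}_g\bar{L}_{f_{12}})\|\bTheta-\bTheta'\|$. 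Treating $\nabla_2\df(\dg(\bTheta);\bTheta)-\nabla_2\df(\dg(\bTheta');\bTheta')$ the same way (add and subtract $\nabla_2\df(\dg(\bTheta');\bTheta)$) bounds the second summand by $(\bar{C}_g\bar{L}_{f_{21}}+\bar{L}_{f_{22}})\|\bTheta-\bTheta'\|$. Adding the two and invoking the triangle inequality gives $\|\nabla\bF(\bTheta)-\nabla\bF(\bTheta')\|\le\dL\|\bTheta-\bTheta'\|$ with $\dL=\bar{C}_{f_1}\bar{L}_g+\bar{C}_g^2\bar{L}_{f_{11}}+\bar{C}_g\bar{L}_{f_{12}}+\bar{C}_g\bar{L}_{f_{21}}+\bar{L}_{f_{22}}$, exactly as stated.

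\emph{Main obstacle.} There is no genuine analytic difficulty here: the only substantive observation is that $\dg$ is uniformly bounded below by $e^{-C_{u_0}}>0$, after which $\df$ and all of its derivatives up to the $1/z^3$ level are bounded and the assembly is a standard telescoping/add-and-subtract argument. The real effort is bookkeeping --- tracking which power of $e^{C_{u_0}}$ each factor $1/z$, $1/z^2$, $1/z^3$ contributes, and correctly listing the five terms of $\nabla^2(e^u\ell)$ so that $\bar{L}_{f_{22}}$ comes out with precisely the stated constant. One should also confirm that the curve $z=\dg(\bTheta)$ really does stay in $(0,e^{C_{u_0}}]$ uniformly, which is immediate from the bound on $u$ but is what licenses substituting the pointwise bounds on the derivatives of $\df$.
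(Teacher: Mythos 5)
Your proposal is correct and follows essentially the same route as the paper's proof: bound $\dg$ (in particular below by $e^{-C_{u_0}}$), bound the first and second partial derivatives of $\df$ and $\dg$ via Assumption \ref{ass_1} with $1/z^k \le e^{kC_{u_0}}$ along $z=\dg(\bTheta)$, and then assemble $\|\nabla \bF(\bTheta)-\nabla \bF(\bTheta')\|$ by the same add-and-subtract decomposition, arriving at the identical constant $\dL$. The only differences are cosmetic (your $A(\bTheta)/(Nz)$ shorthand versus the paper's explicit sums).
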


\begin{proof}
To begin with, we show that $\df$ and $\dg$ are bounded. Given  Assumption \ref{ass_1} and $\df, \dg$ are defined as \eqref{eq-relation}, we have 
\begin{subequations}   
\begin{align*}
\dg(\bTheta)&:= \frac{1}{|\mathcal{D}|} \cdot \sum_{i \in \mathcal{D}} e^{u(\bTheta; \bx^{(i)}, \bp^{(i)})} \in \left[ e^{-C_{u_0}}, e^{C_{u_0}}\right],\\
\df(\dg(\bTheta');\bTheta)&:= \frac{\sum_{i \in \mathcal{D}} e^{u(\bTheta; \bx^{(i)}, \bp^{(i)})} \cdot \ell(\bTheta; \bx^{(i)}, \bp^{(i)})}{|\mathcal{D}| \cdot \dg(\bTheta')} \in  \left[ -C_{l_0}e^{2C_{u_0}},  C_{l_0}e^{2C_{u_0}} \right],
\end{align*}
\end{subequations}
where $\mathcal{D}:= \mathcal{M}_t \cup \mathcal{D}_t$, and $C_{\ell_0}, C_{u_0}$ are defined in xxx. Note that we abused the notation a bit by omitting the subscript pf $t$ when defining $\mathcal{D}$. 

Next, we show that the gradients of $\df$ and $\dg$ are bounded. We obtain their gradients as following 
\begin{align*} 
\nabla  \dg(\bTheta)  &= \frac{1}{|\mathcal{D}|} \cdot \sum_{i \in \mathcal{D}}\left( e^{u(\bTheta; \bx^{(i)}, \bp^{(i)})} \cdot  \nabla u(\bTheta; \bx^{(i)}, \bp^{(i)}) \right),\\
\nabla_1 \df(z;\bTheta) &=- \frac{\sum_{i \in \mathcal{D}} e^{u(\bTheta; \bx^{(i)}, \bp^{(i)})} \cdot \ell(\bTheta; \bx^{(i)}, \bp^{(i)})}{|\mathcal{D}|\cdot z^2}, \\
\nabla_2 \df(z;\bTheta) &=  \frac{\sum_{i \in \mathcal{D}} 
\left( e^{u(\bTheta; \bx^{(i)}, \bp^{(i)})} \cdot  \nabla u(\bTheta; \bx^{(i)}, \bp^{(i)}) \cdot \ell(\bTheta; \bx^{(i)}, \bp^{(i)})  \right) }{|\mathcal{D}|\cdot z } \\
& + \frac{\sum_{i \in \mathcal{D}} 
\left(  e^{u(\bTheta; \bx^{(i)}, \bp^{(i)})} \cdot \nabla \ell(\bTheta; \bx^{(i)}, \bp^{(i)}) \right) }{|\mathcal{D}|\cdot z }.
\end{align*}
Combining the boundedness property from Assumption \ref{ass_1}, we conclude that for any $\bTheta$ and $\bTheta'$,
\begin{align} \label{det-bounded} 
\begin{split}
\left\| \nabla  \dg(\bTheta)\right\|  &\le C_{u_1}e^{C_{u_0}}  :=\bar{C}_g,\\
\left\|\nabla_1 \df(\dg(\bTheta');\bTheta) \right\| & \le C_{l_0}e^{3C_{u_0}}:=\bar{C}_{f_1},\\
\left\|\nabla_2 \df(\dg(\bTheta');\bTheta)\right\|  &\le  C_{u_1}C_{l_0} e^{2C_{u_0}} + C_{l_1}e^{2C_{u_0}}:=\bar{C}_{f_2}.
\end{split}
\end{align}
Next, we show that functions $\nabla \df$ and $\nabla \dg$ are $L_f$- and $L_g$-smooth by bounding the Hessian of $\|\df\|$ and $\|\dg\|$, where we have
\begin{align*} 
\nabla_{11}^2  \df(\bz;\bTheta) &=  \frac{2 \cdot \sum_{i \in \mathcal{D}} e^{u(\bTheta; \bx^{(i)}, \bp^{(i)})} \cdot \ell(\bTheta; \bx^{(i)}, \bp^{(i)})}{|\mathcal{D}| \cdot \bz^3}, \\
\nabla_{12}^2 \df(\bz;\bTheta) &=-\frac{\sum_{i \in \mathcal{D}} e^{u(\bTheta; \bx^{(i)}, \bp^{(i)})} \cdot \nabla u (\bTheta; \bx^{(i)}, \bp^{(i)}) \cdot \ell(\bTheta; \bx^{(i)}, \bp^{(i)})}{|\mathcal{D}| \cdot\bz^2}
\\
&- \frac{\sum_{i \in \mathcal{D}} e^{u(\bTheta; \bx^{(i)}, \bp^{(i)})} \cdot \nabla\ell(\bTheta; \bx^{(i)}, \bp^{(i)})}{|\mathcal{D}| \cdot\bz^2}, \\
\nabla_{21}^2 \df(\bz;\bTheta) &=  - \frac{\sum_{i \in \mathcal{D}} 
\left( e^{u(\bTheta; \bx^{(i)}, \bp^{(i)})} \cdot  \nabla u(\bTheta; \bx^{(i)}, \bp^{(i)}) \cdot \ell(\bTheta; \bx^{(i)}, \bp^{(i)})  \right) }{|\mathcal{D}| \cdot \bz^2 } \\
& - \frac{\sum_{i \in \mathcal{D}} 
\left(  e^{u(\bTheta; \bx^{(i)}, \bp^{(i)})} \cdot \nabla \ell(\bTheta; \bx^{(i)}, \bp^{(i)}) \right) }{|\mathcal{D}| \cdot \bz^2}, 
\end{align*}
\begin{align*}
\nabla_{22}^2 \df(\bz;\bTheta) &=  \frac{\sum_{i \in \mathcal{D}}
\left( e^{u(\bTheta; \bx^{(i)}, \bp^{(i)})} \cdot  \left\|\nabla u(\bTheta; \bx^{(i)}, \bp^{(i)}) \right\|^2 \cdot \ell(\bTheta; \bx^{(i)}, \bp^{(i)})  \right) }{|\mathcal{D}| \cdot\bz } \\
&+\frac{\sum_{i \in \mathcal{D}} 
\left( e^{u(\bTheta; \bx^{(i)}, \bp^{(i)})} \cdot  \nabla^2 u(\bTheta; \bx^{(i)}, \bp^{(i)}) \cdot \ell(\bTheta; \bx^{(i)}, \bp^{(i)})  \right) }{|\mathcal{D}| \cdot\bz } \\
&+\frac{2 \cdot \sum_{i \in \mathcal{D}} 
\left( e^{u(\bTheta; \bx^{(i)}, \bp^{(i)})} \cdot  \nabla u(\bTheta; \bx^{(i)}, \bp^{(i)}) \cdot \nabla \ell(\bTheta; \bx^{(i)}, \bp^{(i)})  \right) }{|\mathcal{D}| \cdot\bz } \\
& + \frac{\sum_{i \in \mathcal{D}} 
\left(  e^{u(\bTheta; \bx^{(i)}, \bp^{(i)})} \cdot \nabla^2 \ell(\bTheta; \bx^{(i)}, \bp^{(i)}) \right) }{|\mathcal{D}| \cdot\bz }, \\
\nabla^2  \dg(\bTheta)  &=\frac{1}{|\mathcal{D}|} \cdot \sum_{i \in \mathcal{D}}\left( e^{u(\bTheta; \bx^{(i)}, \bp^{(i)})} \cdot  \left\|\nabla u(\bTheta; \bx^{(i)}, \bp^{(i)})\right\|^2 + e^{u(\bTheta; \bx^{(i)}, \bp^{(i)})} \cdot  \nabla^2 u(\bTheta; \bx^{(i)}, \bp^{(i)}) \right). 
\end{align*}
Further considering that the Assumption \ref{ass_1}, we can conclude that for any $\bTheta$ and $\bTheta'$
\begin{align*}
\left\| \nabla_{11}^2  \df(\dg(\bTheta');\bTheta) \right\| &\le 2C_{l_0}e^{4C_{u_0}}:= \dL_{f_{11}}, \\
\left\| \nabla_{12}^2 \df(\dg(\bTheta');\bTheta) \right\| &\le   C_{u_1}C_{l_0}e^{3C_{u_0}} + C_{l_1}e^{3C_{u_0}}:=\dL_{f_{12}}, \\
\left\| \nabla_{21}^2 \df(\dg(\bTheta');\bTheta) \right\| &\le  C_{u_1}C_{l_0}e^{3C_{u_0}} + C_{l_1}e^{3C_{u_0}} :=\dL_{f_{21}},\\
\left\| \nabla_{22}^2 \df(\dg(\bTheta');\bTheta)  \right\| &\le   \left(C_{u_1}^2 C_{\ell_0} + C_{u_2} C_{\ell_0} + 2C_{u_1} C_{\ell_1} + C_{\ell_2}\right) e^{2C_{u_0}} :=\dL_{f_{22}}, \\
\left\| \nabla^2  \dg(\bTheta)  \right\| &\le   C_{u_1}^2e^{C_{u_0}} +   C_{u_2} e^{C_{u_0}}:=\dL_g.
\end{align*}
Or in other words, when $z = \dg(\cdot)$, the following holds true: 
\begin{align} \label{det-lip}
\begin{split}
\|\nabla_1 \df(\bz, \cdot)-\nabla_1 \df(\bz', \cdot)\| & \leq \dL_{f_{11}}\|\bz-\bz'\|, \\
\|\nabla_1 \df(\bz, \bTheta)-\nabla_1 \df(\bz, \bTheta')\| & \leq \dL_{f_{12}}\|\bTheta-\bTheta'\|, \\
\|\nabla_2 \df(\bz, \cdot)-\nabla_2 \df(\bz', \cdot)\| & \leq \dL_{f_{21}}\|\bz-\bz'\|, \\
\|\nabla_2 \df(\bz, \bTheta)-\nabla_2 \df(\bz, \bTheta')\| & \leq \dL_{f_{22}}\|\bTheta-\bTheta'\|, \\
\|\nabla \dg(\bTheta)-\nabla \dg(\bTheta')\| & \leq \dL_g\|\bTheta-\bTheta'\|.
\end{split}
\end{align}
Then, we are ready to establish the smoothness condition of the gradient of the compositional function $\bF(\bTheta)=\df(\dg(\bTheta), \bTheta)$, we have
\begin{align*} 
    \nabla \bF(\bTheta)= &\nabla \dg(\bTheta) \nabla_1 \df(\dg(\bTheta), \bTheta)  + \nabla_2 \df(\dg(\bTheta), \bTheta),
\end{align*}
and
\begin{align*}
 \|\nabla \bF(\bTheta) - \nabla \bF(\bTheta') \| 
\le & \|\nabla \dg(\bTheta) \nabla_1 \df(\dg(\bTheta), \bTheta)  -\nabla \dg(\bTheta') \nabla_1 \df(\dg(\bTheta), \bTheta)  \| \\
& + \|\nabla \dg(\bTheta') \nabla_1 \df(\dg(\bTheta), \bTheta)  -\nabla \dg(\bTheta') \nabla_1 \df(\dg(\bTheta'), \bTheta')  \| \\
& +\|  \nabla_2 \df(\dg(\bTheta), \bTheta) - \nabla_2 \df(\dg(\bTheta'), \bTheta')\|\\
\le & \bar{C}_{f_1}   \dL_g\|\bTheta-\bTheta'\| + \bar{C}_{g}   \dL_{f_{11}}\|\dg(\bTheta)-\dg(\bTheta')\| + \bar{C}_{g} \dL_{f_{12}}\|\bTheta-\bTheta'\| \\
& +  \dL_{f_{21}}\|\dg(\bTheta)-\dg(\bTheta')\| +   \dL_{f_{22}}\|\bTheta-\bTheta'\| \\
\le & \dL \|\bTheta-\bTheta'\|,
\end{align*}
 where in the second inequality we use the boundedness of $\|\nabla \dg\|, \|\nabla \df\|$ and the Lipschitz continuous gradient of $\dg$ and $\df$, see \eqref{det-bounded} and \eqref{det-lip}, and in the third inequality we use the Lipschitz continuity of $\dg$ and $\df$ (implied by the boundedness of $\|\nabla \dg\|, \|\nabla \df\|$), and $\dL$ is defined as
\begin{align*}
\dL:=  \dC_{f_1} \dL_g+ \dC_g^2 \dL_{f_{11}} + \dC_g \dL_{f_{12}} +\dC_g \dL_{f_{21}} + \dL_{f_{22}}.
\end{align*}
The proof is complete.
\end{proof}

Then, follows the same reasoning, we will have following results when stochastic sampling is used,
\begin{lemma} \label{lemma-lip-stoc}
Suppose Assumption  \ref{ass_1} and \ref{ass_unbiased}  hold,  $f(\cdot)$ and $g(\cdot)$ are defined as \eqref{eq-relation-sample}, and the first input of $f(\bz;\bTheta; \xi)$ is  bounded away from zero as $\|\bz\|\ge C_z$, then the following holds:

(1) The stochastic gradients of $f$ and $g$ are bounded in expectation, that is, there exist positive constants $C_g, C_{f_1}, C_{f_2}$, such that the following relations hold
\begin{align*} 
\begin{split}
  \mathbb{E}_t\left[\|\nabla g(\bTheta;\phi)\|\right] &\leq C_g, \\
 \mathbb{E}_t\left[\|\nabla_1 f(\bz;\bTheta;\xi)\|\right] & \leq C_{f_1}, \\
\mathbb{E}_t\left[\|\nabla_2 f(\bz;\bTheta;\xi)\|\right] &\leq C_{f_2},  
\end{split}
\end{align*}
where  the constants are defined as: 
\begin{align*}
& {C}_{f_1}:=  C_{l_0}e^{C_{u_0}} / C_z^2, \quad {C}_{f_2}:=   C_{u_1}C_{l_0} e^{C_{u_0}} / C_z+  C_{l_1}e^{C_{u_0}}/C_z, \quad {C}_g:= C_{u_1}e^{C_{u_0}}.
\end{align*}

(2) Functions $\nabla f$ and $\nabla g$ are $L_f$- and $L_g$-smooth, that is, for any $\bTheta, \bTheta'\in\mathbb{R}^d$, and $\bz, \bz'$ satisfying $\|\bz\|\le C_{z}$ and $\|\bz'\|\le C_z$, we have: 
\begin{align}\label{eq.ass1}
\begin{split}
\|\nabla_1 f(\bz, \bTheta;\xi)-\nabla_1 f(\bz', \bTheta;\xi)\| & \leq L_{f_{11}}\|\bz-\bz'\|, \\
\|\nabla_1 f(\bz, \bTheta;\xi)-\nabla_1 f(\bz, \bTheta';\xi)\| & \leq L_{f_{12}}\|\bTheta-\bTheta'\|, \\
\|\nabla_2 f(\bz, \bTheta;\xi)-\nabla_2 f(\bz', \bTheta;\xi)\| & \leq L_{f_{21}}\|\bz-\bz'\|, \\
\|\nabla_2 f(\bz, \bTheta;\xi)-\nabla_2 f(\bz, \bTheta';\xi)\| & \leq L_{f_{22}}\|\bTheta-\bTheta'\|, \\
\|\nabla g(\bTheta;\phi)-\nabla g(\bTheta';\phi)\| & \leq L_g\|\bTheta-\bTheta'\|,
\end{split}
\end{align}
where  
\begin{align*}
& L_{f_{11}} :=2 C_{l_0}e^{C_{u_0}}/C_z^3,  \quad  L_{f_{12}}=L_{f_{21}}:= {\left(C_{u_1}C_{l_0}e^{C_{u_0}} +  C_{l_1}e^{C_{u_0}}\right)}/{C_z^2} , \\
&L_{f_{22}}:= \left(C_{u_1}^2 C_{\ell_0} + C_{u_2} C_{\ell_0} + 2C_{u_1} C_{\ell_1} + C_{\ell_2}\right)  e^{C_{u_0}} / C_z, \quad L_g:= C_{u_1}^2e^{C_{u_0}} +   C_{u_2} e^{C_{u_0}}.
\end{align*}
\end{lemma}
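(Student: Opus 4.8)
The plan is to follow the proof of Lemma \ref{lemma-lip} almost verbatim, the only structural difference being that in the stochastic setting the first argument $\bz$ of $f(\bz;\bTheta;\xi)$ is a free variable rather than the specific value $\dg(\bTheta)$, so we replace the natural lower bound $\dg(\cdot)\ge e^{-C_{u_0}}$ used there by the hypothesis $\|\bz\|\ge C_z$, and we replace the average over $\mathcal{D}=\mathcal M_t\cup\mathcal D_t$ by the average over the mini-batch $\xi$ (resp. $\phi$). Since every bound we derive will be uniform over all realizations of the mini-batches, the expectation bounds in part (1) follow immediately once the pointwise (per-sample) bounds are in hand.

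First I would differentiate the expressions in \eqref{eq-relation-sample} term by term. This gives $\nabla g(\bTheta;\phi)=\frac{1}{|\phi|}\sum_{i\in\phi}e^{u(\bTheta;\bx^{(i)},\bp^{(i)})}\nabla u(\bTheta;\bx^{(i)},\bp^{(i)})$, while $\nabla_1 f(\bz;\bTheta;\xi)$ equals the numerator of $f$ divided by $-|\xi|\bz^2$, and $\nabla_2 f(\bz;\bTheta;\xi)$ is a sum of two product-rule terms divided by $|\xi|\bz$. Invoking Assumption \ref{ass_1} ($|u|\le C_{u_0}$, $\|\nabla u\|\le C_{u_1}$, $|\ell|\le C_{\ell_0}$, $\|\nabla\ell\|\le C_{\ell_1}$) together with $\|\bz\|\ge C_z$ bounds each of these pointwise; averaging over the mini-batch is a convex combination and hence preserves the bound, and then taking $\mathbb{E}_t[\cdot]$ produces exactly the constants $C_g$, $C_{f_1}$, $C_{f_2}$ claimed in part (1) of the lemma. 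Using Assumption \ref{ass_unbiased} is not actually needed for the bounds themselves, only for interpreting the expectations.

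Second, for part (2), I would compute the four second-derivative blocks $\nabla^2 g(\bTheta;\phi)$, $\nabla^2_{11}f$, $\nabla^2_{12}f=\nabla^2_{21}f$ (equality by symmetry of mixed partials), and $\nabla^2_{22}f$, which are the stochastic analogues of the displays appearing inside the proof of Lemma \ref{lemma-lip}. On the region $\{\bz:\|\bz\|\ge C_z\}$ the scalar factors $1/\bz$, $1/\bz^2$, $1/\bz^3$ are bounded by $1/C_z$, $1/C_z^2$, $1/C_z^3$; combining this with the value, gradient and Hessian bounds on $u$ and $\ell$ from Assumption \ref{ass_1} gives a uniform bound on each Hessian block for every fixed sample $\xi$ (resp. $\phi$). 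Since a uniform Hessian bound implies a Lipschitz gradient with the same constant, this yields the five estimates in \eqref{eq.ass1}. Matching the resulting expressions against the definitions of $L_{f_{11}},L_{f_{12}},L_{f_{21}},L_{f_{22}},L_g$ in the statement completes the proof; again no expectation is taken here, as these are deterministic bounds valid for each realization.

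The main obstacle is purely one of bookkeeping rather than of ideas: one has to keep straight (i) which power of $C_z$ attaches to each constant — the $\nabla_1$ block inherits $C_z^{-2}$ (first derivative) and $C_z^{-3}$ (second derivative), whereas the $\nabla_2$ block inherits $C_z^{-1}$ and $C_z^{-2}$ — and (ii) the exponent of $e^{C_{u_0}}$, which in every stochastic constant is one less than in the corresponding deterministic constant of Lemma \ref{lemma-lip}, precisely because substituting $\bz=\dg(\bTheta)$ in the deterministic case costs an extra factor $e^{C_{u_0}}$ per occurrence of $1/\bz$. Once these are tracked carefully the stated constants fall out directly.
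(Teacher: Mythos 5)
Your proposal is correct and follows essentially the same route as the paper, whose own proof of this lemma simply reruns the computations of Lemma \ref{lemma-lip} with the mini-batch sums over $\xi$ (resp.\ $\phi$) replacing the full-data sums and with the hypothesis $\|\bz\|\ge C_z$ replacing the substitution $\bz=\dg(\bTheta)$. One cosmetic slip in your closing bookkeeping remark: the exponent of $e^{C_{u_0}}$ drops by one \emph{per occurrence of $1/\bz$} (e.g.\ $\bar C_{f_1}=C_{\ell_0}e^{3C_{u_0}}$ versus $C_{f_1}=C_{\ell_0}e^{C_{u_0}}/C_z^2$), not uniformly by one, but the mechanism you state is the right one and the constants your argument produces are exactly those in the lemma.
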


\begin{proof}
The derivation of this result is similar to those presented in Lemma \ref{lemma-lip}, except the change of $i \in \mathcal{D}$ with $i \in \xi$ or $i \in \phi$ in derivations, and the usage of  $z=\by$ instead of $z=\dg (\cdot)$.
\end{proof}

\end{document}